\definecolor{gray}{rgb}{0.93,0.93,0.93}
\definecolor{light-gold}{rgb}{0.96,0.8,0}
\definecolor{light-red}{rgb}{1,0.4,0.4}
\definecolor{light-green}{rgb}{0.5,1,0.5}
\definecolor{light-blue}{rgb}{0.4,0.4,1}
\definecolor{gold}{rgb}{0.7,0.55,0}
\def\be{\begin{equation}}
\def\ee{\end{equation}}
\def\bm{\begin{multline}}
\def\bfig{\begin{figure}[htb]}
\def\efig{\end{figure}}
\newcommand{\dd}{{\rm d}}
\newcommand{\e}[1]{\,{\rm e}^{#1}\,}
\newcommand{\ii}{{\rm i}}
\def\Tr{{\operatorname{Tr\,}}}
\numberwithin{equation}{section}
\newtheorem{theorem}{Theorem}[section]
\newtheorem{proposition}[theorem]{Proposition}
\newtheorem{lemma}[theorem]{Lemma}
\newtheorem{thm}[theorem]{Theorem}
\newtheorem{pro}[theorem]{Proposition}
\newtheorem{lem}[theorem]{Lemma}
\newtheorem{rem}[theorem]{Remark}
\newtheorem{defi}[theorem]{Definition}
\newtheorem{corollary}[theorem]{Corollary}
\newcommand{\eps}{{\varepsilon}}
\newcommand{\bbC}{{\mathbb C}}\newcommand{\C}{{\mathbb C}}
\newcommand{\bbN}{{\mathbb N}}
\newcommand{\bbR}{{\mathbb R}}\newcommand{\R}{{\mathbb R }}
\newcommand{\frC}{{\mathfrak C}}
\newcommand{\frL}{{\mathfrak L}}
\newcommand{\cH}{\mathcal{H}}
\newcommand{\caH}{\mathcal{H}}
\newcommand{\caL}{\mathcal{L}}
\newcommand{\bss}{{\boldsymbol s}}
\def\bbone{{\mathchoice {\rm 1\mskip-4mu l} {\rm 1\mskip-4mu l} {\rm 1\mskip-4.5mu l} {\rm 1\mskip-5mu l}}}
\newcommand{\ketbra}[1]{\vert #1\rangle\langle #1\vert}
\newcommand{\ket}[1]{\vert #1 \rangle}
\newcommand{\eq}[1]{(\ref{#1})}
  \def\tagform@#1{\maketag@@@{\scriptsize{(#1)}\@@italiccorr}}
\renewcommand{\eqref}[1]{(\ref{#1})}
\newcommand{\G}{\Gamma}
\newcommand{\ol}{\overline}
\newcommand{\oo}{\infty}
\newcommand{\om}{\omega}
\newcommand{\deq}{\mathrel{\mathop:}=}
\newcommand{\up}{\uparrow}
\newcommand{\dn}{\downarrow}
\newcommand{\g}{\gamma}
\newcommand{\se}{\subseteq}
\newcommand{\fX}{\mathfrak{X}}
\newcommand{\fA}{\mathfrak{A}}
\newcommand{\fC}{\mathfrak{C}}
\newcommand{\s}{\sigma}
\renewcommand{\S}{\Sigma}
\newcommand{\cT}{\mathcal{T}}
\newcommand{\one}{\hbox{\rm 1\kern-.27em I}}
\newcommand{\cW}{\mathcal{W}}
\newcommand{\cL}{\mathcal{L}}
\newcommand{\N}{{\mathbb N}}
\newcommand{\cO}{\mathcal{O}}
\newcommand\xleftrightarrow[2][]{%
  \ext@arrow 9999{\longleftrightarrowfill@}{#1}{#2}}
\newcommand\longleftrightarrowfill@{%
  \arrowfill@\leftarrow\relbar\rightarrow}
\newcommand{\cross}{\mathchoice
{\vcenter{\hbox{\includegraphics[width=.8em]{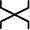}}}}
{\vcenter{\hbox{\includegraphics[width=.8em]{cross.pdf}}}}
{\vcenter{\hbox{\includegraphics[width=.6em]{cross.pdf}}}} 
{\vcenter{\hbox{\includegraphics[width=.5em]{cross.pdf}}}} 
}
\newcommand{\dbar}{\mathchoice
{\vcenter{\hbox{\includegraphics[width=.8em]{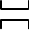}}}}
{\vcenter{\hbox{\includegraphics[width=.8em]{dbar.pdf}}}}
{\vcenter{\hbox{\includegraphics[width=.6em]{dbar.pdf}}}} 
{\vcenter{\hbox{\includegraphics[width=.5em]{dbar.pdf}}}} 
}
\newcommand{\dbartop}{\mathchoice
{\vcenter{\hbox{\includegraphics[width=.8em]{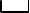}}}}
{\vcenter{\hbox{\includegraphics[width=.8em]{dbar-top.pdf}}}}
{\vcenter{\hbox{\includegraphics[width=.6em]{dbar-top.pdf}}}} 
{\vcenter{\hbox{\includegraphics[width=.5em]{dbar-top.pdf}}}} 
}
\newcommand{\dbarbot}{\mathchoice
{\vcenter{\hbox{\includegraphics[width=.8em]{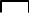}}}}
{\vcenter{\hbox{\includegraphics[width=.8em]{dbar-bot.pdf}}}}
{\vcenter{\hbox{\includegraphics[width=.6em]{dbar-bot.pdf}}}} 
{\vcenter{\hbox{\includegraphics[width=.5em]{dbar-bot.pdf}}}} 
}
\begin{document}


\title{Dimerization in quantum spin chains with $O(n)$ symmetry}

\author[J.E. Bj\"ornberg]{Jakob E. Bj\"ornberg}
\address{Department of Mathematics,
Chalmers University of Technology and the University of Gothenburg,
Sweden}
\email{jakob.bjornberg@gu.se}

\author[P. M\"uhlbacher]{Peter M\"uhlbacher}
\address{Department of Mathematics, University of Warwick,
Coventry, CV4 7AL, United Kingdom}
\email{peter@muehlbacher.me}

\author[B. Nachtergaele]{Bruno Nachtergaele}
\address{Department of Mathematics and Center for Quantum Mathematics and Physics\\
University of California, Davis\\
Davis, CA 95616, USA}
\email{bxn@math.ucdavis.edu}

\author[D. Ueltschi]{Daniel Ueltschi}
\address{Department of Mathematics, University of Warwick,
Coventry, CV4 7AL, United Kingdom}
\email{daniel@ueltschi.org}

\subjclass{82B10, 82B20, 82B26}


\begin{abstract}
We consider quantum spins with $S\geq1$, and two-body interactions with $O(2S+1)$ symmetry. We discuss the ground state phase diagram of the one-dimensional system. We give a rigorous proof of dimerization for an open region of the phase diagram, for $S$ sufficiently large. We also prove the existence of a gap for excitations.
\end{abstract}

\thanks{\copyright{} 2021 by the authors. This paper may be reproduced, in its
entirety, for non-commercial purposes.}

\maketitle

\tableofcontents

\section{Introduction}

Over the course of almost a century of studying quantum spin chains, physicists and mathematicians have uncovered a wide variety of interesting 
physical phenomena and in the process invented an impressive arsenal of new mathematical techniques and structures. Nevertheless, our understanding
of these simplest of quantum many-body systems is still far from complete. For many models of interest we have only partial information about the ground state phase diagram, the nature of the phase transitions, and the spectrum of excitations. We consider here a family of spin systems with two-body interactions, where interactions are translation invariant and $O(2S+1)$ invariant. We investigate the ground state phase diagram, looking for ground states that possess less symmetry than the interactions. Our main result is a rigorous proof of dimerization (where translation invariance is broken) in a region of the phase diagram with $S$ large enough (Theorem \ref{thm main}). We also prove exponential clustering (Theorem \ref{thm exp decay}) and the existence of a gap (Theorem \ref{thm gap}).

The family of models is introduced in Section \ref{sec model}; the phase diagram for general $S\geq1$ is described in Section \ref{sec phd general}; the case $S=1$ has received a lot of attention and we discuss it explicitly in Section \ref{sec phd 3}; our result about dimerization is stated in Section \ref{sec result}.

The $O(n)$ models have a graphical representation which we describe in Section \ref{sec loops}. We use it to define a ``contour model" in Section \ref{sec contours} where contours are shown to have small weights. This allows to use the method of cluster expansion and prove dimerization in Section \ref{sec dimerisation}.

\subsection{A family of quantum spin chains with $O(n)$-invariant interactions}
\label{sec model}

We consider a family of quantum spin chains consisting of $2\ell$ spins of magnitude $S$ defined by a nearest-neighbor Hamiltonian $H_\ell$ acting
on the Hilbert space $\cH_\ell= (\C^{n})^{\otimes 2\ell}$, with $n=2S+1\geq 2$, of the form
\be\label{ham}
H_\ell = \sum_{x=-\ell +1}^{\ell-1} h_{x,x+1},
\ee
where  $h_{x,x+1}$ denotes a copy of $h=h^*\in M_{n}(\C)\otimes M_{n}(\C)$ acting on the nearest neighbor pair at sites $x$ and $x+1$.

We are interested in the family of interactions 
\be\label{hab}
h= uT + vQ, \quad u,v\in\R,
\ee
where $T$ is the transposition operator defined by $T(\phi\otimes \varphi) = \varphi\otimes \phi$, for $\phi,\varphi\in \C^{n}$, and $Q$ is the orthogonal projection 
onto the one-dimensional subspace of  $\C^{n}\otimes  \C^{n}$ spanned by a vector of the form
\be\label{psi}
\psi = \frac{1}{\sqrt{n}} \sum_{\alpha = 1}^n e_\alpha\otimes e_\alpha,
\ee
for some orthornormal basis $\{ e_\alpha| \alpha = 1, \ldots ,n\}$ of $\C^n$.
%

The spectrum of $h$ is easy to find. $T$ has the eigenvalues $1$ and $-1$, corresponding to the symmetric and antisymmetric subspaces
of $\C^n\otimes  \C^n$, whose dimensions are $n(n+1)/2$ and $n(n-1)/2$, respectively. Since $\psi$ is symmetric, the eigenvalues of $h$ are
$u+v, u, -u$.

Let $R$ 
be a linear transformation represented 
by an orthogonal matrix in the basis $\{e_\alpha\}$,
meaning $\langle e_\alpha, RR^{\rm T} e_\beta\rangle = \delta_{\alpha\beta}$. This amounts to defining a specific representation
of $O(n)$ on the system under consideration. It is then straightforward to check $(R\otimes R) \psi  = \psi$. It follows that $R\otimes R$
commutes with $Q=\ketbra{\psi}$. Since $T$ also commutes with $R\otimes R$, the Hamiltonians with interaction $h$ given in \eq{hab}
have a local $O(n)$ symmetry. This family of models is in fact, up to a trivial additive constant, the most general translation-invariant 
nearest neighbor Hamiltonian for spins of dimension $n$ and with a translation-invariant local $O(n)$ symmetry.


To make contact with previous results in the literature, it is useful to note a couple of equivalent forms of the spin chains we consider.
First, for integer values of $S$, that is odd dimensions $n$, consider the orthonormal basis $\{e_\alpha\}$, relabeled by $\alpha = -S,\ldots, S$, 
and related to the standard eigenbasis of the third spin matrix $S^{(3)}$, satisfying $S^{(3)} \ket{\alpha} = \alpha\ket{\alpha}$, as follows:
for $\alpha=0$ take $e_0 = \ii^S \ket{0}$, and for $\alpha>0$ define
\be\label{takagi_basis}
e_\alpha = \frac{\ii^{S-\alpha}}{\sqrt2} \bigl( \ket{\alpha} + \ket{-\alpha} \bigr), \quad e_{-\alpha} = \frac{\ii^{S-\alpha+1}}{\sqrt2} \bigl( \ket{\alpha} - \ket{-\alpha} \bigr).
\ee
Then, we have
\be
\label{phi}
\psi = \phi:=
\frac{1}{\sqrt{n}} \sum_{\alpha = -S}^S  (-1)^{S-\alpha} |\alpha,-\alpha\rangle,
\ee
which is the $SU(2)$ singlet vector in the standard spin basis. The transposition operator $T$ is of course not affected by any translation-invariant local basis change. Therefore, for odd $n$, and with a simple change of basis, the family of interactions
\eq{hab} is seen to be equivalent to
\be\label{tildehab}
\tilde{h}= uT + vP, \quad u,v\in\R,
\ee
where $P$ is the orthogonal projection onto the singlet state
$\phi$.

The case of even $n$ is different. Interactions $h$ and $\tilde h$ are not unitarily equivalent. But the model with interaction $\tilde h$ is nonetheless interesting and we discuss it in Appendix \ref{app P}. We also prove dimerization and a gap in this case, see Theorem \ref{thm P} and Theorem \ref{thm gap}.

For $n\geq 2$, $u=0$, and $v=-1$, this is the much studied $-P^{(0)}$ spin chain
\cite{barber:1989,affleck:1990,klumper:1990,AN,nepomechie:2016,NU,ADW}.

\subsection{Ground state phase diagram for general $n\geq3$}
\label{sec phd general}

We start with the phase diagram for arbitrary $n\geq3$ and discuss the special case $n=3$ in Section \ref{sec phd 3}.
The ground state phase diagram of the spin chain with nearest-neighbor interactions $h_{x,x+1} = uT_{x,x+1} + vQ_{x,x+1}$ is depicted in Fig.\ \ref{fig phd general}. It can be broadly divided into four domains.

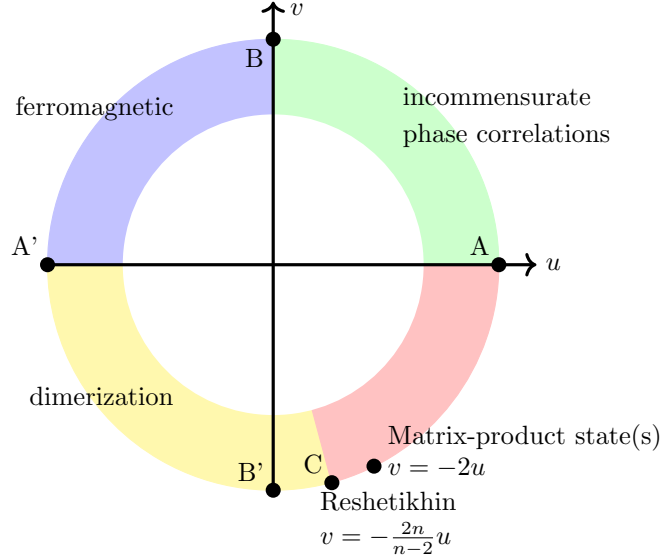
\begin{figure}
\begin{tikzpicture}
\filldraw[fill=light-blue!40,draw=light-blue!40] (0,0) -- (0,3) arc (90:180:3) -- cycle;
\filldraw[fill=yellow!40,draw=yellow!40] (0,0) -- (-3,0) arc (180:285:3) -- cycle;
\filldraw[fill=light-green!40,draw=light-green!40] (0,0) -- (3,0) arc (0:90:3) -- cycle;
\filldraw[fill=light-red!40,draw=light-red!40] (0,0) -- (3,0) arc (0:-75:3) -- cycle;

\fill[fill=white] (0,0) circle (2cm);

\draw[very thick,->] (0,-3) -- (0,3.5);
\draw[very thick,->] (-3,0) -- (3.5,0);

\draw[right] (3.5,0) node(1){$u$};
\draw[above right] (0.1,3.2) node(1){$v$};
\fill (3,0) circle (0.1cm);
\draw[above left]  (3,0) node(){\rm A};  
\fill (-3,0) circle (0.1cm);
\draw[above left]  (-3,0) node(){\rm A'};  
\fill (0,3) circle (0.1cm);
\draw[below left]  (0,3) node(){\rm B};  
\fill (0,-3) circle (0.1cm);
\draw[above left]  (0,-3) node(){\rm B'};  
\fill (0.78,-2.9) circle (0.1cm);
\draw[above left] (0.78,-2.9) node(){\rm C};  
\draw[below right] (0.5,-2.9) node(){Reshetikhin};  
\draw[below right] (0.5,-3.3) node(){$v = -\frac{2n}{n-2} u$};  
\draw[above left] (-1.2,1.8) node(){ferromagnetic};  
\draw[below left] (-1.2,-1.5) node(){dimerization};  
\draw[above right] (1.6,2) node(){incommensurate};  
\draw[below right] (1.6,2) node(){phase correlations};  
\fill (1.34,-2.68) circle (0.1cm);
\draw[above right] (1.4,-2.6) node(){Matrix-product state(s)};  
\draw[above right] (1.4,-2.95) node(){$v= -2u$};  

\end{tikzpicture}
\caption{Ground state phase diagram for the chain with nearest-neighbor interactions $uT + vQ$ for $n\geq3$. Our main result, Theorem \ref{thm main}, is a proof of dimerization in an open region around the point B'.}
\label{fig phd general}
\end{figure}

The domain formed by the quadrant $u\leq0, v\geq0$ (blue region in Fig.\ \ref{fig phd general}) is ferromagnetic. There are many ground states and they minimize $h_{x,x+1}$ for all $x$; that is, they are frustration-free. The ground state energy per bond is equal to $u$. Indeed, let $\varphi = \sum_\alpha c_\alpha e_\alpha$ with $\sum_\alpha |c_\alpha|^2 = 1$. It is clear that $|\varphi \otimes \varphi \rangle$ is eigenstate of $T$ with eigenvalue 1; further, we have
\be 
\langle \varphi \otimes \varphi | Q | \varphi \otimes \varphi \rangle 
= \frac1n \Bigl| \sum_\alpha c_\alpha^2 \Bigr|^2.
\label{condition for zero ev}
\ee
The latter is zero when $\sum_\alpha c_\alpha^2 = 0$. Since $Q$ is a projector, such a state is eigenstate with eigenvalue 0. Notice that the state $R\varphi$ also satisfies this condition, for all orthogonal transformation $R$.

The product state $\otimes_{x=-\ell+1}^\ell \varphi$ is then a ground state of $h_{x,x+1}$ with eigenvalue $u$, for all $x$. In addition to these product states, we can obviously take linear combinations.

The next domain is the arc-circle between $(u,v) = (-1,0)$ and the
``Reshetikhin point" with $v = -\frac{2n}{n-2}u$ (yellow region in
Fig.\ \ref{fig phd general}), 
which features dimerization. In order to see that 
dimerization is plausible as soon as $v<0$, let
\be
\varphi_{x,x+1} = \sqrt{1-\eps^2} \, |S,S\rangle + \frac\eps{\sqrt{n-1}} \sum_{\alpha=-S}^{S-1} |\alpha,\alpha\rangle.
\label{partially dimerised}
\ee
and consider the (partially) dimerized state
$\varphi_{-\ell+1,-\ell+2} \otimes \varphi_{-\ell+3,-\ell+4} \otimes \dots$.
For $\eps=0$, this is a product state, but for $\eps\neq0$ it is not. 
Roughly half the edges, namely the edges $x,x+1$  with 
$x=-\ell+1,-\ell+3,\dotsc$, 
 are dimerized and their energy is
\be
\langle \varphi_{x,x+1} | uT_{x,x+1} + vQ_{x,x+1} | \varphi_{x,x+1} \rangle = u + \frac vn (1 + 2\sqrt{n-1} \eps) + O(\eps^2).
\ee
The non-dimerized edges contribute
\be
\langle \varphi_{x-1,x} \otimes \varphi_{x+1,x+2} | uT_{x,x+1} + vQ_{x,x+1} | \varphi_{x-1,x} \otimes \varphi_{x+1,x+2} \rangle = u+ \frac vn + O(\eps^2).
\ee
The average energy per bond of the state $\varphi_{x,x+1}$ is then 
$u+ \frac vn + v \frac{\sqrt{n-1}}n \eps$, up to $O(\eps^2)$
corrections. When $v<0$ the optimal product states have energy $u +
\frac vn$ (using \eqref{condition for zero ev} with $\sum_\alpha
c^2_\alpha = 1$), so the partially dimerized state \eqref{partially
  dimerised} has lower energy when $\eps$ is positive and small.  

Our main result is that dimerization does occur in an open domain around the point B', provided $n$ is sufficiently large, see Theorem \ref{thm main}. This extends the results of \cite{NU,ADW}, valid at the point B'.

Then comes the domain formed by the arc-circle between the Reshetikhin point $v = -\frac{2n}{n-2}u$ and $(u,v)=(1,0)$ (red region in Fig.\ \ref{fig phd general}). For $n$ odd a unique translation-invariant ground state is expected. 

This domain contains several interesting special cases. The direction
$(u=1,v=0)$ is the the $SU(n)$ generalization of the spin-1/2
Bethe-ansatz solvable Heisenberg model studied by Sutherland and
others \cite{Sutherland}. The direction $v = -\frac{2n}{n-2} u$ was solved by
Reshetikhin \cite{reshetikhin:1983} (this generalizes the
Takhtajan--Babujian model for $n=3$). These models are gapless. The
direction $v=-2u$ is a frustration free point and the ground states
are given matrix-product states. For odd $n$, these are generalizations
of the AKLT model. The ground state for the
infinite chain is unique and is in the Haldane phase. For even $n$,
there are two matrix-product ground states that break the translation
invariance of the chain down to period 2 \cite{Tu}.

The final domain is the quadrant $u,v>0$. The ground states are
expected to have slow decaying correlations with incommensurate phase
correlations. That is, spin-spin correlations between sites 0 and $x$
are expected to be of the form $|x|^{-r} \cos( \omega |x|)$ for $|x|$
large, and where $r,\omega$ depend on the parameters $u,v$ \cite{Itoi}. 

It is perhaps worth mentioning that the phase diagram for spatial
dimensions other than 1 is quite different. Dimerization is not
expected. Instead, the system displays various forms of magnetic
long-range orders (ferromagnetic,  spin nematic, N\'eel, \dots). See
\cite{Uel} for results about magnetic ordering for all $n\geq2$ and
for parameters that correspond to the dimerized phase here.

\subsection{The $S=1$ model ($n=3$)}
\label{sec phd 3}

For $n=3$, the family of models is equivalent to the familiar spin-1 chain with bilinear and biquadratic interactions. The latter is most often parametrized by an angle $\phi$ as follows:
\be\label{bilin-biquad}
 \cos \phi \,\vec{S}_x\cdot \vec{S}_{x+1} + \sin \phi\, (\vec{S}_x\cdot \vec{S}_{x+1})^2 = 3 (\sin\phi - \cos\phi)P + \cos\phi\, T +\sin\phi\, I.
\ee
We can apply the change of basis that is the inverse of Eq.\ \eqref{takagi_basis}, namely
\be
|0\rangle = -\ii \, e_0, \quad |1\rangle = \tfrac1{\sqrt2} (e_1 - \ii \, e_{-1}), \quad |-1\rangle = \tfrac1{\sqrt2} (e_1 + \ii \, e_{-1}).
\ee
Then the interaction is given by \eqref{bilin-biquad} but with the operator $Q$ instead of $P$.

\begin{figure}
\begin{tikzpicture}
\filldraw[fill=light-blue!40,draw=light-blue!40] (0,0) -- (0,3) arc (90:225:3) -- cycle;
\filldraw[fill=yellow!40,draw=yellow!40] (0,0) -- (0,-3) arc (270:225:3) -- cycle;
\filldraw[fill=yellow!40,draw=yellow!40] (0,0) -- (0,-3) arc (270:315:3) -- cycle;
\filldraw[fill=light-green!40,draw=light-green!40] (0,0) -- (0,3) arc (90:45:3) -- cycle;
\filldraw[fill=light-red!40,draw=light-red!40] (0,0) -- (3,0) arc (0:45:3) -- cycle;
\filldraw[fill=light-red!40,draw=light-red!40] (0,0) -- (3,0) arc (0:-45:3) -- cycle;

\fill[fill=white] (0,0) circle (2cm);

\draw[very thick,->] (0,-3) -- (0,3.5);
\draw[very thick,->] (-3,0) -- (3.5,0);

\draw[right] (3.5,0.02) node(1){$\cos\phi$};
\draw[above] (0,3.4) node(1){$\sin\phi$};
\fill (2.12,2.12) circle (0.1cm);
\draw[left] (2.1,2.14) node(){\rm A};  
\draw[right] (2.2,2.14) node(){\rm Sutherland};  
\fill (-2.12,-2.12) circle (0.1cm);
\draw[below left]  (-2.12,-2.12) node(){\rm A'};  
\fill (0,3) circle (0.1cm);
\draw[below left]  (0,3) node(){\rm B};  
\fill (0,-3) circle (0.1cm);
\draw[above left]  (0,-3) node(){\rm B'};  
\fill (2.12,-2.12) circle (0.1cm);
\draw[above left] (2.12,-2.12) node(){\rm C};  
\draw[below right] (2.12,-2.12) node(){Takhtajan-Babujian};
\fill (2.85,0.95) circle (0.1cm);
\draw[right]  (2.9,0.95) node(){\rm AKLT ($\tan\phi = \frac13$)};  
\draw[above left] (-1.2,1) node(){ferromagnetic};  
\draw[right] (-1,-2.3) node(){dimerization};  
\draw[above right] (0.4,2.9) node(){incommensurate};  
\draw[below right] (0.4,2.9) node(){phase correlations};  

\end{tikzpicture}
\caption{Ground state phase diagram for the $S=1$ chain with nearest-neighbor interactions $\cos\phi \vec S_{x} \cdot \vec S_{x+1} + \sin\phi (\vec S_x \cdot \vec S_{x+1})^2$. The domains and the points are the same as those in Fig.\ \ref{fig phd general}.}
\label{fig phd 3}
\end{figure}
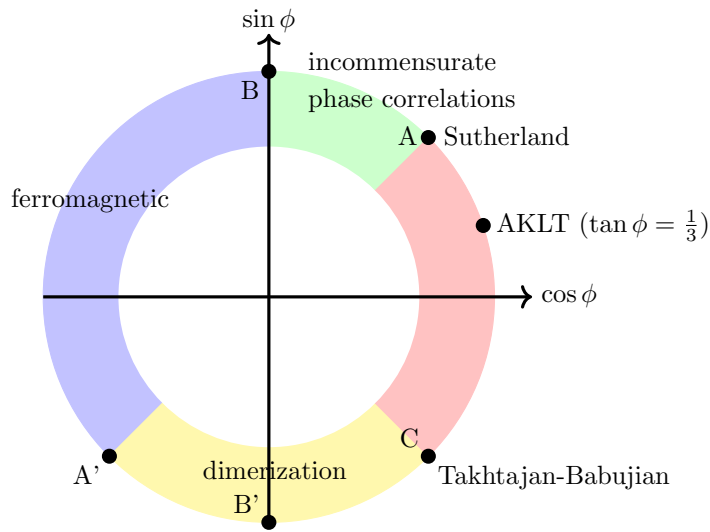

The ground state phase diagram with parameter $\phi$ is depicted in Fig.\ \ref{fig phd 3}. The domains and the points are the same as in Fig.\ \ref{fig phd general}. The ferromagnetic domain corresponds to $\phi \in (\frac\pi2, \frac{5\pi}4)$,
and the model is frustration-free in this range. Among the ground states, there is a family of product states that shows
that the $O(3)$ symmetry of the Hamiltonian is spontaneously broken. As a consequence, the Goldstone Theorem 
\cite{landau:1981} implies that there are gapless excitations above the ground state in this region.
The dimerization domain is $\phi \in (\frac{5\pi}4, \frac{7\pi}4)$. The next domain is $\phi \in (-\frac\pi4, \frac\pi4)$ with unique, translation-invariant ground states. Finally, the domain $\phi \in (\frac\pi4, \frac\pi2)$ is expected to display states with slow decay of correlations, with incommensurate phase.

There are several points where exact and/or rigorous information is
available: (i) $\phi\in [0,\pi/2]$ with $\tan \phi = 1/3$, it is the
spin-1 AKLT chain \cite{AKLT} with interaction $\tilde{h}$ given by
the orthogonal projection on the spin-2 states. In the thermodynamic
limit, it has a unique ground state of Matrix Product form with a
non-vanishing spectral gap and exact exponential decay of
correlations; (ii) the two points with $\tan \phi =1$, A and A' in
Fig. \ref{fig phd 3}, have $SU(3)$ symmetry and are often referred to
as the Sutherland model \cite{Sutherland}. An exact solution for the ground
state at $\phi = -3\pi/4$ is gapless and highly degenerate, while for
$\phi=\pi/4$ is believed to be a unique critical state with gapless
excitations; (iii) the point $\phi = -\pi/4$ is the Bethe-ansatz
solvable Takhtajan--Babujian model \cite{Takhtajan,Babujian}, which is also gapless;
(iv) the point $\phi=-\pi/2$, is the $-P^{(0)}$ spin-1 chain, already
mentioned above. Aizenman, Duminil-Copin, and Warzel proved that it
has two dimerized (2-periodic) ground states with exponential decay of
correlations \cite{ADW}; all evidence indicates that these states are
gapped. 

Let us briefly comment on higher spatial dimensions. Dimerization is
not expected. Various rigorous results about magnetic long-range order have
been established: for $\phi = 0$ \cite{DLS}; for $\phi \gtrsim
\frac{5\pi}4$ \cite{TTI,Uel}; and for $\phi \lesssim 0$
\cite{Lees}. Recently, the model on the complete graph has been
studied by Ryan using methods based on the Brauer algebra \cite{Ryan}, 
which plays a role in the representation theory of the orthogonal groups analogous
to that of the symmetric group for the general linear groups.

\subsection{Our result about dimerization}  
\label{sec result}

Let us introduce the operators $L^{\alpha,\alpha'}$, $1 \leq \alpha <
\alpha' \leq n$, that are generators of the Lie algebra $\mathfrak{o}(n)$: 
\be
L^{\alpha,\alpha'} = |\alpha\rangle \langle \alpha'| - |\alpha'\rangle \langle \alpha|.
\ee
And for $x \in \{-\ell+1,\dots,\ell\}$, let $L_x^{\alpha,\alpha'}$ be
the operator in $\caH_\ell$ that acts as $L^{\alpha,\alpha'}$ at the
site $x$, and as the identity elsewhere. 


\begin{theorem}
\label{thm main}
There exist constants $n_0, u_0, c >0$ (independent of $\ell$) such that for $n > n_0$ and 
$|u| < u_0$, we have that for all $1 \leq \alpha < \alpha' \leq n$,
\[
\begin{split}
&\lim_{\beta\to\infty} \Bigl[ \langle L_{0}^{\alpha,\alpha'} L_{1}^{\alpha,\alpha'}
\rangle_{\ell,\beta,u} - 
\langle L_{-1}^{\alpha,\alpha'} L_{0}^{\alpha,\alpha'} \rangle_{\ell,\beta,u} \Bigr] > c \qquad \text{for all $\ell$ odd;} \\
&\lim_{\beta\to\infty} \Bigl[ \langle L_{0}^{\alpha,\alpha'} L_{1}^{\alpha,\alpha'}
\rangle_{\ell,\beta,u} 
- \langle L_{-1}^{\alpha,\alpha'} L_{0}^{\alpha,\alpha'} \rangle_{\ell,\beta,u} \Bigr] < -c \qquad \text{for all $\ell$ even.}
\end{split}
\]
\end{theorem}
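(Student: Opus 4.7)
My plan is to follow the roadmap announced in the introduction: pass to a random loop representation, interpret deviations from a fully dimerized reference configuration as contours, show that contour weights are small when $n$ is large and $|u|$ is small, and conclude by a Pirogov--Sinai-type cluster expansion.

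First, a standard Lie--Trotter and Poisson-integral expansion of $\e{-\beta H_\ell}$ in the orthonormal basis $\{e_\alpha\}$ yields a representation in which each space-time edge carries independent Poisson processes of \emph{double bars} (from the $-Q$ term, at rate proportional to $|v|$) and \emph{crossings} (from the $\pm T$ term, at rate proportional to $|u|$). The partition function becomes $\EE_\omega\bigl[n^{L(\omega)}\bigr]$ up to an explicit prefactor, where $L(\omega)$ counts the loops obtained by concatenating worldlines through the decorations. Correlations of the $\mathfrak{o}(n)$ generators $L^{\alpha,\alpha'}$ recast as loop functionals: since $L^{\alpha,\alpha'}$ is antisymmetric and traceless, only configurations in which $x$ and $y$ lie on a common loop contribute to $\langle L^{\alpha,\alpha'}_x L^{\alpha,\alpha'}_y\rangle$, with magnitude of order $1/n$ and a sign fixed by the topology of the arc joining them.

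Next I would treat $u=0$ (with $v<0$ fixed) as a reference point. There the loop model is purely classical, and in the $\beta\to\infty$ limit the weight $n^{L(\omega)}$ selects the two fully dimerized ground states of \cite{NU,ADW}. In one of them $\langle L^{\alpha,\alpha'}_{-1} L^{\alpha,\alpha'}_0\rangle$ is bounded away from zero with a definite sign while $\langle L^{\alpha,\alpha'}_0 L^{\alpha,\alpha'}_1\rangle$ vanishes, and in the other the roles are reversed; the open boundary geometry forces the parity of $\ell$ to select which dimerization is realized, producing the sign alternation in the theorem. Following Section~\ref{sec contours}, a contour is then a maximal connected space-time region of defects relative to a chosen reference dimerization---either a crossing, or a location where the double-bar pattern disagrees with the reference. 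Each defect typically lowers $L(\omega)$ by at least one, contributing a factor $1/n$, and each crossing brings an additional Poisson factor proportional to $|u|$, so the weight of a contour $\gamma$ is bounded by $\bigl(C\max(|u|, 1/n)\bigr)^{|\gamma|}$. The essentially one-dimensional geometry of space-time gives a Peierls-type bound on the number of contours of given size, and the standard cluster expansion converges for $n>n_0$ and $|u|<u_0$, producing an estimate $\bigl|\langle\cdot\rangle_{\ell,\beta,u} - \langle\cdot\rangle_{\mathrm{pure}}\bigr| \leq C\max(|u|,1/n)$ uniformly in $\ell,\beta$. Combining this with the order-one gap between the two correlations in the pure phase yields the theorem.

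The main obstacle is the ``factor $1/n$ per defect'' bound. A single crossing can cut one loop into two (increasing $L$ by one, which costs a factor $n$) or merge two loops into one (gaining a factor $1/n$), and similarly for misplaced double bars, so a naive per-defect bound provides no gain in $n$. The contour scheme must therefore be designed so that local cancellations are resummed inside each contour and the net effect over the whole contour is a systematic loss of at least $|\gamma|$ loops relative to the reference. Establishing this combinatorial ``contour rigidity'' is the heart of the proof, and it is precisely where the hypothesis $n$ large is essential.
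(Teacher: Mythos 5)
Your roadmap coincides with the paper's actual strategy (loop representation, contours as excitations of the fully dimerized background, cluster expansion), and you have correctly located the crux. But the proposal stops exactly at that crux, so there is a genuine gap: the ``contour rigidity'' bound is asserted as necessary and then left unproved. The paper's resolution is concrete and worth knowing. Contours are defined as maximal connected sets of \emph{long} loops (loops visiting three or more vertices, or winding), and the exponent decomposes as $-H(\omega)=\mathcal L(\omega)-\#\omega_{\dbar}=\sum_l\bigl(1-\tfrac12\mathcal T(l)\bigr)$, where $\mathcal T(l)$ is the number of U-turns of the loop $l$ (Lemma \ref{lem:Hh}). The key geometric fact is that a long, contractible loop traversing only double-bars must make at least $6$ turns, which yields $h(\gamma)\leq-\tfrac13\#\gamma$ for crossless non-winding contours (Lemma \ref{lem:wBnds}); crosses are then handled by deleting them, controlling the number of short loops thereby created, and absorbing the loss into the factor $|u|^{\#\gamma_{\cross}}$, giving the combined bound $n^{h(\gamma)}|u|^{\#\gamma_{\cross}}\leq\min(n,|u|^{-\kappa/2})^{-(1/3-\kappa)\#\gamma}$ of Lemma \ref{lem:nice new lemma}. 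Winding contours require a separate argument (the spanning-segment indicator and $\beta\to\infty$), which your one-dimensional Peierls count does not cover.

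A second, less visible gap is that the cluster expansion cannot be run directly over admissible families of contours: the condition that an immediate descendant's type match that of the domain containing it depends on the whole family, not on pairs, so the interaction $\delta(\gamma,\gamma')$ you need for Theorem \ref{thm clexp} is not available on admissible sets. The paper fixes this by shifting negative-type domains one unit to the right and proving a bijection between admissible and pairwise-compatible families (Lemmas \ref{lem:admissibility} and \ref{lem:pairwiseComp}); without some such device your ``standard cluster expansion'' step does not go through. Finally, note that for $u<0$ the measure is signed, so the comparison to a ``pure phase'' must be phrased entirely through the (sign-robust) cluster expansion rather than probabilistically --- the paper does this via Lemma \ref{lem useful bound}, bounding the measure of the event that the origin is on or inside some contour.
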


\begin{figure}[htb]\center
\includegraphics[scale=.75]{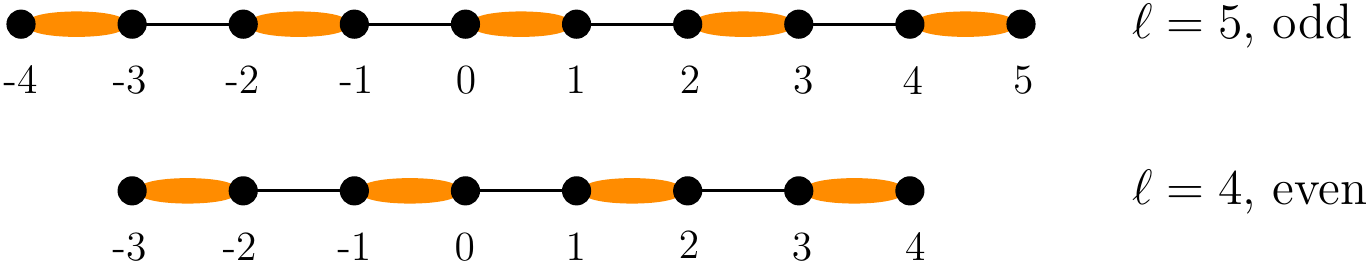}
\caption{Illustration for dimerization. Depending on whether $\ell$ is even or odd, the site $x=0$ is more entangled with its left or its right neighbor.}
\label{fig dimers}
\end{figure}

Theorem \ref{thm main} establishes the existence of at least two distinct infinite-volume ground states, close to the point B' of the phase diagram (see Fig.\ \ref{fig dimers}). Notice that the same result holds if we replace the operators $L_0^{\alpha,\alpha'} L_1^{\alpha,\alpha'}$ with spin operators $S_0^{(3)} S_1^{(3)}$, diagonal in the basis $\{e_\alpha\}$.

We expect that there are exactly two extremal ground states, precisely given by limits $\ell \to \infty$ along odd or even integers. We also expect that, if we take the chain to be $\{ -\ell, -\ell+1, \dots, \ell\}$, the corresponding infinite-volume ground state is equal to the average of the two extremal states.

The next result shows that the ground state retains the $O(n)$ symmetry of the system, that there is no magnetic long range order. This is indeed an attribute of dimerisation.

\begin{theorem}
\label{thm exp decay}
There exist constants $n_0, u_0, c_1, c_2, C >0$ (independent of $\ell$)
such that for $n > n_0$ and  $|u| < u_0$, we have
\[
\lim_{\beta\to\infty} \bigl| \langle L_{x}^{\alpha,\alpha'} \e{-t H_\ell} L_{y}^{\alpha,\alpha'} \e{t H_\ell} \rangle_{\ell,\beta,u} \bigr| \leq C \e{-c_1 |x-y| - c_2 |t|}
\]
for all $\ell\in \bbN$, all $x,y \in \{-\ell+1,\dots,\ell\}$, all $1 \leq \alpha < \alpha' \leq n$, and all $t \in \bbR$.
\end{theorem}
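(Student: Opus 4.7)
The plan is to derive Theorem~\ref{thm exp decay} from the same graphical representation of the $O(n)$-invariant chain that underlies Theorem~\ref{thm main}, combined with the contour and cluster-expansion estimates of Section~\ref{sec contours}. Recall that in the loop representation $\e{-\beta H_\ell}$ is written as an integral over Poisson configurations of ``crosses'' (for the transposition $T$) and ``bars'' (for the projector $Q$) on the space-time cylinder $\{-\ell+1,\dots,\ell\}\times[0,\beta]$; each configuration $\omega$ partitions the cylinder into loops and carries an $O(n)$ weight $n^{\#\mathrm{loops}(\omega)}$ together with the Poisson activities.

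First I would express the numerator of $\langle L_x^{\alpha,\alpha'}\e{-tH_\ell}L_y^{\alpha,\alpha'}\e{tH_\ell}\rangle_{\ell,\beta,u}$ in this representation, inserting the two operators at space-time points $(x,\tau_1)$ and $(y,\tau_2)$ whose separation is dictated by $t$ and the cyclic $\beta$-geometry. Since $L^{\alpha,\alpha'}=|\alpha\rangle\langle\alpha'|-|\alpha'\rangle\langle\alpha|$ is off-diagonal and antisymmetric in the basis $\{e_\gamma\}$, summing the loop spins against the $O(n)$-invariant measure gives the standard identity that the two-point function vanishes unless the two insertions lie on a common loop of $\omega$, and on that event reduces to a bounded signed function of the loop structure. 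Taking absolute values, this yields a bound of the form
\[
\bigl|\langle L_x^{\alpha,\alpha'}\e{-tH_\ell}L_y^{\alpha,\alpha'}\e{tH_\ell}\rangle_{\ell,\beta,u}\bigr|
\le C_0\,\PP_{\ell,\beta,u}\bigl[(x,\tau_1)\leftrightarrow(y,\tau_2)\bigr],
\]
where $\PP_{\ell,\beta,u}$ is the loop measure and the event on the right is that the two space-time points lie on the same loop.

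The second step is to bound this connection probability by $C\e{-c_1|x-y|-c_2|t|}$ via the contour model of Section~\ref{sec contours}. In the regime $n>n_0$, $|u|<u_0$ the cluster expansion converges, and typical configurations look like perturbations of the dimerized reference pattern with deviations encoded by contours of exponentially small weight. Because that reference pattern consists only of loops localized on individual dimer plaquettes, any connection between two distant space-time points must be mediated by a chain of contours whose combined spatial and temporal extent is at least of order $|x-y|+|t|$. A union bound over such chains, together with the cluster-expansion control on the ratio of partition functions with a prescribed chain of contours present, produces the required exponential decay; the limit $\beta\to\infty$ is then taken in the same manner as in the proof of Theorem~\ref{thm main}.

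The main obstacle I anticipate is the geometric bookkeeping: translating a long-range loop connection into a chain of contours of the appropriate total size, and especially teasing apart the spatial and temporal contributions to the size functional. Horizontal separations are penalized by contour boundaries, whereas vertical separations rely on the Poisson density of marks, so the two rates $c_1$ and $c_2$ should arise from distinct geometric arguments and will in general not coincide. A secondary technical point is ensuring that the bounds are uniform in $\ell$ as $\beta\to\infty$; this should be inherited from the corresponding uniformity in the proof of Theorem~\ref{thm main}.
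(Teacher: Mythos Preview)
Your strategy is correct and matches the paper's: reduce the two-point function to the loop-connection event $(x,s)\leftrightarrow(y,t)$ via Theorem~\ref{thm spin loop}(b), then control that event through the contour model and cluster expansion. Two points where the paper's execution differs from your sketch are worth noting. First, the paper does not argue via ``chains of contours'' and a union bound. Instead it observes that if $(x,s)$ and $(y,t)$ lie on the same loop with $|x-y|>1$, that loop is long and hence belongs to a \emph{single} contour $\gamma$, which then automatically satisfies $|\gamma|\ge 2|s-t|$ and $\#\gamma\ge 2|x-y|$; the exponential decay is then read off directly from the extra weight $\e{b(\gamma)}=\e{b_1|\gamma|+b_2\#\gamma}$ that was deliberately built into the Koteck\'y--Preiss criterion (Lemma~\ref{lem:KPcrit} and Corollary~\ref{cor bound contours}). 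This is precisely what separates the spatial rate (coming from $\#\gamma$) from the temporal rate (coming from $|\gamma|$), confirming your intuition that $c_1$ and $c_2$ have different origins, but without a separate Peierls-type summation. Second, since the loop measure is signed when $u<0$, your ``connection probability'' should be understood as $|\mu_{\ell,\beta,n,u}(\cdot)|$; the paper also bypasses the awkward case $|x-y|\le 1$ by appealing to the spectral gap (Theorem~\ref{thm gap}) rather than the loop argument.
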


Dimerization has been established in \cite{NU,ADW} at the point B' in
the phase diagrams of Figs \ref{fig phd general} and \ref{fig phd
  3}. The earlier result \cite{NU} uses the loop representation of
\cite{AN} combined with a Peierls argument; it holds for $S \geq 8$
(or $n\geq17$). The second result, due to Aizenman, Duminil-Copin and
Warzel, remarkably holds for all $S \geq 1$ ($n\geq3$), i.e., for all
values of $S$ (or $n$) where dimerization is expected. It uses the
loop representation and random cluster representation of \cite{AN} as
well as recent results for the two-dimensional random cluster model
\cite{DLM,RS}. 

Away from the point B' these methods do not apply. In this article we use
the loop representation of \cite{Uel}, which combines those of
\cite{Toth,AN}, in order to get a contour model; see Theorem \ref{thm spin loop}. The loop
representation involves a probability measure for $u,v\leq0$ only; it
involves a signed measure otherwise. This is described in Section \ref{sec loops}. For large $n$, typical configurations involve many loops, that are short loops located on all the dimerized edges. We define contours to be excitations with respect to this background. It is possible to obtain a contour model with piecewise compatible contours, that is suitable for a cluster expansion (Section \ref{sec contours}). This method is robust regarding signs and it allows to intrude in the region with positive parameter $u$. Proving that the expansion converges is difficult, since the cost of excitations is entropic rather than energetic. This is done in Section \ref{sec dimerisation}. This allows to establish dimerization in the loop model, see Theorem \ref{thm loop dimer}. It is equivalent to Theorem \ref{thm main}, thus proving our main result. Theorem \ref{thm exp decay} is proved in Subsection \ref{sec exp decay}.

The interaction that is responsible for dimerization is the operator $Q_{x,x+1}$ and we prove that dimerization
is stable under perturbations of this interaction by $u T_{x,x+1}$, with $|u|$ sufficiently small. It should be possible to prove stability under more general perturbations that are not necessarily invariant under the group $O(n)$. Since the
unperturbed model is {\em not} frustration-free, this does not follow from the recent result about the stability of 
gapped phases with discrete symmetry breaking in \cite{NSY2020}, which requires the frustration-free property. For translation-invariant perturbations by $O(n)$ invariant next-nearest neighbor or further terms, the methods of this
paper should generalize in a straightforward manner.

We now discuss the case of the spin chain with Hamiltonian
\be
\label{def H tilde}
\tilde H_\ell = \sum_{x=-\ell+1}^{\ell-1} 
\bigl( uT_{x,x+1} + vP_{x,x+1} \bigr),
\ee
where $P$ is projection onto the singlet state (recall \eqref{tildehab}).
We have a similar result about dimerization. Let
$S^{(i)}$, $i=1,2,3$, be the spin operators that are the generators of the $SU(2)$ symmetry group for $\tilde H_\ell$. In the basis $|\alpha\rangle$ where $P$ is the projection onto the vector $\phi$ in \eqref{phi}, we can choose $S^{(3)}$ such that $S^{(3)} \ket{\alpha} = \alpha\ket{\alpha}$. Let
\be
\langle S_x^{(i)} S_y^{(i)} \rangle_{\ell,\beta,u}^{\tilde{}} = 
\frac1{\Tr \e{-\beta \tilde H_\ell}} \Tr S_x^{(i)} S_y^{(i)} \e{-\beta \tilde H_\ell}.
\ee

\begin{theorem}
\label{thm P}
Let $v=-1$, and $i \in \{1,2,3\}$. There exist constants $n_0, u_0, c >0$
(independent of $\ell$) such that for $n > n_0$ and  
$|u| < u_0$, we have
\[
\begin{split}
&\lim_{\beta\to\infty} \Bigl[ \langle S_0^{(i)} S_1^{(i)}
\rangle_{\ell,\beta,u}^{\tilde{}} - 
\langle S_{-1}^{(i)} S_0^{(i)} \rangle_{\ell,\beta,u}^{\tilde{}} \Bigr] > c \qquad \text{for all $\ell$ odd;} \\
&\lim_{\beta\to\infty} \Bigl[ \langle S_0^{(i)} S_1^{(i)}
\rangle_{\ell,\beta,u}^{\tilde{}} 
- \langle S_{-1}^{(i)} S_0^{(i)} \rangle_{\ell,\beta,u}^{\tilde{}} \Bigr] < -c \qquad \text{for all $\ell$ even.}
\end{split}
\]
\end{theorem}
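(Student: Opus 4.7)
The plan is to separate the two parities of $n$. For odd $n$ the statement reduces to Theorem \ref{thm main} by a basis change, and for even $n$ it is obtained by running in parallel the loop-and-contour machinery of Sections \ref{sec loops}--\ref{sec dimerisation} in the $P$-based representation introduced in Appendix \ref{app P}.

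First, for $n = 2S+1$ odd, the Takagi basis change \eqref{takagi_basis} gives the identity $\psi = \phi$, hence $Q = P$ and $\tilde H_\ell = H_\ell$ as self-adjoint operators on $\caH_\ell$. In the $\{e_\alpha\}$ basis the spin generators lie in $i\mathfrak{o}(n)$: direct computation of $S^{(3)} e_\alpha$ and $S^{(3)} e_{-\alpha}$ from \eqref{takagi_basis} yields $S^{(3)} = -i\sum_{\alpha=1}^{S} \alpha\, L^{\hat{-\alpha},\hat\alpha}$, where $\hat\beta := \beta + S + 1$ relabels $\{-S,\dotsc,S\}$ as $\{1,\dotsc,n\}$; analogous expansions hold for $S^{(1)}$ and $S^{(2)}$, so that $S^{(i)} = i\sum_{\alpha<\alpha'} m^{(i)}_{\alpha,\alpha'} L^{\alpha,\alpha'}$ with real coefficients $m^{(i)}_{\alpha,\alpha'}$. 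The $O(n)$-invariance of the finite-volume Gibbs state then forces $\langle L^{\alpha,\alpha'}_x L^{\beta,\beta'}_y\rangle_{\ell,\beta,u} = f(x,y)\bigl(\delta_{\alpha\beta}\delta_{\alpha'\beta'} - \delta_{\alpha\beta'}\delta_{\alpha'\beta}\bigr)$ for a single function $f = f_{\ell,\beta,u}$, since this is the unique $O(n)$-invariant skew-symmetric tensor of the right shape. Therefore all cross-terms vanish and $\langle S^{(i)}_x S^{(i)}_y\rangle_{\ell,\beta,u}^{\tilde{}}$ collapses to a real scalar multiple of $f(x,y)$; the quantitative control on $f(0,1) - f(-1,0)$ provided by Theorem \ref{thm main} then delivers the stated inequalities after keeping track of the global sign produced by the factors of $i$.

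For $n$ even the two Hamiltonians are no longer unitarily equivalent, and the plan is to redo the proof of Theorem \ref{thm main} in the $P$-based loop representation of Appendix \ref{app P}. Each $P$-bond is drawn as a graphical element that enforces a singlet across the bond (rather than the label-gluing element associated with $Q$); the natural background to expand around is again the fully dimerized configuration, namely a short loop on every other edge, and contours are defined as the excitations from this pattern. Because the vertex patterns generated by $P$ have the same combinatorial structure as those from $Q$, the contour weights are bounded as in Section \ref{sec contours} and remain small for $n$ large and $|u|$ small, so the cluster expansion of Section \ref{sec dimerisation} converges. The $P$-analog of Theorem \ref{thm spin loop} proved in Appendix \ref{app P} then rewrites $\langle S^{(i)}_x S^{(i)}_y\rangle_{\ell,\beta,u}^{\tilde{}}$ in terms of loop connectivity events, and dimerization at the level of loops translates into the stated inequality for the spin correlator, uniformly in $\ell$ and in the limit $\beta\to\infty$.

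The main obstacle is concentrated in the even-$n$ case: the $P$-based loop measure is genuinely signed already at $u=0$, because the local expansion of the singlet $\phi$ carries the factors $(-1)^{S-\alpha}$ at each crossing. One must therefore verify that these signs are dominated by the magnitudes produced by the contour bounds rather than enhancing them. The required estimates are of the same flavour as those in Section \ref{sec dimerisation}---the entropy of small perturbations of the dimerized loop background must be controlled by its (signed) weight---but the bookkeeping is heavier. Once the signed cluster expansion converges, the remaining steps, namely exponential clustering to handle the $\beta\to\infty$ limit and the passage from loop connectivities back to spin-spin correlators, go through essentially as in the $Q$ case.
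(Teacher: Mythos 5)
Your odd-$n$ reduction is essentially the paper's: there $Q=P$ by \eqref{takagi_basis}, so $\tilde H_\ell=H_\ell$, and $O(n)$-invariance of the Gibbs state collapses $\langle S^{(i)}_xS^{(i)}_y\rangle^{\tilde{}}$ to a fixed real multiple of $\langle L^{\alpha,\alpha'}_xL^{\alpha,\alpha'}_y\rangle$, after which Theorem \ref{thm main} applies. The one thing you leave undone is the actual value of that multiple: from $S^{(3)}=-\ii\sum_{\alpha>0}\alpha L^{\hat{-\alpha},\hat{\alpha}}$ it equals $-\sum_{\alpha>0}\alpha^2$, and this sign is not a cosmetic detail --- it decides which of the two displayed inequalities goes with which parity of $\ell$, so ``keeping track of the global sign'' must actually be carried out.

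The even-$n$ part has a genuine gap. You propose to build a loop representation directly from $P$ (``a graphical element that enforces a singlet'') and then to ``verify that these signs are dominated by the magnitudes.'' This is not workable as stated: the matrix elements of $nP$ in the spin basis carry the factors $(-1)^{S-\alpha}(-1)^{S-\beta}$ at every double-bar, so the sum over spin labels along a loop does not automatically produce a label-independent weight $n$ per loop; worse, any residual signs would sit on the background short loops, i.e.\ on the dominant configuration itself, and no contour/cluster expansion can absorb a sign living on the reference state. The mechanism you are missing is Proposition \ref{prop H'}: conjugate by the staggered unitary $V\otimes\one\otimes V\otimes\cdots$ (this is where bipartiteness of the chain enters, which your proposal never invokes) to turn $uT+vP$ into $u\tilde T+vQ$. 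The double-bars are then exactly the original $Q$-elements, each loop carries a sign $s(l)\in\{\pm1\}$ generated only at crosses and well defined because every loop traverses an even number of crosses (Proposition \ref{prop H' loops}), and --- crucially --- every short loop traverses no cross and hence has $s(l)=+1$. The signs therefore live entirely on long loops, are folded into the contour weights $w(\gamma)$ of \eqref{eq:w-def}, and all estimates of Sections \ref{sec contours}--\ref{sec dimerisation}, which use only $|w(\gamma)|$, go through verbatim. Your ``main obstacle'' paragraph, which locates the difficulty in a signed measure already at $u=0$, is also misplaced: after the conjugation the $u=0$ measure is unsigned (consistent with the remark that the $P$- and $Q$-models are unitarily equivalent for all $n$ when $u=0$), and the signs enter only through the crosses, i.e.\ through $u\neq0$. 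Finally, to return from $H_\ell'$ to $\tilde H_\ell$ you must conjugate the observables $S^{(i)}_xS^{(i)}_y$ by the same staggered unitary, which contributes a factor $(-1)^{x-y}$ that again affects which inequality holds for which parity of $\ell$.
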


When $n$ is odd this theorem is equivalent to Theorem \ref{thm main}, as the correlations of spin operators are the same as correlations of operators $L_{x,y}^{\alpha,\alpha'}$, up to some factors. In the case where $n$ is even, this is no longer the case and the proof needs to be adapted; the modifications are described in Appendix \ref{app P}.

\subsection{Gap for excitations}

Let $E_0^{(\ell)} < E_1^{(\ell)} < \dots$ be the eigenvalues of $H_\ell$, and $\tilde E_0^{(\ell)} < \tilde E_1^{(\ell)} < \dots$ be the eigenvalues of $\tilde H_\ell$. The gaps are defined as
\be
\begin{split}
\Delta^{(\ell)} = E_1^{(\ell)} - E_0^{(\ell)}, \\
\tilde\Delta^{(\ell)} = \tilde E_1^{(\ell)} - \tilde E_0^{(\ell)}.
\end{split}
\ee
The gaps are obviously positive but the question is whether they are so uniformly in $\ell$.

\begin{theorem}
\label{thm gap}
There exist constants $n_0, u_0, c >0$ (independent of $\ell$) such that for $n > n_0$ and 
$|u| < u_0$, we have 
\begin{itemize}
\item[(a)] The multiplicities of $E_0^{(\ell)}$ and $\tilde E_0^{(\ell)}$ are equal to 1. (That is, ground states are unique.)
\item[(b)] $\Delta^{(\ell)} \geq c$ and $\tilde \Delta^{(\ell)} \geq c$ for all $\ell$.
\end{itemize}
\end{theorem}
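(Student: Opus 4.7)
The plan is to derive both assertions from a quantitative refinement of the cluster expansion of Section \ref{sec dimerisation}. Write
\[
Z_\ell(\beta)=\Tr\e{-\beta H_\ell}=\sum_k m_k^{(\ell)}\e{-\beta E_k^{(\ell)}},
\]
and analogously $\tilde Z_\ell(\beta)$ for $\tilde H_\ell$. Since the case of $\tilde H_\ell$ reduces to that of $H_\ell$ via Theorem \ref{thm P} and the adaptations of Appendix \ref{app P}, I focus on $H_\ell$.

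My first step is to refine the cluster expansion of Section \ref{sec dimerisation} by organising its polymers according to their imaginary-time extent, so as to obtain the multiplicative form
\be\label{gap-Zform}
\e{\beta E_0^{(\ell)}}Z_\ell(\beta)=1+r_\ell(\beta),\qquad |r_\ell(\beta)|\le C\e{-\beta c},
\ee
for constants $C,c>0$ independent of $\ell$ and valid for all $\beta$ sufficiently large. Polymers of short imaginary-time extent can be grouped and resummed into the linear-in-$\beta$ contribution $-\beta E_0^{(\ell)}$ (the ground-state energy is already identified in this way by the analysis of Section \ref{sec dimerisation}); the remaining polymers, which persist over essentially the full interval $[0,\beta]$, each carry an activity of order $\e{-\beta c}$, and their sum is controlled by the Koteck\'y--Preiss criterion already verified in Section \ref{sec dimerisation}.

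Given \eqref{gap-Zform}, the spectral decomposition reads
\[
m_0^{(\ell)}+\sum_{k\ge 1}m_k^{(\ell)}\e{-\beta\Delta_k^{(\ell)}}=1+r_\ell(\beta),\qquad\Delta_k^{(\ell)}:=E_k^{(\ell)}-E_0^{(\ell)}.
\]
Since $\Delta_k^{(\ell)}>0$ for $k\ge 1$, letting $\beta\to\infty$ forces $m_0^{(\ell)}=1$, which is assertion~(a). Substituting back gives $\sum_{k\ge 1}m_k^{(\ell)}\e{-\beta\Delta_k^{(\ell)}}\le C\e{-\beta c}$; if one had $\Delta^{(\ell_0)}=\Delta_1^{(\ell_0)}<c$ for some $\ell_0$, then letting $\beta\to\infty$ would force $m_1^{(\ell_0)}\to 0$, contradicting $m_1^{(\ell_0)}\ge 1$. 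Hence $\Delta^{(\ell)}\ge c$ uniformly in $\ell$, proving~(b).

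The main obstacle is the quantitative bound $|r_\ell(\beta)|\le C\e{-\beta c}$. The contour model of Section \ref{sec contours} is spatio-temporal, but the Peierls-type estimates used for Theorem \ref{thm main} focus on spatial geometry; I would need to verify that any non-trivial modification of the dimerized reference configuration that persists over the entire interval $[0,\beta]$ incurs an activity at most $\e{-\beta c}$, and to separate cleanly the short-time polymers (which are absorbed into $E_0^{(\ell)}$) from the long-time ones (which produce $r_\ell(\beta)$). The signed nature of the loop measure for $u\ne 0$ prevents any cancellation argument and forces the estimate to be carried out with absolute values of activities, exactly as in Section \ref{sec dimerisation}.
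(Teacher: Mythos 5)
Your overall architecture coincides with the paper's: a Kennedy--Tasaki argument in which $\log Z_{\ell,\beta}$ is controlled by the cluster expansion, compared with the spectral decomposition, and $\beta\to\infty$ is taken at fixed $\ell$. Your deduction of (a) and (b) from the key estimate $\e{\beta E_0^{(\ell)}}Z_\ell(\beta)=1+r_\ell(\beta)$, $|r_\ell(\beta)|\le C\e{-\beta c}$, is correct, with one caveat: the prefactor cannot be taken independent of $\ell$. The correction $r_\ell(\beta)$ comes from clusters wrapping around the time circle, and such a cluster may sit at any of $O(\ell)$ spatial positions, so the best available bound is $|r_\ell(\beta)|\le C_\ell\,\e{-\beta c}$ with $C_\ell$ growing in $\ell$ (the paper's Lemma \ref{lem clexp gap} has exactly this form, $C_\ell\sim\ell$). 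This is harmless for your argument --- only uniformity of the rate $c$ matters, since $\beta\to\infty$ is taken at fixed $\ell$ --- but the uniform-in-$\ell$ claim as stated is false and should be weakened.

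The genuine gap is that the key estimate itself, which is the entire content of the paper's Lemma \ref{lem clexp gap}, is only sketched, and the sketch begs the question. Nothing in Section \ref{sec dimerisation} identifies the linear-in-$\beta$ part of $\Phi_{\ell,\beta}$ with $-2\beta E_0^{(\ell)}$ up to an exponentially small error, and a priori the cluster sum is not exactly linear in $\beta$ even after discarding time-spanning clusters, so ``resumming the short-time polymers into $-\beta E_0^{(\ell)}$'' is not a step one can simply perform. The paper's mechanism is different and concrete: using time-translation invariance one writes
$\tfrac1{2\beta}\Phi_{\ell,\beta}=\int \dd\bar\rho_1(\Gamma)\,\frac{w(\Gamma)\varphi(\Gamma)}{\frL(\Gamma)}\,\bbone_0(\Gamma)$,
normalizing each cluster by its vertical extent $\frL(\Gamma)$; then for $\beta<\beta'$ the difference $\tfrac1{2\beta}\Phi_{\ell,\beta}-\tfrac1{2\beta'}\Phi_{\ell,\beta'}$ receives contributions only from clusters with $\frL(\Gamma)\ge 2\beta$, and these are bounded by $C_\ell\e{-2b_2\beta}$ using the extra weight $\e{b(\gamma)}$ deliberately built into the Koteck\'y--Preiss criterion (this is precisely why the paper carries the otherwise unnecessary function $b$ through Lemma \ref{lem:KPcrit} and Theorem \ref{thm clexp}(c)). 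The identification of the limit $-\lim_{\beta'\to\infty}\tfrac1{2\beta'}\log Z_{\ell,\beta'}=E_0^{(\ell)}$ then comes from the spectral decomposition, not from the expansion. Without this Cauchy-in-$\beta$ argument, or an equivalent substitute, your key estimate is unsupported, and it is where all the work lies.
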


Recall that the chain is $\{-\ell+1, \dots, \ell\}$ and it always contains an even number of sites. Our theorem does not cover the chains with odd numbers of sites, although we expect the corresponding Hamiltonians to be gapped as well.

The spatial exponential decay proved in Theorem \ref{thm exp decay} is also a consequence of Theorem \ref{thm gap}, due to the Exponential Clustering Theorem (see the simultaneous articles \cite{HK,NS}). Our proof here is motivated by \cite{KT}. For the model $H_\ell$ it can be found in Section \ref{sec gap}. It relies on a loop and contour representation, and on cluster expansions, as for the proof of dimerization. The modifications for $\tilde H_\ell$ are discussed in the appendix.

\section{Graphical representation for $O(n)$ models}
\label{sec loops}


Consider the one-dimensional
graph consisting of the
$2\ell$ vertices $V_\ell\deq \{-\ell+1,\dots,\ell\}$ and the edges
$E_\ell\deq \big\{(x,x+1):-\ell+1\leq x\leq \ell-1\big\}$.
Fix  $\beta>0$.
To each vertex and edge of this graph we associate a 
 periodic time interval
$T_\beta=(-\beta,\beta)_\text{per}$ to obtain
a set of \emph{space-time vertices} 
$\overline V_{\ell,\beta}\deq V_\ell\times T_\beta$ as well as
a set of 
\emph{space-time edges} $\overline E_{\ell,\beta}\deq E_\ell\times T_\beta$.

By a \emph{configuration} $\omega$ we mean a finite
subset of $\overline E_{\ell,\beta}$, each point of $\omega$ receiving a
\emph{mark} $\cross$ or $\dbar$.  The points of $\omega$ will 
collectively be called \emph{links}, those marked $\cross$ being
referred to as \emph{crosses} and those marked $\dbar$ as 
\emph{double-bars}.
We write $\omega=(\omega_{\dbar},\omega_{\cross})$ and
 denote the set of all such
(link) configurations $\Omega_{\ell,\beta}$.

To every configuration $\omega\in\Omega_{\ell,\beta}$ corresponds a 
set of \emph{loops}; see Fig.\ \ref{fig:loop-sorts}
for an illustration.  A loop $l$ is a closed, injective trajectory 
\begin{align*}
	[0,L]_\text{per}&\to \overline V_\ell\\
	t&\mapsto l(t)=(v(t),T(t)),
\end{align*}
such that $x(t)$ is piecewise constant and $T'(t)\in\{\pm 1\}$. We
call $L\equiv |l|$ the \emph{length} of $l$, that is the smallest
$L>0$ in the above equation. A jump occurs at $t\in[0,L]$ provided
that $\{x(t-),x(t+)\}\times T(t)$ contains a link. We have
$T'(t+)=-T'(t-)$ in case that link is a double bar and $T'(t+)=T'(t-)$
in case it is a cross. We identify loops with identical support
and 
we occasionally abuse notation and identify a loop with the 
set of links it traverses.
The number of loops in a configuration $\omega$ is denoted 
$\mathcal L(\omega)$. The number of links in a configuration $\omega$ is denoted by $\#\om$. Similarly the number of double bars is denoted by $\#\om_{\dbar}$ and the number of crosses is denoted by $\#\om_{\cross}$.


For $u\in\mathbb{R}$, we define the following 
\emph{signed measure} on the
set $\Omega_{\ell,\beta}$
 of link configurations $\om$:
\be\label{eq:unnormalisedPPP}
\dd\bar\rho_{u}(\omega) = u^{\#\omega_{\cross}}
\dd^{\otimes \#\omega}x,
\ee
where $\dd x$ is the
Lebesgue measure on $\overline E_{\ell,\beta}$.
We also introduce the following normalized measure
$\rho_{u}$, satisfying 
$\rho_{u}(\Omega_{\ell,\beta})=1$:
\be\label{eq:normalisedPPP}
\dd\rho_{u}(\om)=\e{-(1+u) 2\beta |E_{\ell}|}
\dd\bar\rho_{u}(\om)
\ee
If $u$ is positive, 
the measure $\rho_{u}$ is a positive measure and hence a probability
measure;  in fact, under this measure
 $\om$ has the distribution of a Poisson point process 
with intensity $u$ for crosses $\cross$ and
intensity $1$ for double-bars $\dbar$.  
But we also allow small, negative $u$.  
Let 
\be \label{eq:Z-def}
Z_{\ell,\beta,n,u}\deq
\int_{\Omega_{\ell,\beta}} \dd\rho_{u}(\omega) \, n^{\mathcal L(\omega)-\#\omega_{\dbar}}.  
\ee 

This loop model is equivalent to the quantum spin system, and the next
result is an instance of this equivalence.  The equivalence goes back
to 
T\'oth \cite{Toth} 
and Aizenman--Nachtergaele \cite{AN} for special choices of the
parameters; 
the general  case of the interaction \eqref{hab}
is due to 
 \cite{Uel}.  Note that it holds for arbitrary finite graphs, not only for chains. 

We write $x \not\leftrightarrow y$
to characterize the set of
configurations $\omega$ where $(x,0)$ and $(y,0)$ belong to distinct
loops; 
$x\overset{+}{\longleftrightarrow} y$ where the top of $(x,0)$ is connected to the
bottom of $(y,0)$; and $x \overset{-}{\longleftrightarrow} y$  where the top of
$(x,0)$ is connected to the top of $(y,0)$ (see \cite[Fig.\ 2]{Uel}
for an illustration).

\begin{theorem}
\label{thm spin loop}
For the Hamiltonian \eqref{ham} with $h_{x,x+1}=-u T_{x,x+1}-Q_{x,x+1}$,
we have that
\begin{itemize}
\item[(a)] $\displaystyle \Tr \e{-2\beta H_\ell} = \e{2\beta (1+u)
    |E_\ell|} Z_{\ell,\beta,n,u}$.
\item[(b)] For all $1 \leq \alpha < \alpha' \leq n$, we have
\[
\Tr L_x^{\alpha,\alpha'} L_y^{\alpha,\alpha'} \e{-2\beta H_\ell} = \tfrac2n \e{2\beta (1+u) |E_\ell|} \int_{\Omega_{\ell,\beta}} \dd\rho_{u}(\omega) \, n^{\mathcal L(\omega)-\#\omega_{\dbar}} \bigl( \bbone[x \overset{-}{\longleftrightarrow} y] - \bbone[x\overset{+}{\longleftrightarrow} y] \bigr).
\]
\end{itemize}
\end{theorem}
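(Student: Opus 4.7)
My plan is to establish both parts via the standard Poisson--Dyson expansion of $\e{-2\beta H_\ell}$, followed by a graphical evaluation of the resulting trace. Writing $-H_\ell = \sum_x (uT_{x,x+1} + Q_{x,x+1})$ and applying the time-ordered series gives
\[
\e{-2\beta H_\ell} = \int \dd\bar\rho_u(\om)\, K(\om),
\]
where $K(\om)$ is the time-ordered product of the operators $T_{x,x+1}$ (at each cross) and $Q_{x,x+1}$ (at each double-bar) specified by $\om$. This is a formal rearrangement of an entry-wise absolutely convergent series (bounded operators, finite $\ell$ and $\beta$), so the fact that $\bar\rho_u$ is a signed measure when $u<0$ poses no obstacle.

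For (a), I would compute $\Tr K(\om)$ graphically: a cross transposes the two vertical lines at its edge, while a double-bar, via $\psi = n^{-1/2}\sum_\alpha e_\alpha\otimes e_\alpha$, pairs them as a cap/cup at the cost of a factor $1/n$. Summing the basis label freely on each resulting connected component (loop) yields a factor $n$ per loop, giving $\Tr K(\om) = n^{\caL(\om) - \#\om_{\dbar}}$. Combining with $\dd\bar\rho_u(\om) = \e{(1+u)2\beta |E_\ell|}\dd\rho_u(\om)$ proves (a).

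For (b), I would insert $L_x^{\alpha,\alpha'}$ and $L_y^{\alpha,\alpha'}$ at time $0$ and redo the loop decomposition. A loop with no insertion still contributes the factor $n$, while a loop carrying one or two $L$-insertions contributes instead the trace of the corresponding operator product taken around the loop. If $(x,0)$ and $(y,0)$ lie on distinct loops, each of those loops contributes $\Tr L^{\alpha,\alpha'}=0$, killing the configuration. If they lie on the same loop, the factor $n$ on that loop is replaced by $\Tr((L^{\alpha,\alpha'})^2)$ when the two insertions are traversed in the same direction (the $+$ connection) and by $\Tr(L^{\alpha,\alpha'}(L^{\alpha,\alpha'})^{\top})$ in the opposite orientation (the $-$ connection). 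Using the direct identities $(L^{\alpha,\alpha'})^2 = -(\ketbra{\alpha}+\ketbra{\alpha'})$ and $(L^{\alpha,\alpha'})^{\top} = -L^{\alpha,\alpha'}$, these traces equal $-2$ and $+2$ respectively. The affected loop therefore contributes $2\bigl(\bbone[x\overset{-}{\longleftrightarrow}y] - \bbone[x\overset{+}{\longleftrightarrow}y]\bigr)$ in place of $n$, producing an overall factor $\tfrac{2}{n}\,n^{\caL(\om)-\#\om_{\dbar}}\bigl(\bbone[x\overset{-}{\longleftrightarrow}y] - \bbone[x\overset{+}{\longleftrightarrow}y]\bigr)$; integrating yields the stated formula.

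The main bookkeeping challenge is the careful identification, at the graphical level, of which loop the two $L$-insertions lie on, and the correct interaction between the orientation of the traversal and the sign from $(L^{\alpha,\alpha'})^{\top}=-L^{\alpha,\alpha'}$. Once the graphical framework of \cite{AN,Uel} is in place, this is mechanical; absolute convergence of the Dyson series keeps everything rigorous even when $\rho_u$ is a signed measure.
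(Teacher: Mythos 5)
Your proposal is correct and follows essentially the same route as the paper: a Feynman--Kac/Dyson expansion over link configurations, with the trace evaluated loop by loop (your free sum over a basis label per loop is exactly the paper's count of space-time spin configurations $\Sigma(\omega)$, and your $\Tr((L^{\alpha,\alpha'})^2)=-2$ versus $\Tr(L^{\alpha,\alpha'}(L^{\alpha,\alpha'})^{\top})=+2$ reproduces the paper's factor $|\Sigma^{\alpha,\alpha'}_{x,y}(\omega)|=\tfrac2n n^{\caL(\omega)}$ times the matrix-element signs $\mp1$ for the $\pm$ connections). The sign conventions and the handling of the signed measure both check out.
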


The sign of the parameter $u$ in the definition of the interaction has indeed changed; but the theorem holds for arbitrary real (or even complex) parameters. Theorem \ref{thm spin loop} can also be formulated for the interaction $h_{x,x+1}=-u T_{x,x+1}-vQ_{x,x+1}$, by inserting the factor $v^{\#\omega_{\dbar}}$ inside the integrals.

\begin{proof}
The proof of (a) can be found in \cite[Theorem 3.2]{Uel} and (b) is similar, so we only sketch it here. Let $\Sigma(\omega)$ be the set of ``space-time spin configurations" that are constant along the loops (so that $|\Sigma(\omega)| = n^{\caL(\omega)}$). By a standard Feynman-Kac expansion, we get
\be
\Tr \e{-2\beta H_\ell} = \e{2\beta (1+u) |E_\ell|} \int_{\Omega_{\ell,\beta}} \dd\rho_{u}(\omega) \, n^{-\#\tilde\omega_{\dbar}} \sum_{\sigma \in \Sigma(\omega)} 1.
\ee
We recognize the partition function in \eqref{eq:Z-def}, so we get (a).

For (b) we need a modified set of space-time spin configurations where the spin value must jump from $\alpha$ to $\alpha'$, or from $\alpha'$ to $\alpha$, at the points $(x,0)$ and $(y,0)$. Let $\Sigma_{x,y}^{\alpha,\alpha'}$ be this set. We then have
\bm
\Tr _x^{\alpha,\alpha'} L_y^{\alpha,\alpha'} \e{-2\beta H_\ell} = \e{2\beta (1+u) |E_\ell|} \int_{\Omega_{\ell,\beta}} \dd\rho_{u}(\omega) \, n^{-\#\tilde\omega_{\dbar}} \\
\sum_{\sigma \in \Sigma_{x,y}^{\alpha,\alpha'}(\omega)} \langle \sigma_{x,0+} | L_x^{\alpha,\alpha'} | \sigma_{x,0-} \rangle \, \langle \sigma_{y,0+} | L_y^{\alpha,\alpha'} | \sigma_{y,0-} \rangle.
\end{multline}
It is necessary that $(x,0)$ and $(y,0)$ belong to the same loop in order to get a nonzero contribution. Further, we have
\be
 \langle \sigma_{x,0+} | L_x^{\alpha,\alpha'} | \sigma_{x,0-} \rangle \, \langle \sigma_{y,0+} | L_y^{\alpha,\alpha'} | \sigma_{y,0-} \rangle = \begin{cases} -1 & \text{if } x\overset{+}{\longleftrightarrow} y, \\ +1 & \text{if } x\overset{-}{\longleftrightarrow} y. \end{cases}
 \ee
Since $|\Sigma_{x,y}^{\alpha,\alpha'}(\omega)| = \frac2n n^{\caL(\omega)}$, we get (b).
\end{proof}

From now on and to the end of this article we work with the loop model.

\begin{rem}[Intuition] 
It is helpful to think of $\rho_{u}$   as
an a-priori measure on a gas of loops, and rewrite
the integrand 
$n^{\mathcal L(\omega)-\#\omega_{\dbar}}$
as  $\e{-(\log n) H(\omega)}$, with `Hamiltonian'
\be\label{eq:Hamiltonian}
-H(\omega)\deq \mathcal L(\omega)-\#\omega_{\dbar}, 
\ee 
and inverse temperature $\log n$. 
Thinking of $n$ as large, the
Laplace principle tells us that `typical'
configurations should maximise 
$n^{\mathcal L(\omega)-\#\omega_{\dbar}}$.  Our  goal is
to write $Z_{\ell,\beta,n,u}$ as a dominant contribution from
such maximizers, and some excitations.
\end{rem}

We end this section with the following remark about working with a
signed measure.
Since the (possibly signed) measure $\rho_{u}$ is closely related to
the probability measure $\rho_{1}$, it is easy to see that any event
$A$ satisfying $\rho_{1}(A)=0$ also has zero measure under
$\rho_{u}$.  In fact, we have the following slightly stronger
property: 

\begin{lemma}
If $A$ is an event such that $\rho_{1}(A)=0$ and 
$f:\Omega_{\ell,\beta}\to\mathbb{R}$ is a $\rho_{1}$-integrable
function, then for any $u\in\mathbb{R}$ we have that
\be
\int_A \dd\rho_{u}(\om) f(\om)=0.
\ee
\end{lemma}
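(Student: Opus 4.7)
The plan is to exhibit $\rho_u$ as an absolutely continuous signed measure with respect to the probability measure $\rho_1$. First, I would compare the definitions \eqref{eq:unnormalisedPPP} and \eqref{eq:normalisedPPP} at parameter $u$ and at parameter $1$ to derive the identity
\[
\dd\rho_u(\om) = g(\om) \, \dd\rho_1(\om), \qquad g(\om) \deq \e{2(1-u)\beta |E_\ell|} u^{\#\om_\cross}.
\]
The density $g$ is everywhere finite and real-valued, since $\#\om_\cross$ is a nonnegative integer for every $\om \in \Omega_{\ell,\beta}$; when $u \leq 0$ it can take negative or zero values, depending on the parity of $\#\om_\cross$.

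Next, from this identity I would observe that the total variation measure $|\rho_u|$ has density $|g|$ with respect to $\rho_1$, so in particular $|\rho_u| \ll \rho_1$. Consequently any event $A$ with $\rho_1(A) = 0$ is automatically $|\rho_u|$-null. For any $\rho_1$-integrable $f$ the displayed formula then gives
\[
\int_A f(\om) \dd\rho_u(\om) = \int_A f(\om) g(\om) \dd\rho_1(\om) = 0,
\]
where the last equality uses only the elementary fact that the integral of any measurable function over a $\rho_1$-null set vanishes (so no integrability assumption on the product $fg$ is needed).

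I do not expect any genuine obstacle here: the lemma amounts to the standard observation that absolutely continuous measures inherit null sets, applied to the explicit density $g$ between $\rho_u$ and $\rho_1$. The only point requiring minor care is the sign of $g$ when $u \leq 0$; if one wishes to avoid invoking signed-measure notions directly, one can decompose $\Omega_{\ell,\beta}$ into the two events $\{\#\om_\cross \text{ even}\}$ and $\{\#\om_\cross \text{ odd}\}$, on which $g$ has constant sign, apply the null-set argument piecewise to the two resulting positive measures, and recombine.
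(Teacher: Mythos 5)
Your proof is correct and follows essentially the same route as the paper's: both exhibit $\rho_u$ as having an explicit density with respect to the probability measure $\rho_1$ (your formula for $g(\om)=\e{2(1-u)\beta|E_\ell|}u^{\#\om_\cross}$ is the right one) and conclude that $\rho_1$-null sets do not contribute. If anything, your argument via pointwise finiteness of $|g|$ and absolute continuity is slightly more robust than the paper's, which bounds the density by a finite constant $C$ and therefore implicitly relies on $|u|\leq 1$; since the lemma is only ever applied for small $|u|$, this makes no practical difference.
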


\begin{proof}
Using \eqref{eq:unnormalisedPPP} and \eqref{eq:normalisedPPP}
it is easy to see that
\be
\left|\int_A \dd\rho_{u}(\om) f(\om)\right|
\leq C\int_A \dd\rho_{1}(\om) |f(\om)|=0,
\ee
for some finite constant $C$ depending only on $u,\ell,\beta$.
\end{proof}

As a consequence, we may assume that crosses and double-bars occur at different times, also when $u<0$ and the measure $\rho_u$ carries signs. We implicitly used this property when defining loops.

\section{The contour model}
\label{sec contours}

\subsection{Contours}

We classify loops as follows, see Fig.\ \ref{fig:loop-sorts}.
A loop is \emph{contractible} if it can be continuously deformed to a
point and \emph{winding} otherwise. 
Not all loops are contractible  since our time interval $T_\beta$ is
periodic. 
A loop is \emph{long} if it visits three or more distinct
vertices or if it is winding; it is \emph{short} otherwise.

\begin{figure}[htb]\center
  \includegraphics[scale=.6]{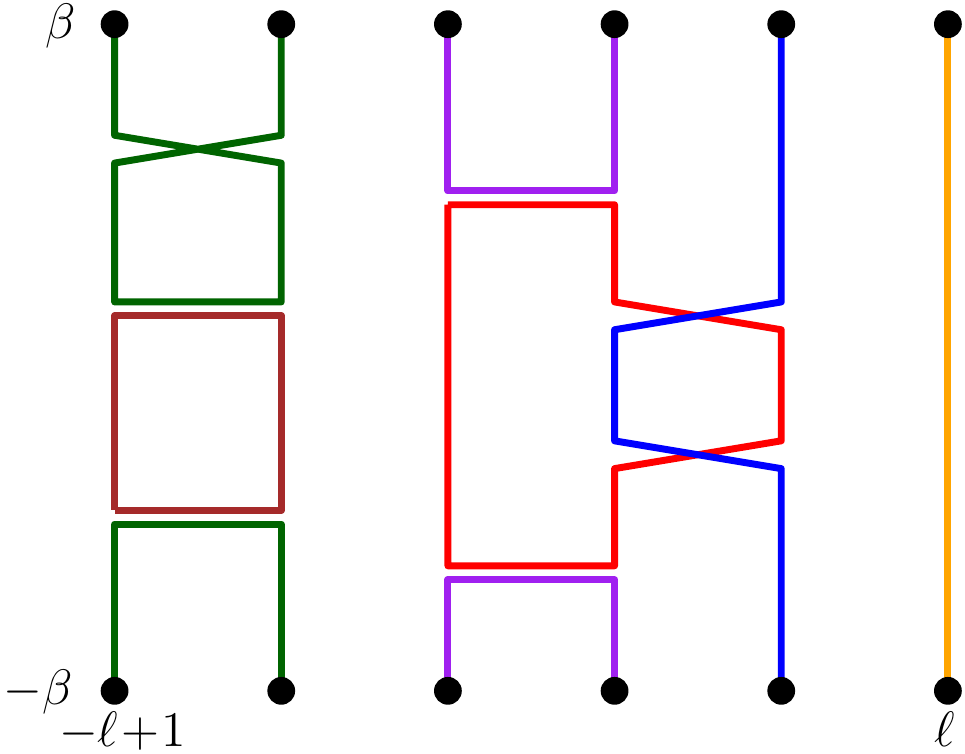}
  \caption{A configuration $\omega$ consisting of three short loops
    (green, brown, purple), and three long loops (red, blue, orange) two of which are winding loops
    (blue, orange). 
}\label{fig:loop-sorts}
\end{figure}

We define a \emph{canonical orientation} of the space-time vertices
$\overline V_{\ell,\beta}$, using the directions up ($\up$) and down ($\dn$),
by orienting the leftmost space-time
vertex  $\{-\ell+1\}\times T_\beta$ \emph{down} $\dn$ and
requiring that neighbouring space-time
vertices have opposite orientations;
see Fig.\ \ref{fig:orientations}.
We write $V^\up_\ell\deq \{x\in V_\ell: x+\ell \mbox{ is even}\}$
for the set of vertices with up-orientation,
and $V^\dn_\ell\deq \{x\in V_\ell: x+\ell \mbox{ is odd}\}$
for the set of vertices with down-orientation,
and introduce the following subsets of the edge-set $E_\ell$:
\be\begin{split}
E_\ell^+&\deq \big\{(x,x+1)\in E_\ell:
x\in V^\dn_\ell, x+1\in V^\up_\ell\big\},\\
E^-_\ell & \deq E_\ell\setminus E^+_\ell=
\big\{(x,x+1)\in E_\ell:
x\in V^\up_\ell, x+1\in V^\dn_\ell\big\}.
\end{split}
\ee 
We define  $\overline E_{\ell,\beta}^+$ and $\overline E_{\ell,\beta}^-$,  as well as
 $\overline V^\up_{\ell,\beta}$ and  $\overline V^\dn_{\ell,\beta}$,
analogously.

\begin{figure}[htb]\center
  \includegraphics[scale=.65]{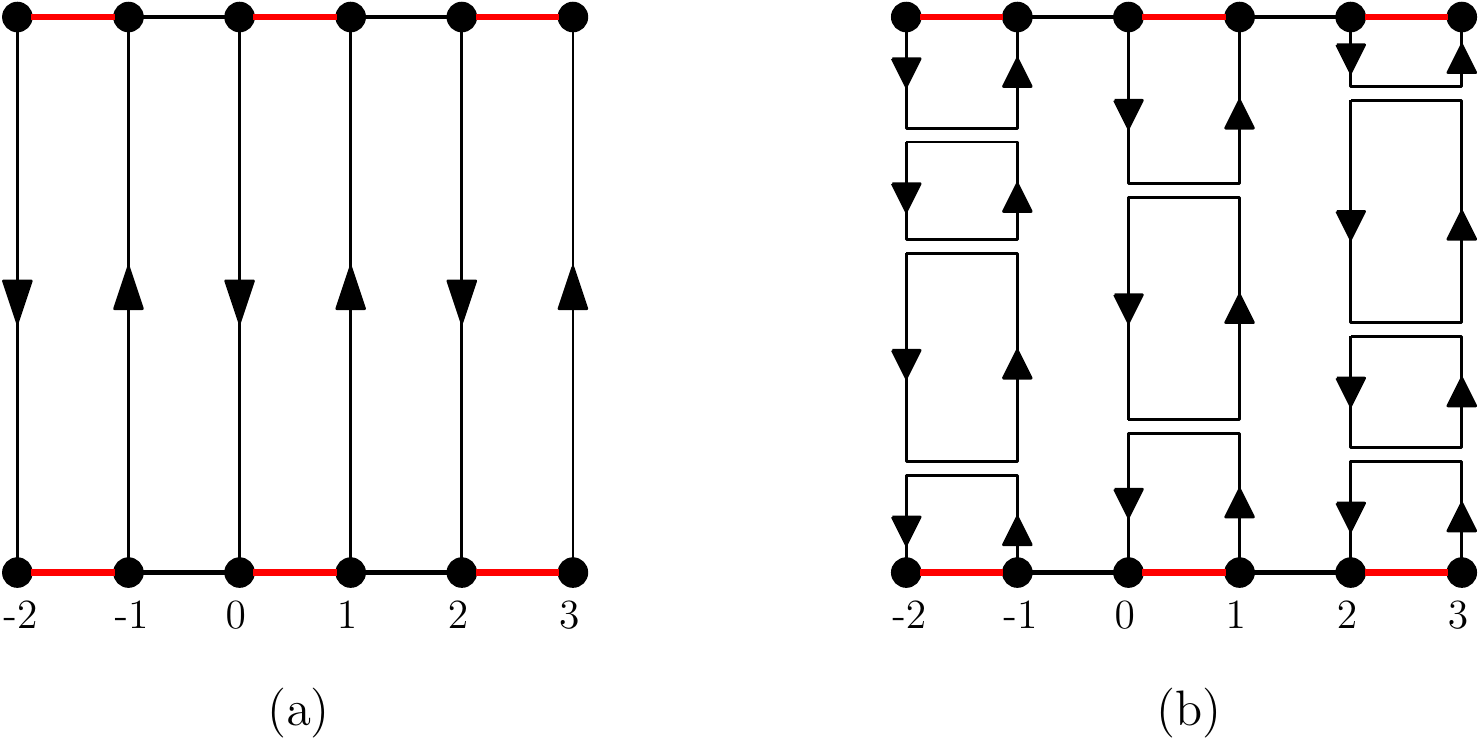}
  \caption{(a) The canonical orientation of $\overline V_{\ell,\beta}$
with the set $E^+_\ell$ highlighted red.  (b)
a configuration $\omega$ with many short loops;
these are positively oriented under the canonical orientation.
}\label{fig:orientations}
\end{figure}

These definitions are motivated as follows.  We expect that `typical'
configurations $\omega$  contain many short loops.  
To maximize the number of short 
loops one places only double-bars in $\overline E^+_\ell$, as in 
Fig.\ \ref{fig:orientations} (b).
The canonical orientation is chosen so that all the short loops in such a
configuration are positively oriented (i.e.\ 
counter-clockwise).  The canonical orientation will
be useful in classifying the excitations away from such `typical' 
$\omega$.
Also note that if the origin $0$ belongs to a short, positively
oriented loop, then we have $0\leftrightarrow 1$ for $\ell$ odd and 
$0\leftrightarrow -1$ for $\ell$ even.  To prove our main result
Theorem \ref{thm main}
we will essentially argue that  the origin is likely to belong to a
short, positively oriented loop.

Given a loop $l$ in a configuration $\omega$,
we define a  \emph{segment} of $l$ as a trajectory of
$l$ between two times $0\leq s_1<s_2\leq L(l)$ when $l$ 
 passes through height $\beta$.   That is to say,
$l(s_1)=(v_1,\beta), l(s_2)=(v_2,\beta)$ for some
$v_i\in V_\ell$, while $l$ does not pass 
through height $\beta$ in times $t\in(s_1,s_2)$. 
We say that a segment is \emph{spanning} if for every $t\in T_\beta$
there exists a $v=v(t)\in V_\ell$ such that the segment traverses
$(v,t)$. 
Note that a spanning segment is not necessarily part of a winding
loop. See Fig.\ \ref{fig:winding}. 

\begin{figure}[hbt]\center
  \includegraphics{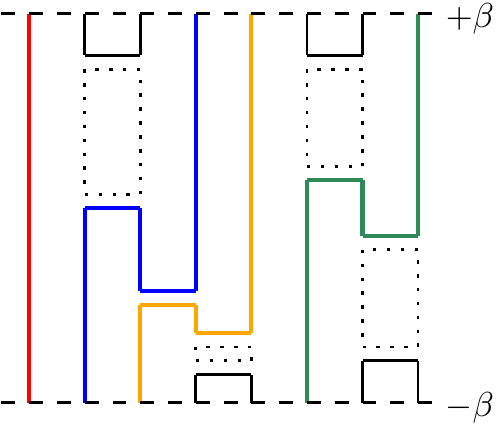}
  \caption{The leftmost and rightmost loops are winding loops with one spanning segment each. The loop in the middle is contractible. There are four spanning segments in total.}\label{fig:winding}
\end{figure}

\begin{defi}[Contours]
We say that two loops are \emph{connected} if they
share a link or both are winding.  
A \emph{contour} is then a maximally connected set of long loops.
\end{defi}


\begin{rem}\label{rk:cross}
For later reference, we note here that any cross $\cross$ which is
traversed by some loop in a contour is necessarily traversed both
ways by the contour;  see Fig.\ \ref{fig:cross-in-contour}.
\end{rem}
\begin{figure}[hbt]\center
  \includegraphics[scale=.6]{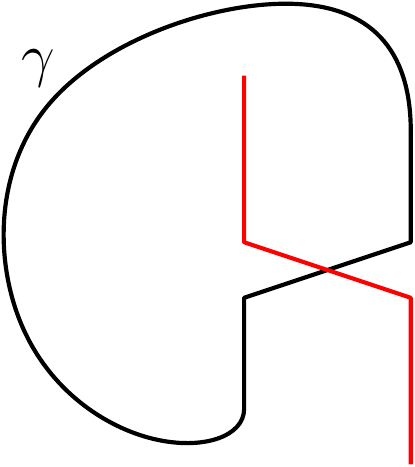}
  \caption{
A cross is traversed by a contour $\g$.  If the red loop visits a
third vertex, it is a long loop;  otherwise it must be a winding
loop.  In both cases, it is actually part of $\g$.
} \label{fig:cross-in-contour}
\end{figure}

A contour which contains at least one winding 
loop will be called a \emph{winding contour}. See Fig.\ \ref{fig:contours}.

\begin{figure}[htb]\center
  \includegraphics{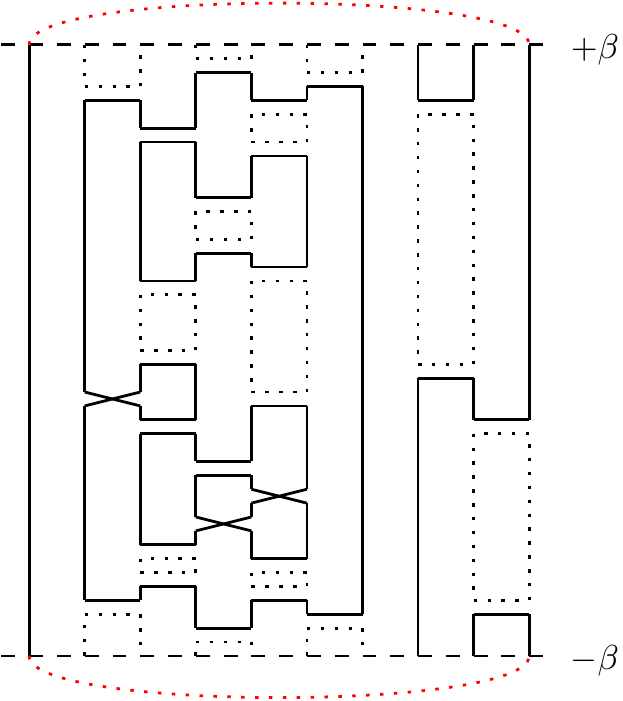}
  \caption{
Two contours:  One winding contour, consisting of two winding loops, and one
consisting of four long, but contractible loops.
} \label{fig:contours}
\end{figure}

We need a notion of  \emph{interior} of a contour, and for
this it is useful to regard our configuration $\omega$ as
living in the bi-infinite cylinder $C_\beta = \bbR \times T_\beta$.
More precisely, given $\omega$ we consider the subset 
$\overline{\omega}$
of $C_\beta$ obtained as the union of (i)  $\overline V_\beta$ embedded in 
$C_\beta$ in the natural way, and  (ii) the links of $\omega$
embedded as straight line segments connecting adjacent points of 
$\overline V_\beta$.  Note that, in the embedding $\overline\omega$,
crosses and double-bars are embedded in the same way.  For a loop $l$
of $\omega$,
define its \emph{support} $S(l)$ as the subset of $\overline\omega$
traced out by $l$, meaning the union of the vertical and horizontal
line segments of $\overline\omega$ corresponding to the intervals of 
$\overline V_\beta$ and the links of $\omega$ traversed by $l$.
For a contour $\gamma$ of $\omega$ we then make the following
definitions.
\begin{itemize}
\item The \emph{support} $S(\gamma)$ is 
 the union of the supports  $S(l)$ of the loops $l$ belonging to
 $\gamma$.   Note that $S(\gamma)$ is a closed subset of $C_\beta$.
\item The \emph{exterior} $E(\gamma)$ is the union of the unbounded
  connected components of
  $C_\beta\setminus S(\gamma)$.  Note that   
$E(\gamma)$ is open.
\item The \emph{interior} 
$I(\gamma)\deq {C_\beta\setminus \ol{E(\gamma)}}$.
Note that $I(\gamma)$ is an open set.
\item The \emph{boundary} 
$B(\gamma)\deq\overline{E(\gamma)}\setminus E(\gamma)$
which is a closed set.
\item The (vertical) {\it length} $|\gamma|$ of a contour as the sum of the (vertical)
lengths of its loops, $|\gamma|\deq\sum_{l\in \gamma}|l|$.

\end{itemize}
These notions are illustrated in Figs \ref{fig:interior1}--\ref{fig:interior3}.

Having defined $I(\gamma)$ as a subset of the cylinder $C_\beta$, 
we may also regard $I(\gamma)$ (or more precisely, 
its closure $\ol{I(\gamma)}$) as a subset of 
$\overline E_{\ell,\beta}$ by identifying a point 
$(x,x+1)\times\{t\}\in\overline E_{\ell,\beta}$ with the
closed line-segment from $(x,t)$ to $(x+1,t)$ in $C_\beta$.
Similarly, $S(\gamma)$ and $B(\gamma)$ may be
regarded as subsets of $\overline V_{\ell,\beta}\cup\omega$.
 We freely switch between these 
points of view. 

\begin{figure}[htb]\center
  \includegraphics[scale=.6]{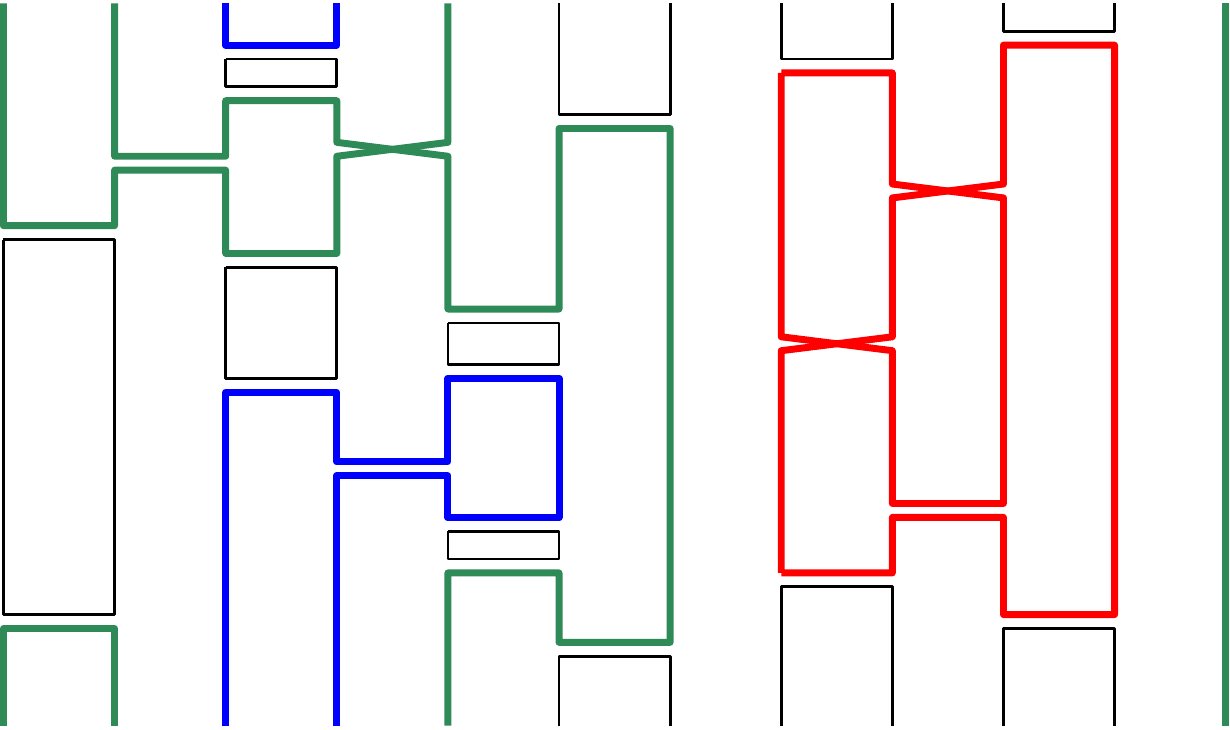}
  \caption{A configuration $\omega$ with three contours highlighted
    green, blue and red.  The green contour consists of two winding loops.}\label{fig:interior1} 
\end{figure}

\begin{figure}[htb]\center
  \includegraphics[scale=.6]{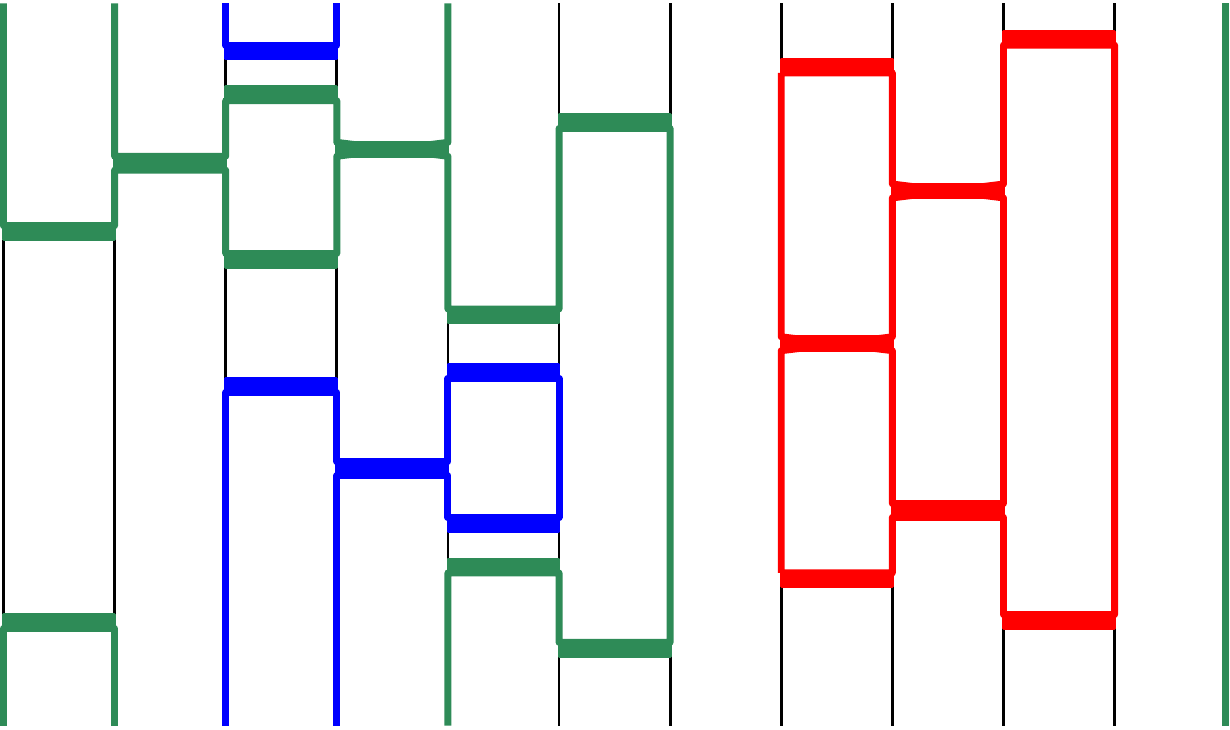}
  \caption{The corresponding embedding 
$\overline\omega\subseteq C_\beta= T_\beta\times \mathbb{R}$,
with the supports $S(\gamma)$ of the contours highlighted with the
corresponding colors.}\label{fig:interior2}
\end{figure}

\begin{figure}[htb]\center
  \includegraphics[scale=.6]{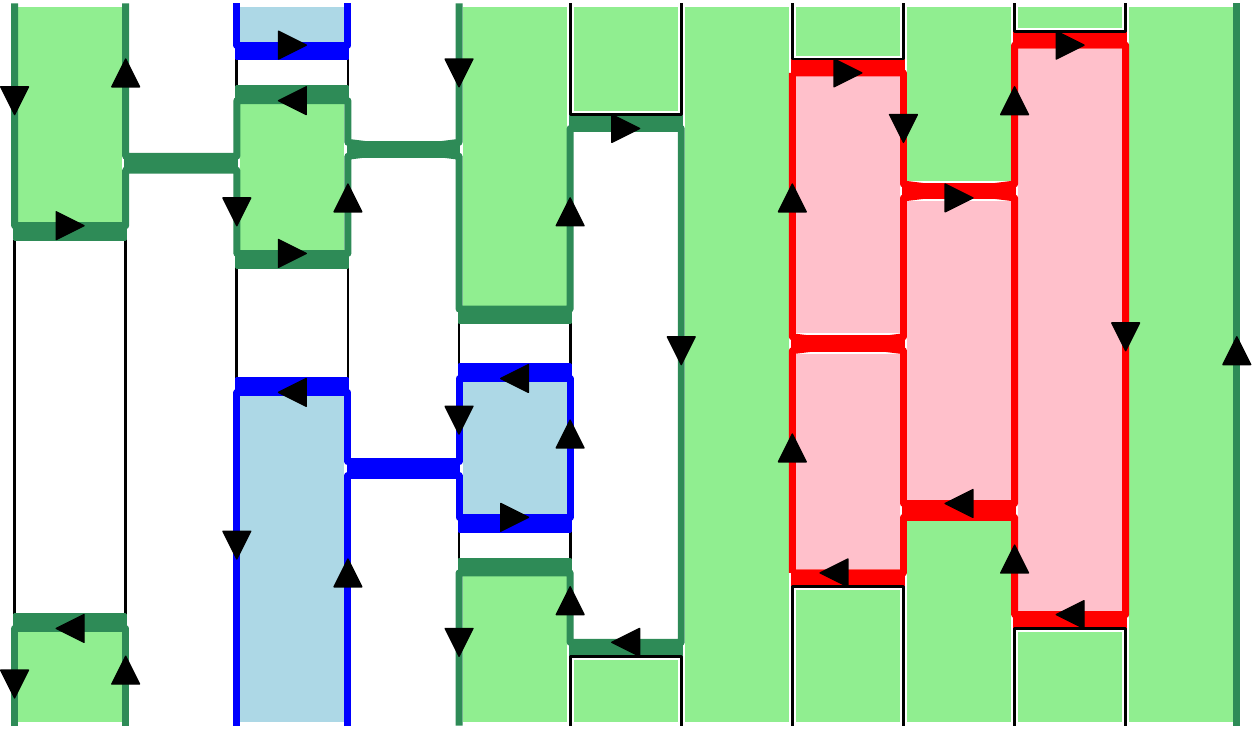}
  \caption{The interiors of the corresponding contours with the
    boundaries $B(\gamma)$ receiving the canonical orientation.
The green and blue contours are of positive type
 (interiors $I(\gamma)$  on the left)
 while the
 red contour is of negative type
 (interior  on the right).  
}\label{fig:interior3}
\end{figure}

Fixing a contour $\gamma$, note that
the boundary $B(\gamma)$ consists of a collection of closed
curves and horizontal line segments (of length
1).  We use the canonical orientation of $\overline V_{\ell,\beta}$ 
to orient each vertical segment of  $B(\gamma)$.
It is not hard to see that this gives a consistent orientation of all
the closed curves constituting $B(\g)$.
(This follows from Remark \ref{rk:cross}.)
Recall the standard notion of a 
\emph{positively oriented curve} as one whose
interior is always on the left.

\begin{defi}[Type of a contour]
 We say that the contour $\gamma$ is of \emph{positive type}  if 
the canonical orientation of $B(\gamma)$ is \emph{positive} in the
sense that $I(\g)$ is on the left of each closed curve of $B(\g)$.
Otherwise we say that $\gamma$ is of \emph{negative type}  
(being of negative type  is equivalent to the interior being on the right).
\end{defi}

\begin{rem}\label{rk:pos}
Suppose that $\om\in\Omega_{\ell,\beta}$ is such that a given point 
$\bar v\in\overline V_{\ell,\beta}$ is not on or inside any contour, that
is to say
\be
\bar v\in\bigcap _{\g\in\G(\om)} E(\g),
\ee
where $E(\g)$ is the exterior of $\g$
defined above.  Then we have that $\bar v$ is
on a \emph{positively oriented} short loop.  Indeed, this is related
to the fact that all external contours are of positive type, see Lemma
\ref{lem:admissibility}.
\end{rem}

\subsection{Domains and admissibility of contours}

We now introduce several notations and definitions pertaining to
contours and how they relate to each other.  First, given
$\om\in\Omega_{\ell,\beta}$ we define 
$\Gamma(\om)=\{\g_1,\dotsc,\g_k\}$ as the set of
contours in the configuration $\om$.  
Here, and in what follows, a contour may be identified with the set of
links it traverses.
The collection of all possible
contours will be denoted 
$X_{\ell,\beta}=\bigcup_{\om\in\Omega_{\ell,\beta}} \Gamma(\om)$,
and we write $X^+_{\ell,\beta}\se X_{\ell,\beta}$ for the collection
of \emph{positive-type} contours.  We write 
\be
\fX_{\ell,\beta}=\bigcup_{k\geq 0} \binom{X_{\ell,\beta}}{k}
\quad\mbox{and}\quad
\fX^+_{\ell,\beta}=\bigcup_{k\geq 0} \binom{X^+_{\ell,\beta}}{k}
\ee
for the set of finite collections of contours, respectively
positive-type contours.  Elements of $\fX_{\ell,\beta}$ and of
$\fX^+_{\ell,\beta}$ will usually be denoted by $\Gamma$.   It is
important to note that far from every such set $\Gamma$ of 
contours can be obtained as
$\Gamma(\om)$ for some $\om\in\Omega_{\ell,\beta}$;  in fact, we will
devote some effort to identifying criteria under which such an
$\omega$ does indeed exist.  We say that
$\Gamma\in\fX_{\ell,\beta}$ is \emph{admissible} if
$\Gamma=\Gamma(\om)$ for some $\om\in\Omega_{\ell,\beta}$,
and write $\fA_{\ell,\beta}=\Gamma(\Omega_{\ell,\beta})$ for the
collection of admissible sets of contours.

Recall that the interior $I(\gamma)$ of a contour $\gamma$ is by
definition an open subset of the cylinder $C_\beta$.
Also recall that we regard $\overline E_{\ell,\beta}$ as a closed subset of
$C_\beta$ by identifying a point
$(x,x+1)\times\{t\}\in\overline E_{\ell,\beta}$ with the
closed line-segment from $(x,t)$ to $(x+1,t)$.
We now define the \emph{(interior) domains} of $\gamma$ as follows.
\begin{defi}
A domain $D$ of $\g$ is a subset of 
$\overline E_{\ell,\beta}\cap {I(\gamma)}$ which, when regarded as a subset
of $C_\beta$ as above, is connected, satisfies 
$D\cap S(\gamma)=\varnothing$,
and is maximal with these properties.
\end{defi}
We define the \emph{type} of a domain in a similar way to the
type of a contour.  Namely, 
we orient the (topological) boundary of $D$ consistenly with the
canonical orientation of $\overline V_{\ell,\beta}$
and say that $D$ is of \emph{positive type} if this
is a positive orientation (interior on the left), 
and of \emph{negative type} otherwise.
See Fig.\ \ref{fig:domains}.

\begin{figure}[hbt]\center
  \includegraphics{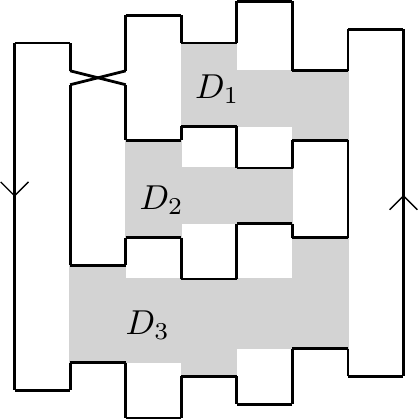}
  \caption{A contour $\g$ of positive type, 
containing three domains $D_1,D_2,D_3$.
Domains $D_1$ and $D_3$ are of negative 
type, 
while $D_2$ is of positive type. 
}\label{fig:domains}
\end{figure}

Given two contours $\g$ and $\g'$, we say that
$\g$ is a  \emph{descendant} of  $\g'$,
writing $\g\prec\g'$,
 if $S(\g)\se D$ for some domain $D$ of $\g'$.
Given $\Gamma\in\fX_{\ell,\beta}$ and 
$\g,\g'\in \Gamma$, 
 we say that $\g$ is an \emph{immediate descendant 
of $\g'$ in $\Gamma$} if $\g\prec\g'$
and there is no $\overline\g\in\Gamma$ satisfying 
both $\g\prec\overline\g$ and $\overline\g\prec\g'$.
It is important to note that the notion of being an 
\emph{immediate} descendant depends not only on the two contours
$\g$ and $\g'$ but on the  set $\Gamma$; in other
words, immediate descendancy cannot be checked in a 
pairwise manner.
If $\g\in\Gamma$ is not the descendant of any other contour
$\g'\in\Gamma$ then we say that $\g$ is an \emph{external} contour;
this notion is also dependent on the set $\Gamma$.

Note that the unique (if it exists) winding contour is always external since a winding loop cannot be in the interior of any contractible loop. 

\begin{lem}\label{lem:admissibility}
Fix $\Gamma\in\fX_{\ell,\beta}$.  Then $\Gamma$ is admissible, i.e.\
$\Gamma\in\fA_{\ell,\beta}$,  
if and only if the following hold:
\begin{enumerate}
\item all external contours in $\Gamma$ are of positive type;
\item  for any pair of distinct contours 
$\g,\g'\in\Gamma$ we have that
 either $\ol{I(\g)}\cap\ol{I(\g')}=\varnothing$ or $\g\prec\g'$ or
  $\g'\prec\g$; 
\item if $\g$ is an immediate descendant of $\g'$, in a domain $D$ of
  $\g'$, then the types of $\g$ and of $D$ coincide;
\item there exists at most one winding contour $\g\in\G$.
\end{enumerate} 
\end{lem}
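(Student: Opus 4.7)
The plan is to argue both implications, with the $\Rightarrow$ direction amounting to topological and orientation bookkeeping and the $\Leftarrow$ direction proceeding via an explicit construction. I would begin by fixing $\omega\in\Omega_{\ell,\beta}$ with $\Gamma=\Gamma(\omega)$ and verifying (1)--(4) in turn. Condition (4) is immediate from the definition of contour: any two winding loops are declared connected, hence lie in the same maximally connected component. For condition (2), one uses that distinct contours have pairwise disjoint supports (they share no link and, by (4), are not both winding), then applies the Jordan curve theorem on the cylinder $C_\beta$ to the closed curves making up $B(\gamma)$ and $B(\gamma')$: the curves either nest or are mutually external, which translates to the stated trichotomy for the closed interiors.

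The orientation conditions (1) and (3) both reduce to Remark \ref{rk:pos}. For (1), any point $\bar v$ lying in the exterior of all contours must sit on a short loop (else it would belong to a long loop, hence a contour), and by the very choice of canonical orientation this short loop is positively oriented; matching such loops along the outward-facing part of $B(\gamma)$ for an external $\gamma$ forces the interior of $\gamma$ to lie on the left of the canonically oriented curves of $B(\gamma)$, so $\gamma$ is of positive type. Condition (3) is the same statement applied one level down: inside a domain $D$ of $\gamma'$, points that do not lie in any further descendant of $\gamma'$ are again on positively oriented short loops, and comparing these to the portion of $B(\gamma')$ bounding $D$ (which encodes the type of $D$) and to $B(\gamma)$ of an immediate descendant $\gamma$ forces the types of $D$ and of $\gamma$ to agree.

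For the converse, given $\Gamma$ satisfying (1)--(4), the plan is to build $\omega$ by taking the union of the links of all contours of $\Gamma$ and then filling the complement of $\bigcup_{\gamma\in\Gamma}S(\gamma)$ in $C_\beta$ with short loops. By (2) this complement splits into the unbounded exterior and a family of ``pockets'' labelled by the domains of the contours of $\Gamma$; each pocket carries a well-defined type, positive in the exterior thanks to (1) and matching its immediate parent by (3). Into each pocket I would insert a dense configuration of short double-bar loops placed on edges of the parity $E_\ell^+$ or $E_\ell^-$ dictated by the pocket's type, so that each filling loop is short and positively oriented relative to the local canonical orientation. Condition (4) ensures that the winding contour, if present, is compatible with a single global orientation of the cylinder, removing the only potential topological obstruction.

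The main technical point, and the only place where the matching conditions (1) and (3) are truly essential, is verifying that the resulting $\omega$ satisfies $\Gamma(\omega)=\Gamma$, i.e.\ that filling has not altered the contour structure. One has to rule out two failure modes: that two filling short loops merge into a longer loop (creating a spurious contour), and that a filling short loop shares a link with a loop of some $\gamma\in\Gamma$ (either promoting the short loop into $\gamma$ or merging $\gamma$ with another contour). Both failures would manifest locally, at a boundary $B(\gamma)$, as a mismatch between the canonical orientation of the filling edges and the orientation of $B(\gamma)$; the type-matching supplied by (1) and (3) is precisely what prevents such a mismatch, while (2) and (4) guarantee that the decomposition of $C_\beta$ into pockets on which the filling is performed is well-defined.
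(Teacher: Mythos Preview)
Your approach is correct and largely parallel to the paper's, but the $\Leftarrow$ construction is organised differently. The paper builds $\omega$ \emph{iteratively from the outside in}: it first lays down the links of all external contours together with a single double bar at height $0$ on every $E_\ell^+$ column that would otherwise be empty, checks that the resulting $\omega_1$ has exactly the external contours as its contour set, then adds the links of the immediate descendants (whose types match their ambient domains by (3), so no spurious long loops are created), and repeats. Your construction instead drops in all contour links at once and fills each complementary pocket with short loops of the parity dictated by the pocket's type. Both work; the iterative version has the advantage that one never has to determine the type of a deeply nested pocket directly (it is inherited step by step), while your version makes the role of conditions (1) and (3) more transparent as an orientation-matching constraint at each boundary.

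Two small cautions. First, your appeal to Remark~\ref{rk:pos} in the $\Rightarrow$ direction is formally circular, since that remark is justified in the paper by reference to the present lemma; you do supply an independent argument, but the step ``by the very choice of canonical orientation this short loop is positively oriented'' is not automatic. A short non-winding loop sits on a single edge, and a priori that edge could lie in $E_\ell^-$. What forces $E_\ell^+$ is a propagation argument from the leftmost column $-\ell+1$ (which must carry links on $(-\ell+1,-\ell+2)\in E_\ell^+$ to avoid a winding loop), together with the observation that an interval on column $x$ bounded by one link on $(x-1,x)$ and one on $(x,x+1)$ already lies on a loop visiting three vertices. Second, in your filling step you should note that inside a pocket you may need more than one double bar per edge (or at least one placed at a height actually contained in the pocket); the paper sidesteps this by not filling interior domains at all until the descendants living there have been added.
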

\begin{proof}
It is easy to see that the four conditions above hold for any
admissible $\G=\G(\om)$.
To show the converse, we  construct an explicit 
$\omega\in\Omega_{\ell,\beta}$ with $\Gamma(\omega)=\Gamma$. 
Starting from the empty configuration 
$\omega_0=\varnothing\in\Omega_{\ell,\beta}$, add all links of all
external contours and then place a double bar at height 0, say, on each
$e\in E_\ell^+$ that does not have any link on it.  This defines
$\om_1$ such that $\G(\om_1)$ is precisely the set of external
contours of $\G$.
Next, add the links of all contours which are immediate descendants of
external contours.   This does not
create any new long loops apart from those in these contours because
their types coincide with those of the domains they are in.
 Iterate this procedure until there are no more contours left to add.
\end{proof}

An important prerequisite for applying a cluster expansion is to
be able to verify the admissibility of a set of contours in a pairwise
manner.  As indicated above, and in the light of Lemma
\ref{lem:admissibility}, this is not directly possible since the
notion of being an immediate descendant depends on the whole set
$\G$.  We get around this issue by introducing a notion of
\emph{compatibility} which applies to sets of positive-type contours
$\G\in\fX^+_{\ell,\beta}$, and which can be checked in a pairwise
manner.  We then show that there is a bijective correspondence between
admissible and compatible sets of contours.

The bijective correspondence referred to above involves
\emph{shifting} contours and rests on the simple observation that if
$\g$ is a negative-type contour, then $\g'=\g+(1,0)$ (i.e.\ $\g$
translated to the right one unit) is a positive-type contour.

\begin{figure}[hbt]\center
  \includegraphics[width=5cm]{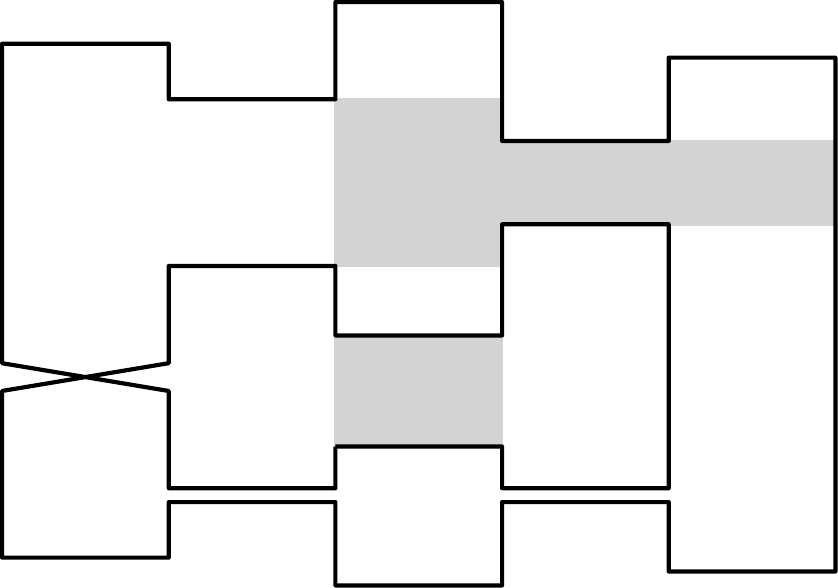}
  \caption{A contour $\gamma$ and its two appropriately shifted
    domains (shaded areas). The lower one was not moved since it
    already was positive type. The upper one was shifted one column to the
    right. If a $\gamma'\in X_{\ell,\beta}$ gets placed inside it,
    $S^{-1}(\Gamma=(\gamma,\gamma'))$ will return an admissible
    collection of contours.}
\label{fig:shifted-domains}
\end{figure}

Given a positive-type contour $\g\in X^+_{\ell,\beta}$ with domains 
$D_1(\g),\dotsc,D_k(\g)\se I(\g)$, we define the 
\emph{appropriately shifted domains} $D_i^+(\g)$ of $\g$ by 
\be
D_i^+(\g)=\left\{
\begin{array}{ll}
D_i(\g), & \mbox{if } D_i(\g) \mbox{ is of positive type}, \\
D_i(\g)+(1,0), & \mbox{otherwise}.
\end{array}
\right.
\ee
Note that while $D_i^+(\g)\se\ol{I(\g)}$, a shifted domain may
intersect the boundary $B(\g)$.
See Fig.\ \ref{fig:shifted-domains}.

\begin{defi}\label{def:compatible}
Given two positive-type contours $\g,\g'\in X^+_{\ell,\beta}$,
we say that $\g$ and $\g'$ are \emph{compatible} if one of the
following hold:
\begin{enumerate}
\item $\ol{I(\g)}\cap \ol{I(\g')}=\varnothing$, or
\item $S(\g)\se D_i^+(\g')$ for some $i$, or
\item $S(\g')\se D_i^+(\g)$ for some $i$, or
\item at least one of $\g$ or $\g'$ is not a winding contour.
\end{enumerate}
We define
\be
\delta(\g,\g') = \begin{cases} 1 & \text{if $\g,\g'$ are compatible} \\ 0 & \text{otherwise.} \end{cases}
\ee
Finally, we let $\fC^+_{\ell,\beta}\se \fX^+_{\ell,\beta}$ to be the collection of all
pairwise compatible sets of positive-type contours; that is,
$\G=\{\g_1,\dotsc,\g_k\}\in\fX^+_{\ell,\beta}$ belongs to 
$\fC^+_{\ell,\beta}$ if $\prod_{1\leq i<j\leq k} \delta(\g_i,\g_j)=1$.
\end{defi}

A compatible set $\G$ is generally itself not admissible, since compatible contours may overlap but admissible contours may not.
Intuitively, one obtains an admissible set  of contours from
a compatible set by `shifting back' the appropriately shifted domains
$D_i^+(\g)$ and the contours they contain.  For nested contours, the
shift is performed iteratively; see Fig.\ \ref{fig:shifting-back}.

\begin{figure}[htb]\center
\includegraphics[scale=.65]{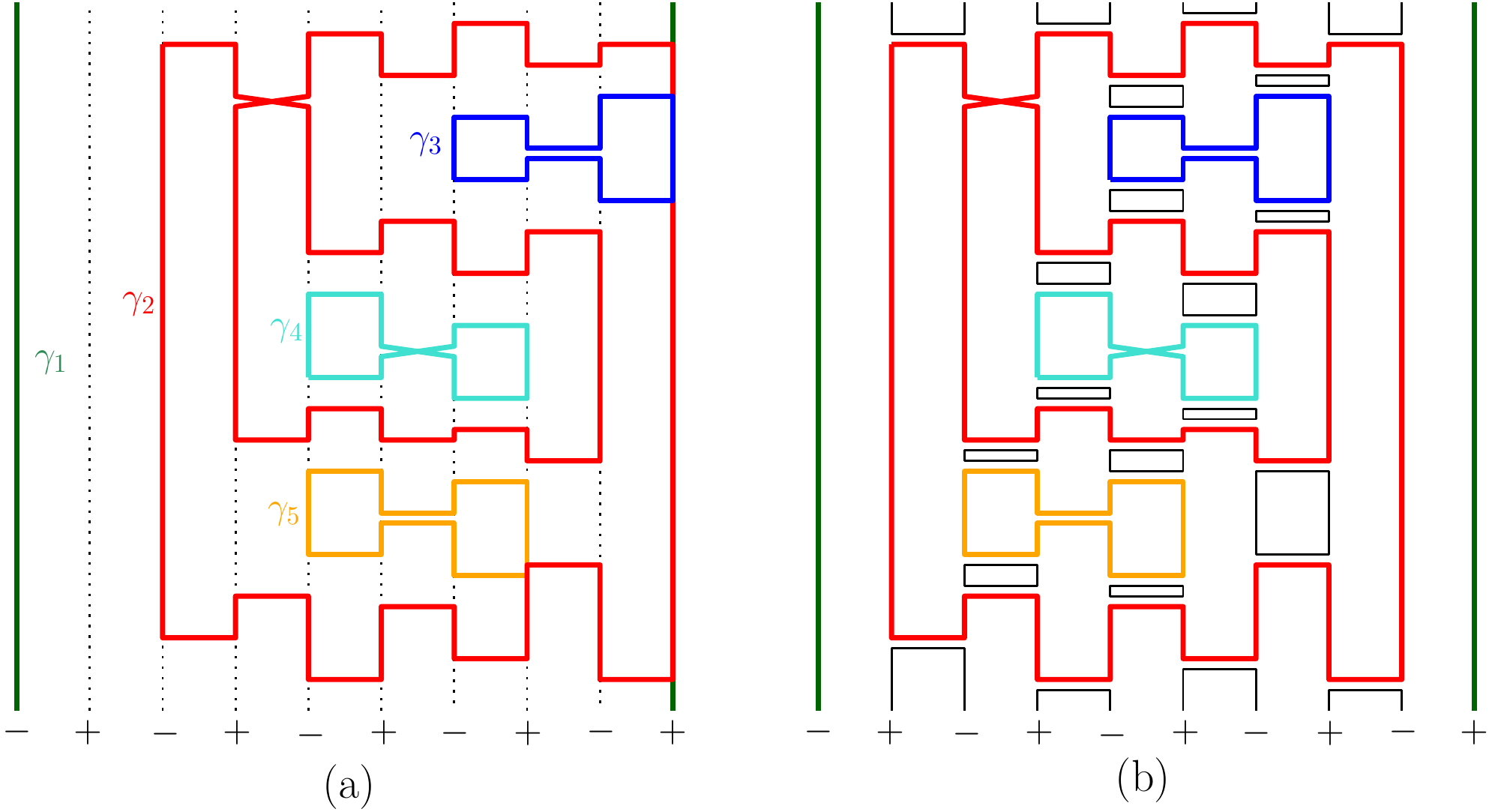}
\caption{(a) A compatible set of contours 
$\G=\{\g_1,\g_2,\g_3,\g_4,\g_5\}$.  We have, for example,
$\s_\G(\g_3)=2$ since $\g_3$ lies in a shifted domain of both $\g_1$
and $\g_2$, while $\s_\G(\g_4)=1$ since it lies in a shifted domain of
$\g_1$ only.  (b) The admissible set $\S(\G)$.
} 
\label{fig:shifting-back}
\end{figure}

 More formally,
define the \emph{shift}
$\S:\fC^+_{\ell,\beta}\to \fX_{\ell,\beta}$ as follows.  First, given 
$\G\in \fC^+_{\ell,\beta}$ and $\g\in\G$, write 
$\s_\G(\g)$ for the number of contours $\g'\in\G\setminus\{\g\}$
such that $\g\se D^+_i(\g')\neq D_i(\g')$.  
This represents the number of times $\g$ is shifted to the right in order
to obtain the compatible set $\G$ from an admissible set of contours.  
We define
\be
\S(\G)=\{\g-(\s_\G(\g),0):\g\in\G\}.
\ee

\begin{lem}\label{lem:pairwiseComp}
The shift $\S$ is a bijection from $\fC^+_{\ell,\beta}$, the
collection of compatible sets of contours, to $\fA_{\ell,\beta}$,
the collection of admissible sets of contours. 
\end{lem}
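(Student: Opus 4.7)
The plan is to construct an explicit candidate inverse $\Phi \colon \fA_{\ell,\beta} \to \fC^+_{\ell,\beta}$ and verify that $\Phi$ and $\S$ are mutual inverses. Given $\G^a \in \fA_{\ell,\beta}$ and $\g \in \G^a$, let $\tau_{\G^a}(\g)$ denote the number of strict ancestors $\g'$ of $\g$ in $\G^a$ that are of negative type; by admissibility condition (3) of Lemma~\ref{lem:admissibility}, this equals the number of steps on the immediate-ancestor chain from $\g$ outward whose enclosing domain is of negative type. I would then set
\be
\Phi(\G^a) = \{\g + (\tau_{\G^a}(\g),0) : \g \in \G^a\}.
\ee

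\textbf{Step 1: $\Phi(\G^a) \in \fC^+_{\ell,\beta}$.} Translating a contour one unit horizontally swaps its type, because every vertical segment of its boundary reverses orientation under the canonical orientation of $\overline V_{\ell,\beta}$. An induction on the depth in the ancestor chain, using admissibility conditions (1) and (3) of Lemma~\ref{lem:admissibility} for the base case and inductive step, shows that every element of $\Phi(\G^a)$ is of positive type. For pairwise compatibility, fix $\g_1,\g_2\in\G^a$ and invoke admissibility condition (2): if $\ol{I(\g_1)}\cap\ol{I(\g_2)}=\varnothing$, compatibility condition (1) of Definition~\ref{def:compatible} holds; otherwise, without loss of generality $\g_1\prec\g_2$, and a direct check shows that $\g_1^+ := \g_1+(\tau_{\G^a}(\g_1),0)$ lies inside the appropriately shifted domain $D^+_i(\g_2^+)$ of $\g_2^+ := \g_2 + (\tau_{\G^a}(\g_2),0)$ containing the chain of descent, giving compatibility condition (2) or (3). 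Compatibility condition (4) is immediate from admissibility condition (4).

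\textbf{Step 2: $\S$ and $\Phi$ are mutual inverses.} For $\g\in\G^a$ and $\g^+:=\g+(\tau_{\G^a}(\g),0)$, by the setup of Step~1, $\s_{\Phi(\G^a)}(\g^+)$ counts precisely those ancestors $\g'^+$ whose appropriately shifted domain containing $\g^+$ differs from the unshifted domain, which occurs exactly when the corresponding unshifted domain in $\G^a$ was of negative type. Hence $\s_{\Phi(\G^a)}(\g^+) = \tau_{\G^a}(\g)$, and $\S\circ\Phi = \mathrm{id}_{\fA_{\ell,\beta}}$. In the other direction, given $\G\in\fC^+_{\ell,\beta}$ I would verify that $\S(\G)$ satisfies the four conditions of Lemma~\ref{lem:admissibility} one by one, and observe that the same type-flipping analysis gives $\tau_{\S(\G)}(\g-(\s_\G(\g),0)) = \s_\G(\g)$ for each $\g\in\G$, whence $\Phi\circ\S=\mathrm{id}_{\fC^+_{\ell,\beta}}$.

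\textbf{Main obstacle.} The core technical difficulty is bookkeeping the interaction between shifts and domain types: each unit horizontal shift flips the type of every vertical boundary segment, and the recursive definition of $D^+_i$ depends on whether the underlying domain is itself already of positive type. Untangling this requires a careful induction on the nesting depth, tracking which ancestors along the chain contribute to the shift, and checking that shifts applied to deeply nested contours do not accidentally push them out of their intended shifted domain. Once this parity-and-position accounting is under control, both the direction admissibility $\Rightarrow$ compatibility (via $\Phi$) and compatibility $\Rightarrow$ admissibility (via $\S$), as well as the identity $\S\circ\Phi=\Phi\circ\S=\mathrm{id}$, follow from matching the combinatorial counts of negative-type ancestors on either side.
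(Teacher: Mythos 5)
Your overall strategy---writing down an explicit inverse map and checking that both compositions are the identity---is the same as the paper's, which constructs $\S^{-1}$ by iterating over the nesting structure from the outside in. However, your explicit formula for the inverse is wrong. The correct amount by which a contour $\g\in\G^a$ must be shifted is \emph{not} the number of negative-type strict ancestors of $\g$; it is the number of type alternations along the ancestor chain $\g=\g_0\prec\g_1\prec\dots\prec\g_k$ (with $\g_k$ external), i.e.\ the number of indices $j$ with $\mathrm{type}(\g_{j-1})\neq\mathrm{type}(\g_j)$. What your formula misses is exactly the issue you flag as the ``main obstacle'': each unit shift flips the types of a contour's domains, so whether the step from $\g_{j-1}$ to $\g_j$ contributes an extra unit of shift depends on the type of the relevant domain \emph{after} $\g_j$ has already been shifted, i.e.\ on the accumulated parity, not on the type of $\g_j$ in the original admissible picture. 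Two concrete failures: (i) if $\g$ is a negative-type contour that is an immediate descendant of a positive external contour, then $\tau_{\G^a}(\g)=0$, so $\Phi$ does not move $\g$ at all and $\Phi(\G^a)$ contains a negative-type contour, contradicting your Step 1; (ii) for a chain with $\g_0$ positive, $\g_1$ negative, $\g_2$ positive and external, the correct shift of $\g_0$ is $2$ (its enclosing domain of $\g_1$ is positive in the original picture but becomes negative once $\g_1$ has been shifted by one unit), whereas $\tau_{\G^a}(\g_0)=1$, and $\g_0+(1,0)$ is of negative type.

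There is a secondary error in the same sentence: the claimed identity between ``number of negative-type strict ancestors'' and ``number of steps whose enclosing domain is of negative type'' is itself false. By admissibility condition (3) the type of the domain of $\g_j$ containing $\g_{j-1}$ equals the type of $\g_{j-1}$, so the second count is the number of negative-type contours among $\g_0,\dots,\g_{k-1}$ (including $\g$ itself), while the first is the number among $\g_1,\dots,\g_k$; these differ whenever $\g$ is of negative type, and neither equals the correct shift. The fix is to define the shifts recursively from the external contours inward, as the paper does: set $\sigma_k=0$, and given that $\g_j+(\sigma_j,0)$ is of positive type, put $\sigma_{j-1}=\sigma_j+1$ if the domain of $\g_j+(\sigma_j,0)$ containing $\g_{j-1}+(\sigma_j,0)$ is of negative type and $\sigma_{j-1}=\sigma_j$ otherwise. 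This telescopes to the alternation count above, preserves positivity of all shifted contours, and makes the verification $\s_{\Phi(\G^a)}(\g+(\sigma_0,0))=\sigma_0$ go through, since the appropriately shifted domains that differ from their unshifted versions are exactly those at which $\sigma$ increments.
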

	
\begin{proof} 
It is easy to construct an inverse $\S^{-1}$ of $\S$ on
$\fA_{\ell,\beta}$, as follows.  Given $\G\in\fA_{\ell,\beta}$,
start with an external contour $\g$ (which is of positive type by
Lemma \ref{lem:admissibility}) and form its appropriately shifted
domains $D_i^+(\g)$.  In doing so, shift also the descendants of $\g$
along with their domains.  Note that all the immediate descendants of
$\g$ are then mapped to positive type contours.  Then iteratively
continue this procedure for the (shifted) immediate descendants of
$\g$.  The resulting set $\S^{-1}(\G)$ then satisfies 
Definition \ref{def:compatible}.  

It remains to show that $\S(\G)\in\fA_{\ell,\beta}$ for all
$\G\in\fC^+_{\ell,\beta}$, i.e.\ that $\S(\G)$ satisfies Lemma
\ref{lem:admissibility}. Compatibility ensures that there is at most one winding contour. It is clear that external contours are of
positive type since they are not shifted.  For $\g,\g'$ with disjoint
interiors, this property is preserved by $\S$;  if
$\g\se D_i^+(\g')$ then the shifting ensures that the images of
$\g,\g'$ under $\S$ satisfy $\g\prec\g'$, while the relative amounts
by which the contours are shifted ensures that the types of 
immediate descendants in $\S(\G)$ coincide with the types of the
relevant domains.
\end{proof}

We close this subsection with a simple lemma about 
counting the amount of
`available space' for  short loops in a configuration $\om$, in terms
of the lengths of the contours.  For $\G\in\fA_{\ell,\beta}$,
we define the \emph{free set} $F(\Gamma)\subseteq\overline E_{\ell,\beta}$ 
as the space-time edges 
where we can add links without modifying the contours in $\G$
or creating new ones.
\begin{lem}\label{lem:freeEdges} 
Let $\G\in\fA_{\ell,\beta}$ be an admissible set of contours. Then
\be\label{eq:FXeq} 
|F(\Gamma)|=|\overline E_{\ell,\beta}^+|
-\tfrac{1}{2}{\textstyle\sum_{\gamma\in\Gamma}|\gamma|.  }
\ee
\end{lem}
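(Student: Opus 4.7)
My plan is to reduce \eqref{eq:FXeq} to a pointwise identity on the time fibres and then prove that identity by induction on $|\Gamma|$, using a local cross-sectional analysis.

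For each $t\in T_\beta$ set $W(t):=\{x\in V_\ell: (x,t)\in S(\Gamma)\}$ and $N(t):=\#\{e\in E_\ell : e\times\{t\}\in F(\Gamma)\}$. The supports of distinct contours are pairwise disjoint subsets of $\overline V_{\ell,\beta}$, and within any single contour the vertical parts of the loop supports are disjoint as well, so $\int_{T_\beta}|W(t)|\,dt=\sum_{\gamma\in\Gamma}|\gamma|$. Since also $|F(\Gamma)|=\int_{T_\beta}N(t)\,dt$, the claim reduces to the pointwise identity
\[
N(t)=|E^+_\ell|-\tfrac12|W(t)|
\]
for almost every $t$. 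Note $|W(t)|$ is always even, since the non-wall vertices must be paired off by short loops for any configuration underlying $\Gamma$ to exist.

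A preliminary step is to characterise $F(\Gamma)$ concretely: by tracing through what happens when a link is added, $e\times\{t\}\in F(\Gamma)$ iff $e=(x,x+1)$ and either (i) $e\in E^+_\ell$ with both $(x,t)$ and $(x+1,t)$ in a common positive-type region (exterior or a positive-type domain), or (ii) $e\in E^-_\ell$ with both in a common negative-type domain. Any other insertion splices two short loops of the ambient background into a single new long loop, hence creates a new contour.

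I would then induct on $|\Gamma|$. The base $\Gamma=\varnothing$ is immediate as $F(\varnothing)=\overline E^+_{\ell,\beta}$. For the inductive step, pick a leaf $\gamma\in\Gamma$ (no descendants in $\Gamma$) and set $\Gamma':=\Gamma\setminus\{\gamma\}$; fix $t$. By Lemma \ref{lem:admissibility} combined with the canonical orientation, the walls $W_\gamma(t)$ of $\gamma$ at time $t$ come in consecutive pairs $(x_L,x_R)$ each bounding one interior domain of $\gamma$. Each pair changes $N(t)$ in two ways. First, each of the two new walls blocks one previously-free matching-orientation edge, contributing $-2$ in total. Second, admissibility forces the new interior domain to have type opposite to the ambient region containing $\gamma$ --- otherwise there would not be enough matching-orientation edges strictly inside to cover the interior vertices with short loops. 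The type flip swaps the free edges inside the domain between $E^+_\ell$ and $E^-_\ell$, and a direct parity count shows that in a slab bounded by $x_L\in V^\dn_\ell$ and $x_R\in V^\up_\ell$ there is exactly one more $E^-_\ell$-edge than $E^+_\ell$-edge strictly inside (and symmetrically for the opposite parity), so the swap contributes $+1$. Hence each pair contributes a net change of $-1$ to $N(t)$, and summing over all $|W_\gamma(t)|/2$ pairs gives $N(t;\Gamma)-N(t;\Gamma')=-|W_\gamma(t)|/2$. Integrating in $t$ yields $|F(\Gamma)|-|F(\Gamma')|=-\tfrac12|\gamma|$, closing the induction.

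The main obstacle is bookkeeping through nested contours: when $\gamma$ has several interior domains or is itself surrounded by other contours, one must apply the local analysis iteratively and verify that the admissibility conditions of Lemma \ref{lem:admissibility}(3) align the types consistently at each level. Winding contours require a separate but simpler check, since their interiors may be empty and the cross-sectional topology on the cylinder $C_\beta$ differs from that of contractible ones; for these the $-|W_\gamma(t)|/2$ change reduces to just the wall-blocking contribution with no type flip.
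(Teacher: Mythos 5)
Your reduction to the fibre-wise identity $N(t)=|E^+_\ell|-\tfrac12|W(t)|$ is sound and is in fact the heart of the matter, but the induction you build on top of it both is unnecessary and rests on local claims that are false in general. First, the pairing ``the walls of $\gamma$ at time $t$ come in consecutive pairs each bounding one interior domain'' fails: consecutive walls can be adjacent columns (no domain between them), the region between two consecutive walls at time $t$ can belong to the \emph{exterior} $E(\gamma)$ rather than to a domain, and a positive-type contour can have domains of \emph{positive} type (see Fig.~\ref{fig:domains}), so ``admissibility forces the new interior domain to have type opposite to the ambient region'' is simply not true. Your ``$-2$ from wall-blocking $+1$ from the type flip'' ledger happens to give the right total in the simplest picture, but it does not survive adjacent walls, same-type domains, or exterior slabs between walls; you flag the nested case as an ``obstacle'' without resolving it, and the winding case is asserted rather than argued. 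Second, the whole induction is redundant: once your preliminary characterisation of $F(\Gamma)$ is established (and that step, which you dispatch with ``by tracing through what happens when a link is added'', is really the only thing that needs proving), the non-contour columns at each time $t$ in each region are forced into a unique perfect matching by adjacent pairs, each pair being exactly one free edge; hence $2N(t)=|V_\ell|-|W(t)|=2|E^+_\ell|-|W(t)|$ immediately, for \emph{every} admissible $\Gamma$ at once, with no need to add contours one at a time.

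For comparison, the paper's proof is a two-line global count: every point of $\overline V_{\ell,\beta}$ lies either on a contour or on a short loop, the total vertical length of the short loops equals $2|F(\Gamma)|$ (each free space-time edge is the shadow of one short loop, which has two vertical legs), and $|\overline V_{\ell,\beta}|=2|\overline E^+_{\ell,\beta}|$; adding these gives \eqref{eq:FXeq}. If you want to keep your fibre-wise formulation, replace the induction by the direct matching argument above; if you want to keep the induction, you must first prove the correct cross-sectional structure of a contour (alternating parities of walls, possible exterior slabs, domains of either type), at which point the direct count is shorter anyway.
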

\begin{proof}
We need to show that
$2|\overline E_{\ell,\beta}^+|=2|F(\Gamma)|+
{\textstyle \sum_{\gamma\in\Gamma}|\gamma|.}$
Note that $2|\overline E_{\ell,\beta}^+|=|\overline V_{\ell,\beta}|$ and that 
$2|F(\Gamma)|$ equals the total length of all the short loops.
But any point in $\overline V_{\ell,\beta}$ lies either on a contour or on a short
loop, thus 
$|\overline V_{\ell,\beta}|=2|F(\Gamma)|+
{\textstyle \sum_{\gamma\in\Gamma}|\gamma|},$
as required.
\end{proof}

\subsection{Decomposition of $H(\omega)$}

Recall from \eqref{eq:Hamiltonian} the quantity
$-H(\omega)=\mathcal L(\omega)-|\omega_{\dbar}|$.
We now show that $H(\om)$
can be decomposed as a sum over
contours and we prove bounds on the summands.
To this end, for a loop $l$ let $\cT(l)$ denote the number of
\emph{turns} that $l$ makes;  symbolically
$\cT(l)=\#\dbartop+\#\dbarbot$.  For a contour $\g$, write $\cT(\g)$
for the total number of U-turns of all loops in $\g$.
Next define the function $h:X_{\ell,\beta}\to\mathbb{Z}$
by 
\be\label{eq:h-def}
h(\g)=\mathcal{L}(\g)-\tfrac12 \mathcal{T}(\g)
\ee
where $\mathcal{L}(\g)$ denotes the number of loops in the contour $\g$.

\begin{lem}\label{lem:Hh}
For $\omega\in\Omega_{\ell,\beta}$ with contours 
$\G=\Gamma(\om)$ we have
$-H(\omega)=\sum_{\gamma\in\Gamma}h(\gamma)$.
\end{lem}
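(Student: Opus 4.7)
The plan is to decompose both sides of the asserted identity according to the partition of loops in $\omega$ into \emph{short} loops (those not belonging to any contour) and \emph{long} loops (those that do). The latter are partitioned among the contours in $\Gamma$, because contours are by definition maximally connected families of long loops; consequently $\mathcal{L}(\omega) = \mathcal{L}_{\mathrm{short}}(\omega) + \sum_{\gamma \in \Gamma}\mathcal{L}(\gamma)$. I will account for the double-bars via the turns they create.

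The first observation I would use is that each double-bar is the site of precisely two U-turns: one cap $\dbartop$ (entered from below) and one cup $\dbarbot$ (entered from above). Each of these two turns is traversed by a unique loop, so summing the U-turn counts over all loops gives $\sum_{l}\mathcal{T}(l) = 2\#\omega_{\dbar}$. I would then split this as $2\#\omega_{\dbar} = \sum_{l\ \mathrm{short}}\mathcal{T}(l) + \sum_{\gamma\in\Gamma}\mathcal{T}(\gamma)$.

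The substantive step is to show that every short loop $l$ satisfies $\mathcal{T}(l) = 2$. A non-winding loop must contain at least one U-turn, and every U-turn uses a double-bar, which connects two adjacent vertices; it follows that a short loop visits exactly two neighbouring vertices, say $\{x,x+1\}$. Write $T(t)$ for the time coordinate of the loop; then $T$ is a piecewise-$\pm 1$ closed path in $T_\beta$, which is null-homotopic because the loop is not winding. Injectivity of the trajectory $(v(t),T(t))$ together with the two-vertex constraint forces the preimage of each generic level $\tau$ under $T$ to have cardinality at most two. A closed piecewise-$\pm 1$ path with equally many local maxima and minima and obeying this level-set bound can have only one local maximum and one local minimum: a case analysis on the ordering of four or more extremal values shows that any intermediate level is necessarily crossed at least four times, violating the bound. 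Hence each short loop contributes exactly one cap and one cup, and $\sum_{l\ \mathrm{short}}\mathcal{T}(l) = 2\,\mathcal{L}_{\mathrm{short}}(\omega)$.

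Assembling these identities yields $\#\omega_{\dbar} = \mathcal{L}_{\mathrm{short}}(\omega) + \tfrac12\sum_{\gamma\in\Gamma}\mathcal{T}(\gamma)$, and substituting into $-H(\omega) = \mathcal{L}(\omega) - \#\omega_{\dbar}$ makes the short-loop contributions cancel, leaving exactly $\sum_{\gamma\in\Gamma}\bigl[\mathcal{L}(\gamma) - \tfrac12\mathcal{T}(\gamma)\bigr] = \sum_{\gamma\in\Gamma} h(\gamma)$. The main obstacle is the geometric claim that short loops have exactly two U-turns; everything else is straightforward bookkeeping of contributions from each loop type.
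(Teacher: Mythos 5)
Your argument is correct and is essentially the paper's own proof: the paper likewise uses that each double-bar accounts for exactly two turns to write $-H(\omega)=\sum_l\bigl(1-\tfrac12\mathcal{T}(l)\bigr)$, and then concludes by the observation (stated there without detail) that short, non-winding loops make exactly two turns, which is precisely the geometric fact you elaborate. One small overstatement in that elaboration: when the lifted height function has six or more extrema it is \emph{not} true that every intermediate level is attained at least four times, only that \emph{some} generic level is (e.g.\ cut the domain circle at the global maximum and minimum and note that a non-monotone half attains some value at least three times, the other half at least once) --- and that weaker statement already contradicts your level-set bound of two, so the proof stands.
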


\begin{proof}
Since every double-bar of $\om$ accounts for 
exactly two turns (of either one
or two loops), we have
\be
-H(\om)=\sum_l \big(1-\tfrac12\cT(l)\big)
\ee
where the sum is over all loops $l$ in the configuration $\om$.
The result now follows from the observation that short,
non-winding loops make exactly two turns.
\end{proof}

Write $\#\gamma_{\dbar}$ for the number of double-bars visited by $\g$
and $\#\gamma_{\cross}$ for the number of crosses.

\begin{lem}\label{lem:wBnds}
For contours $\g$ without crosses, the function
$h:X_{\ell,\beta}\to\mathbb{Z}$  satisfies 
\be\label{wBnds-rhs}
h(\gamma)\leq -\tfrac13\#\gamma+
2\ell \one\{\gamma\text{ has a spanning segment}\}.
\ee
\end{lem}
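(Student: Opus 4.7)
Since $\gamma$ has no crosses, every link of $\gamma$ is a double bar, and each such double bar contributes exactly one $\dbartop$ and one $\dbarbot$ to the U-turn counts of the loops of $\gamma$. Hence $\mathcal{T}(\gamma) = 2\#\gamma$, and via \eqref{eq:h-def} the desired bound \eqref{wBnds-rhs} is equivalent to
\[
\mathcal{L}(\gamma) \leq \tfrac{2}{3}\#\gamma + 2\ell\,\one\{\gamma \text{ has a spanning segment}\}.
\]
My plan is to first establish a lower bound $\mathcal{T}(l) \geq 6$ on each ``generic'' loop $l \in \gamma$, and then to bound the number of exceptional loops by counting space-time vertices at a single time slice.

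The topological core of the argument is the claim that any loop $l \in \gamma$ without a spanning segment must be non-winding. I would decompose the winding number of $l$ around the time cycle $T_\beta$ as a sum of winding numbers of its segments (in the sense of the definition preceding Fig.~\ref{fig:winding}). A non-spanning segment misses some height $t^\star \in T_\beta$ and therefore lifts to a bounded path in the universal cover, so its winding number vanishes; consequently a winding loop must contain at least one spanning segment, and the contrapositive is the claim. Since $l \in \gamma$ is a long loop, being non-winding forces it to visit at least $3$ distinct vertices. Moreover, being non-winding forbids $l$ from containing a full vertical circle $\{v\} \times T_\beta$, so each vertical segment of $l$ is finite and contributes $2$ U-turns at its endpoints; with at least $3$ such vertical segments one obtains $\mathcal{T}(l) \geq 6$.

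Finally, let $p$ denote the number of loops in $\gamma$ carrying at least one spanning segment, so that $p = 0$ unless $\gamma$ has a spanning segment. Every spanning segment passes through height $t=0$, and at $t=0$ each of the $|V_\ell|=2\ell$ vertices lies on exactly one segment of the configuration, so there are at most $2\ell$ spanning segments in total, giving $p \leq 2\ell$. Applying the previous paragraph to the $\mathcal{L}(\gamma)-p$ loops with no spanning segment yields
\[
6\bigl(\mathcal{L}(\gamma)-p\bigr) \leq \sum_{l\in\gamma} \mathcal{T}(l) = 2\#\gamma,
\]
hence $\mathcal{L}(\gamma) \leq \tfrac{1}{3}\#\gamma + p$. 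Combining with the bound on $p$ gives
\[
h(\gamma) = \mathcal{L}(\gamma)-\#\gamma \leq -\tfrac{2}{3}\#\gamma + p \leq -\tfrac{1}{3}\#\gamma + 2\ell\,\one\{\gamma \text{ has a spanning segment}\},
\]
which is even stronger than \eqref{wBnds-rhs}. The only real obstacle is the topological winding-number decomposition for segments on the cylinder $C_\beta$; the remaining ingredients are simple bookkeeping about vertical segments and time-zero vertices.
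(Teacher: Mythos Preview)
Your claim that $\mathcal{T}(\gamma)=2\#\gamma$ is incorrect, and it propagates to a final bound $h(\gamma)\leq -\tfrac23\#\gamma+p$ that is actually false. A double-bar counted in $\#\gamma$ is one traversed by \emph{some} loop of $\gamma$, but the other side of that double-bar may belong to a short loop outside $\gamma$, in which case it contributes only one U-turn to $\mathcal{T}(\gamma)$, not two. The correct relation is only $\#\gamma\leq\mathcal{T}(\gamma)\leq 2\#\gamma$. For a concrete counterexample, take the minimal non-winding contour consisting of a single long loop with exactly six U-turns and six double-bars, each traversed on one side only (the other side being on a short loop). Then $\mathcal{L}(\gamma)=1$, $\mathcal{T}(\gamma)=6$, $\#\gamma=6$, so $h(\gamma)=1-3=-2=-\tfrac13\#\gamma$, violating your bound $-\tfrac23\#\gamma=-4$. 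The paper in fact remarks that the constant $-\tfrac13$ is sharp precisely for this example.

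The fix is simple and recovers the paper's argument: keep your correct inequality $6(\mathcal{L}(\gamma)-p)\leq\mathcal{T}(\gamma)$, but instead of substituting $\mathcal{T}(\gamma)=2\#\gamma$, write
\[
h(\gamma)=\mathcal{L}(\gamma)-\tfrac12\mathcal{T}(\gamma)
\leq \tfrac16\mathcal{T}(\gamma)+p-\tfrac12\mathcal{T}(\gamma)
=-\tfrac13\mathcal{T}(\gamma)+p
\leq -\tfrac13\#\gamma+p,
\]
using $\mathcal{T}(\gamma)\geq\#\gamma$ in the last step. This is essentially how the paper proceeds, phrased there via $r(\gamma)=-\tfrac13\mathcal{T}(\gamma)+\mathcal{W}(\gamma)$. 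A smaller issue: your argument that $\mathcal{T}(l)\geq 6$ via ``at least $3$ vertical segments, each contributing $2$ U-turns'' double-counts, since in a closed loop the maximal vertical pieces and the U-turns alternate and are therefore equinumerous; three vertical pieces gives only three U-turns. The conclusion $\mathcal{T}(l)\geq 6$ is nonetheless true, but one must actually check that four U-turns force the loop to visit only two distinct vertices.
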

Note that the constant $-\tfrac13$ is tight for the smallest
non-winding contours with six double-bars and no crosses,
while for larger contours the constant may be taken closer to
$-\tfrac12$. 
As to the indicator function,
we will see that contours containing spanning
segments become very rare asymptotically.
\begin{proof}
Write  $\cW(l)$ for the number of winding segments in $l$
and $\cW(\g)=\sum_{l\in\g}\cW(l)$.   We claim that
it suffices to show that $h(\g)\leq r(\g)$ where 
\be
r(\g)=-\tfrac13\cT(\g)
+\cW(\g).
\ee
Indeed, $r(\g)$ is bounded above by the right-hand-side of 
\eqref{wBnds-rhs} for the following reasons:
\begin{itemize}
\item double-bars visited twice by $\g$ count twice in $\cT(\g)$ but
  only once in $\#\g=\#\g_{\dbar}$, while those visited once by $\g$ count
  once in both, meaning that $\cT(\g)\geq \#\g$;
\item $\cW(\g)\leq 2\ell \one\{\gamma\text{ has a spanning segment}\}$
since each point of the form $(x,0)\in\overline V_{\ell,\beta}$ is visited
by at most one winding segment.
\end{itemize}
Next, the claimed inequality $h(\g)\leq r(\g)$ is equivalent to:
\be\label{eq:ineq1}
\cT(\g)
+6\cW(\g)\geq 6\cL(\g).
\ee
To establish \eqref{eq:ineq1},
first note that both sides are additive over loops.  Thus it suffices
to show that any long or winding loop $l$ 
satisfies
\be\label{eq:ineq3}
\cT(l)
+6\cW(l)\geq 6.
\ee
If $\cW(l)\geq1$ this is clear, hence we may assume that the
loop is non-winding.  
A long, non-winding loop which traverses
only double-bars necessarily makes at least 6 turns, see Fig.\ \ref{fig:small-long-loop}.
This proves \eqref{eq:ineq3} and hence
the claim.
\end{proof}

\begin{lem}\label{lem:h leq 0}
	For all contours $\g\in X_{\ell,\beta}$ we have 
	\be
		h(\g)\leq 2\ell\one\{\gamma\text{ has a spanning segment}\}. 
	\ee
	In particular $h(\g)\leq 0$ for all non-winding contours.
\end{lem}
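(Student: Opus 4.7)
The plan is to adapt the strategy of Lemma \ref{lem:wBnds} to all contours, including those traversing crosses, via a weaker per-loop inequality. The key input is the identity $h(\g) = \sum_{l \in \g}(1 - \tfrac12 \cT(l))$, which is immediate from the definition of $h$ in \eqref{eq:h-def}.

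The core step is to show that for every loop $l$ belonging to some contour,
\[
1 - \tfrac12 \cT(l) \leq \cW(l),
\]
where $\cW(l)$ denotes the number of spanning segments of $l$. I split by whether $l$ is winding. If $l$ is non-winding, I claim $\cT(l) \geq 2$: indeed, $T'(t) \in \{\pm 1\}$ flips at each double-bar visited by $l$ but is preserved at each cross, so for $l$ to be a closed trajectory $\cT(l)$ must be even; and $\cT(l) = 0$ would force $T'$ to be constant along $l$, making $l$ winding, contrary to assumption. Hence $\cT(l) \geq 2$ and $1 - \tfrac12 \cT(l) \leq 0 \leq \cW(l)$. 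If instead $l$ is winding, it wraps around $T_\beta$ at least once, so it must contain at least one segment that covers all times, i.e.\ a spanning segment, giving $\cW(l) \geq 1 \geq 1 - \tfrac12 \cT(l)$.

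Summing the per-loop bound over $l \in \g$ yields $h(\g) \leq \cW(\g)$. I then reuse the observation already made in the proof of Lemma \ref{lem:wBnds}, namely $\cW(\g) \leq 2\ell \one\{\g \text{ has a spanning segment}\}$: each of the $2\ell$ points $(x,0) \in \overline V_{\ell,\beta}$ lies on at most one spanning segment (as loops are disjoint), so the total number of spanning segments is at most $2\ell$, and this count is zero unless $\g$ carries a spanning segment. Combining the two inequalities gives the main bound.

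For the ``in particular'' assertion, note that a non-winding contour $\g$ consists only of non-winding loops by definition, so each $l \in \g$ satisfies $1 - \tfrac12 \cT(l) \leq 0$ by Step 1, and summing gives $h(\g) \leq 0$. The one subtlety worth flagging is that the parity argument $\cT(l) \geq 2$ for non-winding loops must remain valid in the presence of crosses (which was excluded in Lemma \ref{lem:wBnds}); it does, because crosses are invisible to the count of $T'$-flips, and the only ingredients used are the closure of $l$ together with the fact that a constant-direction trajectory on a periodic time interval winds.
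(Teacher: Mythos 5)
Your proof is correct and follows essentially the same route as the paper's: decompose $h(\g)$ additively over loops, bound each term $1-\tfrac12\cT(l)$ by the number of spanning segments, and use that there are at most $2\ell$ of these. The only difference is that you spell out (via the parity of $\cT(l)$ and the fact that crosses do not flip the vertical direction) the per-loop inequality that the paper dismisses as ``clearly true''.
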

\begin{proof}
	Note that $h(\g)$ is additive over loops $l\in\g$. So since there can be at most $2\ell$ spanning segments, it suffices to show for every loop that $1-{1\over2}(\#\dbartop+\#\dbarbot)\leq \one\{l\text{ has a spanning segment}\}$. This is clearly true.
\end{proof}

\begin{lem}\label{lem:nice new lemma}
	For $\g\in X_{\ell,\beta}$, all $u$ with $|u|\leq 1$, and all $\kappa>0$, we have
\be
n^{h(\g)}|u|^{\#\g_{\cross}} \leq \min(n,|u|^{-\kappa/2})^{-(\frac13 - \kappa) \#\g} n^{2\ell{\SMALL\one}\{\gamma\text{ has a spanning segment}\}}.
\ee
\end{lem}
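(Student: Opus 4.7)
The plan is to reduce the inequality to an extension of Lemma \ref{lem:wBnds} to contours containing crosses, combined with the elementary bound $|u|\leq N^{-2/\kappa}$, where $N:=\min(n,|u|^{-\kappa/2})$. The key observation is that each double-bar and each cross can contribute a factor of at most $N^{-(1/3-\kappa)}$: the double-bars through an $h(\gamma)$-bound, and the crosses through
\be
|u|^{\#\gamma_{\cross}}\leq N^{-(2/\kappa)\#\gamma_{\cross}}\leq N^{-(1/3-\kappa)\#\gamma_{\cross}},
\ee
where the last inequality is $2/\kappa\geq 1/3-\kappa$ for all $\kappa>0$.

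To carry this out, I would first establish the target bound
\be\label{eq:hbound}
h(\gamma)\leq -\tfrac13\#\gamma_{\dbar}+2\ell\,\one\{\gamma\text{ has a spanning segment}\}
\ee
for every contour $\gamma\in X_{\ell,\beta}$ (crosses allowed). Given \eqref{eq:hbound}, the proof finishes in three short steps. Write $I=\one\{\gamma\text{ has a spanning segment}\}$. Using $n\geq N$ and $1/3\geq 1/3-\kappa$, we get $n^{h(\gamma)}\leq n^{-\tfrac13\#\gamma_{\dbar}}\,n^{2\ell I}\leq N^{-(1/3-\kappa)\#\gamma_{\dbar}}\,n^{2\ell I}$. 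Combined with $|u|^{\#\gamma_{\cross}}\leq N^{-(1/3-\kappa)\#\gamma_{\cross}}$ from above, multiplying yields
\be
n^{h(\gamma)}|u|^{\#\gamma_{\cross}}\leq N^{-(1/3-\kappa)(\#\gamma_{\dbar}+\#\gamma_{\cross})}\,n^{2\ell I}=N^{-(1/3-\kappa)\#\gamma}\,n^{2\ell I},
\ee
which is the desired inequality.

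The main obstacle is to establish \eqref{eq:hbound}. For cross-free contours this is immediate from Lemma \ref{lem:wBnds} since $\#\gamma=\#\gamma_{\dbar}$. For general contours one revisits the proof of Lemma \ref{lem:wBnds}: the double-bar counting $\cT(\gamma)\geq\#\gamma_{\dbar}$ survives (crosses contribute neither to $\cT$ nor to $\#\gamma_{\dbar}$), so only the per-loop inequality $\cT(l)+6\cW(l)\geq 6$ for long or winding loops needs verification in the presence of crosses. The natural concern is that a long non-spanning loop might use crosses to visit three or more vertices while having fewer than six double-bar turns. I would rule this out by a structural argument: if a long loop carries so few double-bar turns that \eqref{eq:hbound} would fail, then the deficit forces at least one of its strands to traverse the full time cylinder at some vertex, producing a spanning segment and putting us into the case $I=1$ in which the right-hand side of \eqref{eq:hbound} is trivially large. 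Making this dichotomy precise is where the real combinatorial work lies.
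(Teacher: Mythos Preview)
Your reduction to the bound $h(\gamma)\leq -\tfrac13\#\gamma_{\dbar}+2\ell\,\one\{\gamma\text{ has a spanning segment}\}$ is appealing, but this bound is \emph{false} for contours with crosses, so the approach cannot be completed as stated. Here is a concrete counterexample on vertices $0,1,2,3$: place double-bars on edge $(0,1)$ at heights $0$ and $4$ and on edge $(2,3)$ at heights $0$ and $4$, and crosses on edge $(1,2)$ at heights $1$ and $3$ (with $\beta$ large). These six links support exactly two long contractible loops sharing both crosses: $l_1$ traces $0\to1\to2\to1\to0$ using the two $(0,1)$-double-bars for its U-turns, and $l_2$ traces $2\to3\to2\to1\to2$ using the two $(2,3)$-double-bars. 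The contour $\gamma=\{l_1,l_2\}$ has $\cL(\gamma)=2$, $\cT(\gamma)=4$, hence $h(\gamma)=0$, while $\#\gamma_{\dbar}=4$ and there is no spanning segment; thus $h(\gamma)=0>-\tfrac43=-\tfrac13\#\gamma_{\dbar}$. The same example defeats your proposed dichotomy already at the per-loop level: $l_1$ is long, non-spanning, and has only $\cT(l_1)=2$ double-bar turns.

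The paper circumvents this obstruction by \emph{not} trying to extend Lemma~\ref{lem:wBnds} to contours with crosses. Instead it splits into two regimes according to whether $\#\gamma_{\cross}$ exceeds the small threshold $\tfrac\kappa2(\tfrac13-\kappa)\#\gamma$. When crosses are abundant, the crude bound $h(\gamma)\leq 0$ from Lemma~\ref{lem:h leq 0} together with the factor $|u|^{\#\gamma_{\cross}}$ alone yields the claim. When crosses are scarce, one \emph{deletes} them all from $\gamma$: this changes $h$ by at most $\#\gamma_{\cross}$ and creates at most $\#\gamma_{\cross}+1$ new short loops, so the surviving cross-free contours still carry essentially all the double-bars and Lemma~\ref{lem:wBnds} applies directly to them. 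The small loss from the deletion is absorbed by the slack $\kappa$ in the exponent. In short, the paper trades an (unavailable) structural inequality for contours with crosses for an easy perturbative argument, at the price of the parameter $\kappa$.
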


\begin{proof}
If $\g$ has no spanning segment and $\#\g_{\cross} \geq {\kappa\over2} (\frac13-\kappa) \#\g$, the claim follows from $h(\g) \leq 0$ (Lemma \ref{lem:h leq 0}). Now consider the case where $\g$ has no spanning segment and $\#\g_{\cross} \leq {\kappa\over2} (\frac13-\kappa) \#\g$. If $\#\g<({\kappa\over2}({1\over3}-\kappa))^{-1}$, then $\#\g_{\cross}=0$ and thus we may apply Lemma \ref{lem:wBnds} to get the desired bound in this case. So assume now that $1\leq{\kappa\over2}({1\over3}-\kappa)\#\g$ (and still that $\g$ has no spanning segment and $\#\g_{\cross} \leq {\kappa\over2} (\frac13-\kappa) \#\g$). 

Let  $\G$ denote the collection of
contours and small loops obtained by removing all crosses from
$\g$, and let $m$ denote the number of small loops in $\G$.
Since the removal of a cross can only create at most one more 
loop, we have that $m\leq\#\g_{\cross}+1\leq {\kappa\over2} (\frac13-\kappa) \#\g+1\leq \kappa (\frac13-\kappa) \#\g$,
and that
	\be\label{eq:h of gamma leq h gamma bar}
		h(\g)\leq h(\G)+\#\g_{\cross}\leq h(\G) + \frac{\kappa}{2} (\tfrac13-\kappa) \#\g.
	\ee
	Since every short loop uses at most two double bars, the
        number of double bars belonging to contours of $\G$ is
at least
\be
\#\g_{\dbar} - 2m \geq \bigl( 1 - {\kappa\over2} (\tfrac13-\kappa) \bigr) \#\g - 2 \kappa (\tfrac13-\kappa) \#\g = \bigl( 1-\tfrac52\kappa(\tfrac13-\kappa) \bigr) \#\g.
\ee
Applying Lemma \ref{lem:wBnds}, this allows us to conclude that 
	\be\label{eq:h of gamma bar}
		h(\G) \leq -\tfrac13 \big(1 - \tfrac52\kappa (\tfrac13-\kappa) \big) \#\g.
	\ee
	Combining \eqref{eq:h of gamma leq h gamma bar} 
and \eqref{eq:h of gamma bar} we conclude that 
\be
h(\g) \leq -\big[ \tfrac13 -3\kappa (\tfrac13-\kappa) \big] \#\g
\leq -(\tfrac13-\kappa) \#\g.
\ee
It remains to show the claim for $\g$ with a spanning segment. To this end, consider $\om$ obtained as follows: Denote by $\om_0$ a configuration of links such that its set of contours $\G(\om_0) = \{\g\}$. Now add $|E^+_\ell|=\ell$ double bars at the same height, exactly one per column in $E^+_\ell$. Denote this configuration by $\om$ and note that $\G(\om)$ does not contain any winding contours. Observe that the number of crosses is unchanged and we added $\ell$ links, hence changing the number of loops by at most $\ell$. Using these observations and Lemma \ref{lem:Hh} we thus get 
\be
h(\g)=-H(\om_0)=-H(\om)+\mathcal E = \sum_{\g'\in\G(\om)}h(\g') + \mathcal E,
\ee
where $\mathcal E$ is an error that is bounded by $|\mathcal E|\leq 2\ell$ and all $\g'\in\G(\om)$ are non-winding so that the previous bounds apply. This concludes the proof. 
\end{proof}

\begin{figure}[htb]\center
\includegraphics[scale=.5]{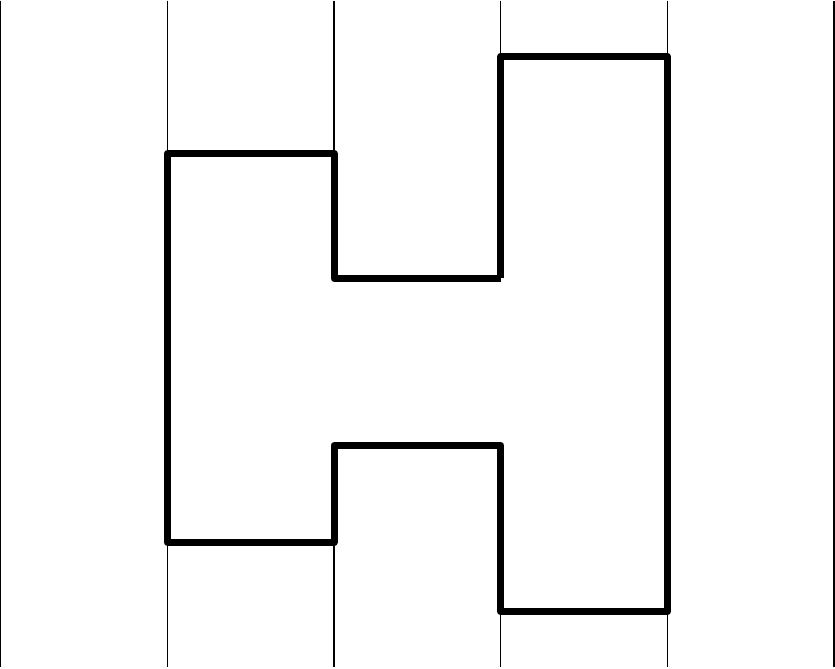}
\caption{
A long, non-winding loop makes at least 6 turns.
}\label{fig:small-long-loop}
\end{figure}

\section{Proof of dimerization}
\label{sec dimerisation}

\subsection{Setting of the cluster expansion}

We summarize the main results of the method of cluster expansion as we need it. The following setting and theorem was proposed in \cite{ueltschi04}, extending the results of \cite{KP} to the continuous setting and general repulsive interactions.

Let $\Gamma$ be a measurable space, $\eta$ a complex measure on $\Gamma$ such that $|\eta|(\Gamma) < \infty$, where $|\eta|$ is the total variation (absolute value) of $\eta$. Let $\zeta$ be a symmetric function $\Gamma \times \Gamma \to \bbC$ such that $|1+\zeta(\gamma,\gamma')| \leq 1$ for all $\gamma, \gamma' \in \Gamma$. Define the partition function $Z$ by
\be
Z = \sum_{k\geq0} \frac1{k!} \int\dd\eta(\gamma_1) \dots \int\dd\eta(\gamma_k) \prod_{1\leq i < j \leq k} \big( 1 + \zeta(\gamma_i,\gamma_j) \bigr).
\ee
Finally, define the cluster function
\be
\varphi(\gamma_1,\dots, \gamma_k) = \begin{cases} 1 & \text{if } k=1, \\ \frac1{k!} \sum_G \prod_{\{i,j\} \in G} \zeta(\gamma_i,\gamma_j) & \text{otherwise,} \end{cases}
\ee
where the sum is over {\it connected graphs} of $k$ elements, and the product is over the edges of $G$. Then we have the following expressions and estimates.

\begin{theorem}
\label{thm clexp}
Assume that there exist functions $a,b : \Gamma \to [0,\infty)$ such that for all $\gamma \in \Gamma$, we have the following Koteck\'y-Preiss criterion
\be
\label{KP crit}
\int\dd|\eta|(\gamma') |\zeta(\gamma,\gamma')| \e{a(\gamma') + b(\gamma')} \leq a(\gamma).
\ee
(Also, assume that $\int\dd|\eta|(\gamma) \e{a(\gamma)+b(\gamma)} < \infty$.) Then we have the following.
\begin{itemize}
\item[(a)] The partition function is equal to
\[
Z = \exp\biggl\{ \sum_{k\geq1} \int\dd\eta(\gamma_1) \dots \int\dd\eta(\gamma_k) \; \varphi(\gamma_1,\dots,\gamma_k) \biggr\},
\]
where the combined sum and integral converges absolutely.
\item[(b)] For all $\gamma_1 \in \Gamma$,
\[
1 + \sum_{k\geq2} k \int\dd|\eta|(\gamma_2)| \dots \int\dd|\eta|(\gamma_k) \Bigl( \sum_{i=1}^k |\zeta(\gamma,\gamma_i)| \Bigr) |\varphi(\gamma_1,\dots,\gamma_k)| \e{b(\gamma_1) + \dots + b(\gamma_k)} \leq \e{a(\gamma_1)}.
\]
\item[(c)] For all $\gamma \in \Gamma$,
\[
\sum_{k\geq1} \int\dd|\eta|(\gamma_1)| \dots \int\dd|\eta|(\gamma_k) \Bigl( \sum_{i=1}^k |\zeta(\gamma,\gamma_i)| \Bigr) |\varphi(\gamma_1,\dots,\gamma_k)| \e{b(\gamma_1) + \dots + b(\gamma_k)} \leq a(\gamma).
\]
\end{itemize}
\end{theorem}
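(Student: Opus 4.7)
I would follow the classical inductive proof of the Kotecký--Preiss convergence criterion, adapted to the continuous setting with complex signed measures; this is the approach of \cite{ueltschi04}, from which the statement is taken. The central estimate is (b); once it is in hand, (c) follows by the same method without the tagged factor (or by integrating the inequality of (b) against $\dd|\eta|(\gamma_1)\, |\zeta(\gamma,\gamma_1)|\, e^{b(\gamma_1)}$ and applying (\ref{KP crit})), while (a) reduces to a formal combinatorial rearrangement whose absolute convergence is supplied by (c).

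\textbf{Proof of (b).} I would induct on a cutoff $N \geq 1$, showing that
\[
R_N(\gamma_1) := 1 + \sum_{k=2}^{N} k \int \prod_{i=2}^{k} \dd|\eta|(\gamma_i) \Bigl( \sum_{i=1}^{k} |\zeta(\gamma_1,\gamma_i)| \Bigr) |\varphi(\gamma_1,\ldots,\gamma_k)| \, e^{b(\gamma_1)+\cdots+b(\gamma_k)} \leq e^{a(\gamma_1)}.
\]
The base case $N=1$ reads $1 \leq e^{a(\gamma_1)}$ and is immediate. For the induction step, the key tool is a tree-graph identity bounding the sum $\sum_{G} \prod_{\{i,j\}\in G} \zeta(\gamma_i,\gamma_j)$ (which defines $k!\,\varphi$) by a sum over spanning trees with signed weights whose absolute values are at most one---here the hypothesis $|1+\zeta|\leq 1$ absorbs the non-tree factors. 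Choosing a leaf $k$ of each tree, incident to some vertex $i^\ast$, I would integrate out $\gamma_k$; the inequality (\ref{KP crit}) then produces a factor bounded by $a(\gamma_{i^\ast})$. Summing over the position of the leaf and reorganising gives a recursion that closes onto $R_{N-1}(\gamma_1) \leq e^{a(\gamma_1)}$ via an exponential generating-function identity of the form $\sum_k \tfrac{1}{k!}(a)^k \leq e^a$.

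\textbf{Parts (a) and (c).} For (c), the argument is parallel to (b) but with the tagged sum $\sum_{i}|\zeta(\gamma,\gamma_i)|$ in (b) replaced by a single factor $|\zeta(\gamma,\gamma_1)|$ in the outermost integration; the induction proceeds identically. For (a), expanding $\prod_{i<j}(1+\zeta(\gamma_i,\gamma_j)) = \sum_{G} \prod_{\{i,j\}\in G} \zeta(\gamma_i,\gamma_j)$ and partitioning $G$ into connected components, then reindexing the resulting sum by the partition structure, gives formally $Z = \exp\bigl(\sum_{k \geq 1} \tfrac{1}{k!} \int \eta^{\otimes k}\, \varphi \bigr)$. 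Part (c), after integrating the outer $\gamma$ against $\dd|\eta|$, supplies absolute convergence of this cluster series, which legitimises all Fubini-type interchanges.

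\textbf{Main obstacle.} The delicate step is the tree-graph inequality for complex-valued $\zeta$. A naive triangle-inequality bound $|\varphi| \leq \tfrac{1}{k!}\sum_G \prod|\zeta|$ is far too wasteful: it counts $\sim k!$ connected graphs rather than the $k^{k-2}$ spanning trees, and it does not exploit the structural assumption $|1+\zeta|\leq 1$. What is needed is a cancellation identity in the Penrose or Brydges--Kennedy form, rewriting the connected-graph sum as a sum over spanning trees carrying weights dominated pointwise by $1$. With such an identity in place, the Kotecký--Preiss criterion (\ref{KP crit}) matches the induction exactly; adapting the identity to the signed/complex measure $\eta$ then merely requires keeping track of $|\eta|$ throughout, since the hypothesis and all targeted bounds are stated in terms of the total variation.
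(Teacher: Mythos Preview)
The paper does not prove this theorem at all: immediately after the statement it writes ``This theorem can be found in \cite{ueltschi04}, see Theorems 1 and 3 there, as well as Eqs (18) and (19),'' and moves on. So there is nothing to compare your argument against within the paper itself. Your sketch is the standard inductive Koteck\'y--Preiss proof (in the continuous/complex-measure form of \cite{ueltschi04}), and the ingredients you name---the tree-graph inequality exploiting $|1+\zeta|\le 1$, induction on the number of polymers, peeling off a leaf and applying the criterion \eqref{KP crit}---are exactly what that reference does. In short, your proposal supplies precisely the proof the paper outsources.
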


This theorem can be found in \cite{ueltschi04}, see Theorems 1 and 3 there, as well as Eqs (18) and (19). Notice that 
the term $b(\gamma)$ is not usually part of the Koteck\'y--Preiss criterion and is not needed for convergence of the cluster expansion. But it gives better estimates, see (b) and (c) above, which are most helpful in proving exponential decay.

\subsection{Cluster expansion for the partition function} 

Let us return to our loop model. We start with the partition function \eqref{eq:Z-def}, namely
\be
Z_{\ell,\beta,n,u}=\e{-(1+u)|\overline E_{\ell,\beta}|}
\int_{\Omega_{\ell,\beta}}
\dd\bar\rho_{u}(\omega)
n^{\mathcal L(\omega)-\#\omega_{\dbar}}
\ee
where 
$\dd\bar\rho_{u}(\omega)=
u^{\#\omega_{\cross}} \dd^{\otimes \#\omega}x$
is given in \eqref{eq:unnormalisedPPP}. 
Since we identify contours $\gamma\in X_{\ell,\beta}$ with the links 
they are made up of, $\dd\bar\rho_{u}(\gamma)$ is also well defined.
We define 
\be
	\tilde w(\g)\deq \e{-(1+u)\frac{1}{2}|\gamma|}n^{h(\gamma)}u^{\#\gamma_{\cross}}, 
\ee
where $h(\gamma)$ is defined in \eqref{eq:h-def}. 
Let $\mathcal L(\G)=\{l:\exists \g\in\G:l\in\g\}$ be the set of loops in a (not necessarily admissible) collection of contours $\G$. Let $Y_{\ell,\beta}\subseteq X_{\ell,\beta}$ the set of contours $\g$ (not necessarily admissible) consisting of two adjacent winding loops not traversing any links and let $\mathcal Y_{\ell,\beta}\deq \{\G\se Y_{\ell,\beta}:\g\cap\g'=\emptyset,\;\forall\g\neq\g'\in \G\}$.
Now let 
\be\label{eq:w-def}
w(\gamma)\deq \sum_{\tilde\g\in g(\g)}\tilde w(\tilde \g)(-\e{-2\beta})^{\#\g\setminus\tilde\g\over2}, 
\ee
where 
\be\label{eq:g}
g(\g)=\{\tilde\g\se \g\mid\exists \G'\in\mathcal Y_{\ell,\beta} : \g=\tilde\g\cup\mathcal L(\G')\}
\ee
is the set of contours $\tilde \g$ such that $\g$ can be obtained by adding pairs of adjacent, winding loops not traversing any links (those that come from having an ``empty good column") to $\tilde\g$ and $\#\g\setminus\tilde\g$ denotes the number of loops that are in $\g$, but not in $\tilde\g$ -- necessarily an even number by the definition of $\mathcal Y_{\ell,\beta}$. 

Note that for $\g\in X_{\ell,\beta}^{\rm{nw}}$ we have $g(\g)=\{\g\}$, so $w(\g)=\tilde w(\g)$.

\begin{pro}\label{pro:Z}
We have, for any $u\in\mathbb{R}$,
\[
Z_{\ell,\beta,n,u}  =  \e{-(1+u)|\overline E_\ell^-|}
\sum_{k\geq 0}\frac{1}{k!} \int_{X^+_\ell} \dd\bar\rho_{1}(\gamma_1)
\dots \int_{X^+_\ell} \dd\bar\rho_{1}(\gamma_k) \Bigl( \prod_{i=1}^k w(\gamma_i) \Bigr) \prod_{1\leq i<j\leq  k}\delta(\gamma_i,\gamma_j).
\]
\end{pro}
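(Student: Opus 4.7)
The plan is to derive the proposition in three stages and a final combinatorial resummation: (i) decompose any $\omega$ into contour-links plus free links and integrate out the free part; (ii) use the shift bijection of Lemma \ref{lem:pairwiseComp} to pass from admissible to compatible configurations; and (iii) re-express the weight $\tilde w$ as $w$ to accommodate the winding-contour constraint.

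\emph{Stage 1 (Factorisation over contours).} By Lemma \ref{lem:Hh}, the integrand factorises as $n^{\mathcal L(\omega) - \#\omega_{\dbar}} = \prod_{\gamma \in \Gamma(\omega)} n^{h(\gamma)}$, with no contribution from short loops. Every $\omega \in \Omega_{\ell,\beta}$ splits uniquely as $\omega = \omega_\Gamma \sqcup \omega_F$ where $\omega_\Gamma = \bigcup_{\gamma \in \Gamma(\omega)} \gamma$ collects the links in the contours and $\omega_F$ is an arbitrary configuration on $F(\Gamma)$; by definition of $F(\Gamma)$, adding $\omega_F$ does not alter the contour structure, so the $\omega_F$-integral against $\dd\bar\rho_u$ is the Poisson normalisation $\e{(1+u)|F(\Gamma)|}$. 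Substituting $|F(\Gamma)| = |\overline E_{\ell,\beta}^+| - \tfrac12\sum_\gamma |\gamma|$ from Lemma \ref{lem:freeEdges} and combining with the prefactor $\e{-(1+u)|\overline E_{\ell,\beta}|}$, together with $|\overline E|=|\overline E^+|+|\overline E^-|$, yields the target $\e{-(1+u)|\overline E_{\ell,\beta}^-|}$ along with a factor $\e{-(1+u)|\gamma|/2}$ per contour. Absorbing the cross factor $u^{\#\gamma_{\cross}}$ from $\dd\bar\rho_u(\omega_\Gamma)$ into the definition of $\tilde w(\gamma)$ produces the intermediate identity
\[
Z_{\ell,\beta,n,u} = \e{-(1+u)|\overline E_{\ell,\beta}^-|}\sum_{\Gamma \in \fA_{\ell,\beta}} \int \prod_{\gamma \in \Gamma} \dd\bar\rho_1(\gamma)\, \tilde w(\gamma).
\]

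\emph{Stage 2 (Shift to compatibility).} Lemma \ref{lem:pairwiseComp} supplies a bijection $\Sigma : \fC^+_{\ell,\beta} \to \fA_{\ell,\beta}$ implemented by integer spatial translations of individual contours. Both $\dd\bar\rho_1$ and $\tilde w$ are translation-invariant, so applying $\Sigma^{-1}$ converts the sum over admissible $\Gamma$ into a symmetric $k$-fold integral over $X_{\ell,\beta}^+$ with weight $\prod_{i=1}^k \tilde w(\gamma_i)$ subject to the pairwise compatibility constraint $\prod_{1\leq i<j\leq k}\delta(\gamma_i,\gamma_j)$. This delivers the shape of the formula stated in the proposition, but still with $\tilde w$ rather than $w$.

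\emph{Stage 3 (Resummation; main obstacle).} The final step replaces $\tilde w$ by $w$. The delicate point is that the compatibility rule $\delta$ admits at most one winding contour per configuration, whereas the elements of $\mathcal Y_{\ell,\beta}$ (pairs of adjacent empty winding columns) are themselves winding contours. In an admissible $\Gamma$, empty-column pairs sitting alongside a genuine winding contour are necessarily absorbed into it, and a naive cluster expansion with $\tilde w$ would either miss or over-count these configurations. The formula \eqref{eq:w-def} performs a Möbius-type inversion over the possible decompositions $\gamma = \tilde\gamma \cup \mathcal L(\Gamma')$ with $\Gamma' \in \mathcal Y_{\ell,\beta}$; the factor $(-\e{-2\beta})^{(\#\gamma \setminus \tilde\gamma)/2}$ simultaneously records the lengths of the absorbed empty pairs (each contributing $\e{-2\beta}$ through the Poisson normalisation of Stage 1) and implements the alternating sign required to offset double counting. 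I would verify the identity by expanding each $w(\gamma_i)$ back into $\tilde w(\tilde\gamma_i)$ components, reorganising the sum by first conditioning on the "reduced" contour $\tilde\gamma_i$ and then summing over the attached empty-pair collections $\Gamma'_i$, and checking that the resulting alternating sums cancel exactly against the would-be-violations of compatibility, reproducing the Stage 2 expression. The main difficulty lies here: the at-most-one-winding-contour condition is not pairwise-factorisable, so the whole point of introducing $w$ is to restore a strictly pairwise compatibility structure that a Koteck\'y--Preiss style cluster expansion (as applied in Section 4) can then digest.
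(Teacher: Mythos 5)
Your overall architecture (factorise the integrand over contours via Lemma \ref{lem:Hh}, integrate out the free links, pass from admissible to compatible sets via the shift $\S$, then repackage $\tilde w$ into $w$) is the same as the paper's, but Stage 1 contains a genuine error that the later stages cannot repair. The set of configurations $\omega$ with $\Gamma(\omega)=\Gamma$ is \emph{not} ``links of $\Gamma$ plus an arbitrary configuration on $F(\Gamma)$'': if a column of $E^+_\ell$ lying entirely in the free region carries no link at all, the two adjacent vertical lines form a pair of winding loops, i.e.\ a \emph{new winding contour} not in $\Gamma$. Hence the free-link integral is constrained — every column of $\mathfrak W(\Gamma)$ must contain at least one double-bar — and it evaluates to $\e{(1+u)|F(\Gamma)|}\,(1-\e{-2\beta})^{|\mathfrak W(\Gamma)|}$ (Eq.\ \eqref{eq:FXPoisson}), not to $\e{(1+u)|F(\Gamma)|}$. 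Your intermediate identity $Z=\e{-(1+u)|\overline E^-|}\sum_{\Gamma\in\fA}\int\prod_\gamma \dd\bar\rho_1(\gamma)\,\tilde w(\gamma)$ is therefore false, and since $|\mathfrak W(\Gamma)|$ depends on $\Gamma$ globally (it counts columns untouched by \emph{any} contour), the missing factor does not factorise over contours — which is precisely the problem that the definition \eqref{eq:w-def} of $w$ is designed to solve.

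Consequently your Stage 3 misidentifies the source of the discrepancy between $\tilde w$ and $w$: it is not an over/under-counting issue internal to the compatibility rule $\delta$, but the expansion $(1-\e{-2\beta})^{|\mathfrak W(\Gamma)|}=\sum_{W\se\mathfrak W(\Gamma)}(-\e{-2\beta})^{|W|}$ of the constrained free integral, with each subset $W$ reinterpreted as a collection $\Gamma'\in\mathcal Y_{\ell,\beta}$ of phantom empty-column winding-loop pairs that merge with the (unique) winding contour to form $\gamma=\tilde\gamma\cup\mathcal L(\Gamma')$ (Eq.\ \eqref{eq:freeColsRep}). You do gesture at this (``each contributing $\e{-2\beta}$ through the Poisson normalisation of Stage 1''), but Stage 1 as you wrote it produces no such factor, so the resummation you propose to ``verify'' cannot close starting from your premises. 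A secondary imprecision: the extra links on $F(\Gamma)$ cannot be arbitrary marks either — a cross anywhere creates a long or winding loop (Remark \ref{rk:cross}), so the free configuration consists of double-bars only; this is why $u^{\#\omega_\cross}$ splits cleanly as $\prod_i u^{\#\gamma^i_\cross}$ over contours in the first place.
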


\begin{proof}
First note that
 $d\bar\rho_{u}$ factorises, 
i.e.\ for $\omega_1,\omega_2\in\Omega_{\ell,\beta}$ sharing no links, 
we have $d\bar\rho_{u}(\omega_1\cup\omega_2)=
d\bar\rho_{u}(\omega_1)d\bar\rho_{u}(\omega_2)$. 
In particular, for any admissible set
$\{\g^1,\dotsc,\g^k\}\in\mathfrak{A}_{\ell,\beta}$
of contours,
\be
\dd\bar\rho_{u}(\g^1\cup\dotsc\cup\g^k)=\prod_{i=1}^k \dd\bar\rho_{u}(\g^i).
\ee
Now let $\G_0\in\mathfrak{C}^+_{\ell,\beta}$ denote a
fixed set of \emph{compatible positive-type} contours and let 
\be\label{eq:compatibleExtensions}
\mathcal{A}(\G_0)=\{\omega\in\Omega_{\ell,\beta}:
\G(\om)=\S(\G_0)\}
\ee
denote the set of link-configurations $\om$ that induce the set of contours
$\G_0$ without adding any new contours.
By considering the admissible set $\G(\om)$ and its shift
$\G_0=\S^{-1}(\G(\om))=\{\g^1,\dotsc,\g^k\}$, we conclude that
\be
\begin{split}
Z_{\ell,\beta,n,u} = \e{-(1+u)|\overline E_\ell|}\sum_{k\geq 0} & \frac{1}{k!}
\int_{X^+_{\ell,\beta}} \dd\bar\rho_{1}(\gamma^1)u^{\#\gamma^1_{\cross}}\dots
\int_{X^+_{\ell,\beta}} \dd\bar\rho_{1}(\gamma^k)u^{\#\gamma^k_{\cross}} \\
&\prod_{1\leq i<j\leq k}\delta(\gamma^i,\gamma^j)
\int_{\mathcal{A}(\G_0)}
\dd\bar\rho_{u}\big(\omega\setminus \S(\G_0)\big)n^{\mathcal L(\omega)-\#\omega_{\dbar}}.
\label{eq:Z}
\end{split}
\ee
We also used Remark \ref{rk:cross}, which tells us that crosses are
never `shared' between distinct contours or between a contour and a
short loop.
(Note that in the last integral, we have the measure $\bar\rho_{u}$
rather than $\bar\rho_{1}$ as in the other integrals.)

Next, applying Lemma \ref{lem:Hh} to write 
$\mathcal L(\omega)-\#\omega_{\dbar}=\sum_{i=1}^k h(\g^i)$, we obtain
\be\label{eq:ThetaFact}
\int_{\mathcal{A}(\G_0)}
\dd\bar\rho_{u}\big(\omega\setminus \S(\G_0)\big)
n^{\mathcal L(\omega)-\#\omega_{\dbar}} 
= \prod_{i=1}^k n^{h(\gamma^i)}
\int_{\mathcal{A}(\G_0)}
\dd\bar\rho_{u}\big(\omega\setminus \S(\G_0)\big).
\ee
For $\G \in\fC^+_{\ell,\beta}$ denote by $\mathfrak W(\G)$ the set of columns of $\bar E_{\ell,\beta}$ where adding a double bar at any height would not change the set of contours $\S(\G)$. 
Using Lemma \ref{lem:freeEdges} and recalling that $|T_\beta|=2\beta$, we get
\be
\begin{split}
 \int_{\mathcal{A}(\Gamma_0)}\dd\bar\rho_{u}\big(\omega\setminus \S(\G_0)\big) &
= \e{(1+u)|F(\Gamma(\om))|}
\big(\e{-|T_\beta|}(\e{|T_\beta|}-1)\big)^{|\mathfrak W(\G_0)|}\\
&= \e{(1+u)|\overline E_\ell^+|} \Bigl(\prod_{\gamma\in\Gamma_0}
\e{-(1+u)\frac{1}{2}|\gamma|} \Bigr)
(1-\e{-2\beta})^{|\mathfrak W(\G_0)|}.
\label{eq:FXPoisson}
\end{split}
\ee
Denote by $\g_w(\G)$ the unique winding contour in $\G$, if it exists, and $\emptyset$ otherwise. 
For all $\G\in\fC^+_{\ell,\beta}$ we have
\be
\begin{split}
	(1-\e{-2\beta})^{|\mathfrak W(\G)|} 
	&= \sum_{W\subseteq \mathfrak W(\G)}\prod_{w\in W} (-\e{-2\beta})\\
	&= \sum_{\G'\in \mathcal Y_{\ell,\beta}}\prod_{\g'\in\G'}(-\e{-2\beta})\times 
		\big(
			\one\big\{\g'_w\in\mathcal A_{\ell,\beta}\big\} \prod_{\g \in\G\setminus\{\g_w(\G)\}}\delta(\g,\g'_w)
		\big)\\
	&= \int_{\fX_{\ell,\beta}}\dd\bar\rho_1(\G')\bigg[\prod_{\g'\in\G'}(-\e{-2\beta}\one\{\g'\in Y_{\ell,\beta}\})\bigg]\\
	&\qquad\times\one\big\{\g'_w\in\mathcal A_{\ell,\beta}\big\} \prod_{\g \in\G\setminus\{\g_w(\G)\}}\delta(\g,\g'_w)\prod_{\g,\g'\in\G'}\one\{\g\cap\g'=\emptyset\},
\label{eq:freeColsRep}
\end{split}
\ee
where $\g'_w\equiv\g'_w(\G,\G')\deq \g_w(\G)\cup \mathcal L(\G')$. For the last equation we used that $\bar\rho_1$ is just the counting measure on subsets of $Y_{\ell,\beta}$ since these loops do not traverse any links.
Intuitively this amounts to summing over ``admissible extensions" $\G'$ of some given set of contours $\G$ and assigning a different weight to these extensions. But instead of integrating over one set of contours and then another set of contours that are treated differently, we might also integrate over one set of contours and then decide which weight to give to each part of the contour. More rigorously, we combine Eqs \eqref{eq:Z}, \eqref{eq:ThetaFact}, \eqref{eq:FXPoisson} and \eqref{eq:freeColsRep} to get 
\be
\begin{split}
	Z_{\ell,\beta,n,u}\e{(1+u)|\overline E_\ell^-|} 
	&= \int_{\fC^+_{\ell,\beta}}\dd\bar\rho_1(\G)\bigg(\prod_{\g\in\G}\tilde w(\g)\bigg)(1-\e{-2\beta})^{|\mathfrak W(\G)|}\\
	&= \int_{\fC^+_{\ell,\beta}}\dd\bar\rho_1(\G)\bigg(\prod_{\g\in\G}\tilde w(\g)\bigg)\int_{\fX_{\ell,\beta}}\dd\bar\rho_1(\G')\bigg[\prod_{\g'\in\G'}(-\e{-2\beta}\one\{\g'\in Y_{\ell,\beta}\})\bigg]\\
	&\qquad\times\one\big\{\g'_w\in\mathcal A_{\ell,\beta}\big\} \prod_{\g \in\G\setminus\{\g_w(\G)\}}\delta(\g,\g'_w)\prod_{\g,\g'\in\G'}\one\{\g\cap\g'=\emptyset\}\\
	&= \int_{\fC^+_{\ell,\beta}}\dd\bar\rho_1(\G)\prod_{\g\in\G}\sum_{\tilde\g\in g(\g)}\tilde w(\tilde \g)(-\e{-2\beta})^{\#\g\setminus\tilde\g\over2}.
\end{split}
\ee
\end{proof}

In what follows we estimate
integrals over contours $\g$ which intersect a given point or
interval.  Winding and non-winding contours are treated
separately; we will actually see that winding contours play a very
limited role for $\beta$ large.   We write 
$X_{\ell,\beta}^{\mathrm{w}}\se X_{\ell,\beta}$ and
$X_{\ell,\beta}^{\mathrm{nw}}\se X_{\ell,\beta}$ for the sets of
winding and non-winding contours, respectively. 
We also write
 $X_{\ell,\beta}^{\mathrm{w}}(k)\se X_{\ell,\beta}$
for the set of winding contours which traverse
 exactly $k$ links,  and we write 
$X_{\ell,\beta}^{\mathrm{nw}}(\bar v,k)\se X_{\ell,\beta}$
 for the set of non-winding 
contours $\g$, which traverse  $k$ links and which
visit the point  $\bar v\in\overline V_\ell$.  Similarly, 
if $I=[(v,s), (v,t)]\se \overline V_{\ell,\beta}$ is an interval 
we write  $X_{\ell,\beta}^{\mathrm{nw}}(I,k)$
 for the set of non-winding
contours $\g$, which traverse  $k$ links and which
intersect $I$.

\begin{lem}\label{lem:vgammaBnd}  
Fix any $\ell\in\N,\beta>0$ $c>0$ and 
points $(v,s),(v,t)\in\overline V_\ell$ with $s<t$.  
Write $I=[(v,s), (v,t)]$.
Then we have
\begin{align}
&  \int_{X_{\ell,\beta}^{\mathrm{nw}}(I,k)}
\dd\bar\rho_{1}(\gamma) \e{-c|\gamma|}\leq 
8^{k-1}c^{-k}\big(1+c|I|\big),
\label{eq:bd-w}
\\
&  \int_{X_{\ell,\beta}^{\mathrm{w}}(k)}
\dd\bar\rho_{1}(\gamma) \e{-c|\gamma|}\leq 
(2\ell+1)^{2\ell+2}(k+1)^{2\ell}8^{k-2}c^{-k}.
\label{eq:bd-nw}
\end{align}
\end{lem}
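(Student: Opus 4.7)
The strategy for both bounds is to encode a contour $\gamma$ by sequentially exploring its links and integrating link positions against the exponential weight $\e{-c|\gamma|}$. The key point is that two links in $\gamma$ which are successive on a common loop are joined by a vertical loop segment of length $d\geq 0$ on a shared column, and $\int_0^\infty \e{-cd}\dd d = c^{-1}$. Since the link graph of a contour (with adjacency given by ``successive on a loop segment'') is connected, each contour admits a spanning tree, which will be our exploration object.

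For part (a), I would fix $\gamma \in X_{\ell,\beta}^{\mathrm{nw}}(I,k)$ and choose a canonical root: the first link of $\gamma$, in some fixed order, whose column is incident to $v$, i.e.\ one of $(v-1,v)$ or $(v,v+1)$. Such a link exists because $\gamma$ intersects $I\subset\{v\}\times T_\beta$ (by Remark~\ref{rk:cross} and the fact that loop segments on column $v$ have endpoints on links of adjacent columns). The root contributes a factor at most $2(|I|+2c^{-1})$: two column choices and a time integration constrained by $\gamma$ meeting $[s,t]$, controlled using the exponential $\e{-c|\gamma|/2}$ applied to the portion of $|\gamma|$ between the root time and $I$; this gives the factor $c^{-1}(1+c|I|)$. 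Then, walking along a spanning tree of the link graph, each of the remaining $k-1$ links is determined by (i) a corner of the parent link being exited, (ii) the type ($\cross$ or $\dbar$) of the child, and (iii) the vertical offset along the shared column. The first two choices contribute a combinatorial factor bounded by $8$; the offset integrates against $\e{-c\cdot\text{offset}}$ to give $c^{-1}$. Multiplying the root contribution with $(8c^{-1})^{k-1}$ yields \eqref{eq:bd-w}.

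For part (b), the contour contains a winding loop $l^*$, whose sequence of visited columns is a closed walk on $V_\ell$ with total vertical displacement equal to $\pm 2\beta$. I would first enumerate the combinatorial data: the spatial pattern of $l^*$ (a closed nearest-neighbour walk on $V_\ell$) contributes a bound of the form $(2\ell+1)^{2\ell+2}$, and the distribution of the $k$ links among at most $2\ell$ columns contributes $(k+1)^{2\ell}$. Given this data, the temporal integration proceeds exactly as in part (a), except that the winding constraint forces the total signed vertical displacement around $l^*$ to be $\pm 2\beta$ rather than $0$, which eliminates one degree of freedom (saving one factor of $c^{-1}$ in the comparison), and there is no need to anchor to $I$ (so one loses the $(1+c|I|)$ term but also one factor of $8$). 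This yields $8^{k-2}c^{-k}$, and multiplied by the combinatorial prefactor gives \eqref{eq:bd-nw}.

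The main obstacle I anticipate is the careful bookkeeping in the spanning-tree exploration: one must verify that exactly one vertical segment is integrated per non-root link (no double-counting, and no segment left unintegrated), that the combinatorial constant per step is genuinely at most $8$ after accounting for all choices of corner, link type and direction, and that the root step only uses the portion of $|\gamma|$ that can genuinely be controlled by the exponential (so the factor $|I|+O(c^{-1})$ is tight). In the winding case, a further subtlety is verifying that the very coarse spatial-pattern count $(2\ell+1)^{2\ell+2}(k+1)^{2\ell}$ is sufficient together with the loss of one integration factor; this is adequate for the application because winding contours will subsequently pick up a small factor $\e{-2\beta}$ from the weight $w(\gamma)$, so the large prefactor is harmless.
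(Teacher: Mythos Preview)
For part (a) your exploration/encoding idea is essentially the paper's argument. The paper phrases it slightly differently: rather than building a spanning tree of a link graph, it sends a walker from the anchor point $\bar v$ along the loops of $\gamma$, recording the vertical distances $t_1,\dots,t_k$ to successive \emph{previously unexplored} links, with the crucial inequality $t_1+\dots+t_k\leq|\gamma|$. The $8$ options per link are left/right, $\cross$/$\dbar$, and whether the link is traversed once or twice by loops of $\gamma$. For the interval term $(1+c|I|)$ the paper does not try to control the root step directly; instead it splits into ``$\gamma$ contains the endpoint $(v,t)$'' (reducing to the single-point bound) versus ``$\gamma$ leaves $I$ at some interior height'', and for the latter shows that visiting $(v,r)$ but not $(v,r+\eps)$ forces $t_1\leq\eps$, then passes to the limit of a fine discretization. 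Your root-step heuristic would need a similarly careful justification.

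For part (b) there is a genuine gap. You organise the argument around a single winding loop $l^*$, but a winding contour is by definition the union of \emph{all} winding loops together with the long contractible loops attached to them; in general it decomposes into several link-connected pieces that are not joined to one another at all. The paper's key idea is precisely this decomposition: write $\gamma=\gamma_1\cup\dots\cup\gamma_r$ with $r\leq 2\ell+1$, observe that each $\gamma_i$ must pass through height $0$ at some vertex $v_i$, and then apply the single-point bound to each $\gamma_i$. The prefactors $(2\ell+1)^{2\ell+2}$ and $(k+1)^{2\ell}$ come from summing over $r$, over the anchor vertices $v_1,\dots,v_r$, and over the compositions $k_1+\dots+k_r=k$; they are not counts of ``spatial patterns of $l^*$'' or of ``distributions of links among columns''. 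Your accounting for the powers is also off: the winding constraint does not save a factor of $c^{-1}$ (both bounds carry $c^{-k}$), and the change from $8^{k-1}$ to $8^{k-2}$ has nothing to do with dropping the anchor to $I$ --- it comes from the product $\prod_i 8^{k_i-1}=8^{k-r}$ over the components. Without the decomposition-and-anchoring step your sketch does not yield the stated bound.
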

\begin{proof}
Let us start with the case of non-winding contours
and the case when  $I=\{\bar v\}$ contains only one point.
Elements $\g\in X_{\ell,\beta}^{\mathrm{nw}}(\bar v,k)$ may  be
encoded using tuples
$(t_1,\dots,t_k,l_1,\dots,l_{k-1})\in 
\R_+^k\times (\{\cross,\dbar\}\times\{\mathtt{L},\mathtt{R}\}
\times\{\mathtt{1},\mathtt{2}\})^{k-1}$,
as follows.
\begin{itemize}[leftmargin=*]
\item Consider a  walker started at $\bar v$ and travelling upwards 
until it first encounters the endpoint of a link; 
store the vertical distance traversed as $t_1$.
\item This link can go to the left, $\mathtt{L}$, or to the 
right $\mathtt{R}$; it can be a double bar $\dbar$ or a cross
$\cross$; and
it can be traversed by loops in $\gamma$ once, $\mathtt{1}$, 
or twice, $\mathtt{2}$.  Store this information as $l_1$.
\item Having crossed the link, our walker follows 
$\gamma$ and records vertical distances until 
\emph{previously unexplored} links as $t_i$ and information about 
those links as $l_i$, as before.
\item If a loop is closed and there are still links  that are
  traversed twice by loops in $\gamma$, but have only been visited
  once by our walker, the walker continues walking from such a link
and recording $t_i$ and $l_i$ as before
(we fix some arbitrary rule for selecting the link and the direction
of travel).
\item This procedure is  iterated until the entire contour has been traversed.
\end{itemize}
See Fig.\ \ref{fig:bijCont} for an illustration.

\begin{figure}[hbt]\center
\includegraphics[width=6cm]{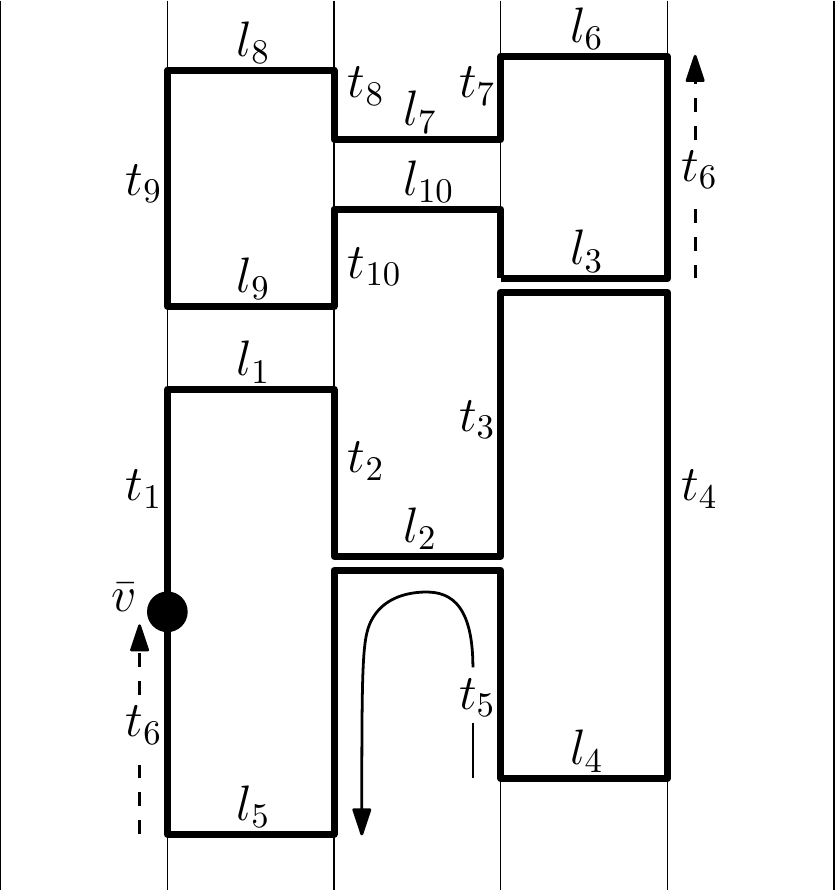}
\caption{Illustration of $t_1,\dots,t_k,l_1,\dots,l_{k-1}$. 
Note, for example,
that $t_5$ is not merely the distance between the fourth
and  fifth links, and that $t_6$ does not readily admit an interpretation
as distance between links at all.}\label{fig:bijCont}
\end{figure}

Noting that $t_1+\dots+t_k\leq |\gamma|$ and that the number of options
for $\{l_i\}_{i=1}^{k-1}$ is bounded by $8^{k-1}$, we get 
\be\label{eq:bd-w1}
\int_{X_{\ell,\beta}^{\mathrm{nw}}(\bar v,k)}
\dd\bar\rho_{1}(\gamma) \e{-c|\gamma|}\leq
8^{k-1}\int_0^\infty dt_1\dots\int_0^\infty \dd t_k\;
\e{-c(t_1+\dots+t_k)}=8^{k-1}c^{-k}.
\ee 
Next, we may apply a similar argument to obtain that, for $\eps>0$
small enough,
\be\label{eq:bd-w2}
\int_{X_{\ell,\beta}^{\mathrm{nw}}((v,t),k)
\setminus X_{\ell,\beta}^{\mathrm{nw}}((v,t+\eps),k)}
\dd\bar\rho_{1}(\gamma) \e{-c|\gamma|}\leq
8^{k-1}c^{-k+1}\tfrac{1-\e{-\eps c}}{c}\leq 
8^{k-1}c^{-k+1}\eps.
\ee
Indeed, for a contour $\g$ which visits $(v,t)$ but not $(v,t+\eps)$,
we must have $t_1\leq\eps$ in the encoding above, and replacing the
integral over $t_1\in[0,\oo)$ with an integral over $t_1\in[0,\eps]$
 gives the claim.  Next, to deduce \eqref{eq:bd-w} from 
\eqref{eq:bd-w1} and \eqref{eq:bd-w2}, we argue as follows.  If $\g$ visits
$I=[(v,s),(v,t)]$, then either $\g$ contains the endpoint $(v,t)$, or 
there are $r\in(s,t)$ and $\eps>0$ such that $(v,r)\in\g$
but $(v,r+\eps)\not\in\g$.  Using \eqref{eq:bd-w1}, the first
possibility accounts for the first term $8^{k-1}c^{-k}$ in
\eqref{eq:bd-w}.  The other possibility accounts for the second term,
which one may, for example, see by using a fine dyadic discretization
of the interval $I$ and passing to the limit using 
\eqref{eq:bd-w2}
and the monotone convergence theorem.

For winding contours $\g$, recall that they consist of $r_1\leq 2\ell+1$ winding loops with a finite number of contractible long loops attached to at least one of them. In particular there are at most $r\leq r_1\leq 2\ell+1$ winding loops that do not share a link and are not connected via a sequence of long, but contractible loops. 
Let us denote these by $\g_1,\dots,\g_r$, and the numbers of links they each visit by $k_1,\dots,k_r$, respectively,
 where $k=\sum_{i=1}^r k_i$.  There are $r$ vertices $v_1,\dots,v_r\in V_\ell$ such
 that  $\g_i$ visits $(v_i,0)$.
Summing over the 
possibilities for $r$,
$v_1,\dots,v_r$, as well as $k_1,\dots,k_r$, 
and applying the argument for \eqref{eq:bd-w1} to each $\g_i$,
we obtain
\be
\begin{split}	
\int_{X_{\ell,\beta}^{\mathrm{w}}(k)}
&\dd\bar\rho_{1}(\gamma) \e{-c|\gamma|} \leq 
\sum_{r=1}^{2\ell+1}{2\ell+1\choose r}\sum_{k_1,\dots,k_r\geq 0\atop k_1+\dots+k_r=k}8^{k-2}c^{-k}\\
&\leq (2\ell+1)(2\ell+1)^{2\ell+1}\max_{1\leq r\leq 2\ell+1} \Bigl| \Bigl\{ (k_1,\dots,k_r)\in\N^r:\sum_i k_i=k \Bigr\} \Bigr| 8^{k-2}c^{-k}\\
&\leq (2\ell+1)^{2\ell+2}(k+1)^{2\ell}8^{k-2}c^{-k}.
\end{split}
\ee
\end{proof}

In order to ensure the convergence of the cluster expansion, we need to check that interactions between contours are small so as to satisfy the Koteck\'y-Preiss criterion in Eq.\ \eqref{KP crit}. For $a_1, a_2, b_1, b_2 \geq 0$, let us introduce 
\be
a(\g)=a_1 |\gamma| + a_2\#\gamma, \qquad b(\g)=b_1 |\gamma| + b_2\#\gamma,
\ee
where $\#\gamma$ denotes the number of links visited by $\g$. Then we have the following bound.

\begin{lem}[Koteck\'y--Preiss criterion]\label{lem:KPcrit}
Let $w(\g)$ be as in \eqref{eq:w-def}. Then
there exist $n_0$, $u_0$, $a_1$, $a_2$, $b_1$, $b_2 >0$ (independent of $\ell, n, u$), and $ \beta_0(\ell,n)$, such that for 
$n>n_0$, $|u|<u_0$, and $\beta > \beta_0(\ell,n)$, we have for any $\ell$ and any $\gamma_0\in X^+_{\ell,\beta}$ that
\be\label{eq:KPcrit}
\int_{X^+_{\ell,\beta}} \dd\bar\rho_{1}(\gamma)
|w(\gamma)|\e{a(\gamma)+b(\gamma)}(1-\delta(\gamma,\gamma_0)) \leq a(\gamma_0).
\ee
\end{lem}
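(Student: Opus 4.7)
The plan is to estimate the integral by decomposing it over contours that traverse prescribed points, using the weight bound from Lemma~\ref{lem:nice new lemma} and the line-integral estimates from Lemma~\ref{lem:vgammaBnd}. I will treat non-winding and winding contours separately; the non-winding case carries the essential geometric content, while winding contours will be controlled at the cost of choosing $\beta_0(\ell,n)$ large.

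\textbf{Step 1 (Weight bound).} For any $\gamma\in X^+_{\ell,\beta}$, combining $|\tilde w(\gamma)|=\e{-(1+u)|\gamma|/2}n^{h(\gamma)}|u|^{\#\gamma_{\cross}}$ with Lemma~\ref{lem:nice new lemma} applied with some fixed small $\kappa\in(0,\tfrac16)$,
\[
|\tilde w(\gamma)|\le \e{-(1+u)|\gamma|/2}\, C^{-(\frac13-\kappa)\#\gamma}\, n^{2\ell\,\bbone\{\gamma \text{ has a spanning segment}\}},
\]
where $C=\min(n,|u|^{-\kappa/2})\to\infty$ as $n\to\infty$ and $|u|\to 0$. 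For non-winding $\gamma$ we have $g(\gamma)=\{\gamma\}$, so $w(\gamma)=\tilde w(\gamma)$ and no spanning term appears. For winding $\gamma$, the sum defining $w(\gamma)$ in \eqref{eq:w-def} has at most $2^{\ell}$ terms (one per choice of empty good column to be ``removed''), each multiplied by $(\e{-2\beta})^{\#\gamma\setminus\tilde\gamma/2}\le 1$, so $|w(\gamma)|\le 2^\ell \max_{\tilde\gamma\in g(\gamma)}|\tilde w(\tilde\gamma)|$.

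\textbf{Step 2 (Incompatibility geometry).} By Definition~\ref{def:compatible}, if $\delta(\gamma,\gamma_0)=0$ then either $\gamma,\gamma_0$ are both winding, or $\overline{I(\gamma)}\cap\overline{I(\gamma_0)}\ne\varnothing$ without either contour lying in a shifted domain of the other. In the second case a standard topological argument shows that $\gamma$ must visit a point of the closed set $B(\gamma_0)\cup S(\gamma_0)$ (possibly shifted by one unit), a subset of $\overline V_{\ell,\beta}$ of measure at most $|\gamma_0|+2\#\gamma_0$. Thus we can cover the incompatibility event by a union over $\bar v$ in a finite set of vertical intervals $I_1,\dots,I_m$ with $\sum_j |I_j|\le |\gamma_0|+2\#\gamma_0$ and $m=O(\#\gamma_0)$.

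\textbf{Step 3 (Integration, non-winding part).} Absorb the $\e{a(\gamma)+b(\gamma)}$ factor into the exponent, writing $c\deq \tfrac{1+u}{2}-(a_1+b_1)$ for the effective $|\gamma|$-decay and $\eta\deq (\tfrac13-\kappa)\log C-(a_2+b_2+\log 8)$ for the effective $\#\gamma$-decay. For every vertical interval $I$ from Step 2, Lemma~\ref{lem:vgammaBnd} yields
\[
\sum_{k\ge 1}\int_{X^{\mathrm{nw}}_{\ell,\beta}(I,k)}\dd\bar\rho_1(\gamma)\,|w(\gamma)|\,\e{a(\gamma)+b(\gamma)}
\le \sum_{k\ge1} \e{-\eta k}\,c^{-k}\cdot \tfrac{1}{8}(1+c|I|).
\]
Choosing $a_1,b_1<\tfrac{1+u}{4}$ and $a_2,b_2$ such that $\eta>0$ (possible once $n$ is large and $|u|$ small), this sum is convergent and bounded by a constant times $(1+|I|)$. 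Summing over $I_1,\dots,I_m$ gives a bound of the form $K(|\gamma_0|+\#\gamma_0)\cdot\delta(n,u)$ where $\delta(n,u)\to 0$. Choosing $n_0,u_0$ so that $K\delta(n,u)\le \min(a_1,a_2)$ closes the Koteck\'y--Preiss estimate in the non-winding part.

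\textbf{Step 4 (Winding part and choice of $\beta_0$).} For winding $\gamma$, use the second bound in Lemma~\ref{lem:vgammaBnd}, together with the $n^{2\ell}$ and $2^\ell$ prefactors from Step 1. These grow like $(2\ell+1)^{2\ell+2}(k+1)^{2\ell}n^{2\ell}2^\ell$, but the winding contribution to $w(\gamma)$ always comes accompanied by at least one ``empty good column'' factor $\e{-2\beta}$ unless $\tilde\gamma=\gamma$; for $\tilde\gamma=\gamma$ itself winding, one uses Lemma~\ref{lem:nice new lemma} with the spanning indicator, which gives $n^{2\ell}$ but the $\bar\rho_1$-measure on winding contours is bounded by $(2\beta)$ (for the choice of height) and the $\e{-c|\gamma|}$ decay forces $|\gamma|\sim 2\beta$, giving a factor $\e{-2\beta c}$. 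In both cases, provided $\beta\ge \beta_0(\ell,n)$ is chosen so that $\e{-2\beta c}$ beats the $(2\ell+1)^{2\ell+2}n^{2\ell}2^\ell$ factor, the winding contribution is negligible.

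\textbf{Main obstacle.} The delicate part is the winding contours, where geometric combinatorics produces factors exponential in $\ell$; the resolution is that every winding contribution is paid for either by spanning-segment decay $\e{-\Omega(\beta)}$ or by the $\e{-2\beta}$ factor from the pairing measure in \eqref{eq:w-def}. This is precisely why $\beta_0$ has to be allowed to depend on $\ell$ and $n$, whereas $n_0,u_0,a_i,b_i$ can be chosen uniformly. The non-winding bookkeeping is relatively straightforward once Step~2 identifies the correct geometric set over which to integrate.
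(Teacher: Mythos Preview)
Your proposal is correct and follows essentially the same approach as the paper: split into winding and non-winding contours, bound $|w(\gamma)|\e{a(\gamma)+b(\gamma)}$ via Lemma~\ref{lem:nice new lemma}, cover the incompatibility region by vertical intervals and integrate using Lemma~\ref{lem:vgammaBnd}, and absorb the winding contribution into the choice of $\beta_0(\ell,n)$ using that any winding contour has $|\gamma|\ge 2\beta$.

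Two minor remarks. First, your Step~2 is more elaborate than needed: the paper simply observes that $\delta(\gamma,\gamma_0)=0$ forces $\gamma$ and $\gamma_0$ to intersect on $\overline V_{\ell,\beta}$, and then decomposes the portion of $\overline V_{\ell,\beta}$ visited by $\gamma_0$ into $m\le\#\gamma_0$ closed intervals $I_1,\dots,I_m$ with $\sum_j|I_j|=|\gamma_0|$; there is no need to bring in $B(\gamma_0)$ or shifted sets explicitly. Second, in Step~4 the paper's argument for winding $\gamma$ is cleaner than your case split: one just uses $|\gamma|\ge\tfrac12|\gamma|+\beta$ to extract a uniform factor $\e{-c_1\beta}$ from the weight, then bounds the remaining integral by the second estimate of Lemma~\ref{lem:vgammaBnd}; the crude bound $|g(\gamma)|\le 2^{|E_\ell|}$ suffices since $\beta_0(\ell,n)$ may depend on $\ell,n$. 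Your ``empty good column'' discussion is not needed.
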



\begin{proof}
Let us alleviate the notation by introducing
\be
\label{def w bar}
\bar w(\gamma)\deq |w(\gamma)|\e{a(\gamma)+b(\gamma)}.
\ee
We use Lemma \ref{lem:nice new lemma} with $\kappa$ such that 
$\frac13- \kappa = \frac14$ and with $u_0$ such that 
$u_0^{-\kappa/2} = n_0$, and  we set
\be
c_1 = \tfrac{1-u_0}2 - a_1 -b_1, \qquad c_2 = \tfrac14 \log n_0 - a_2 - b_2.
\ee
Clearly,
\be\label{eq:contrib from winding and nonwinding}
\int_{X^+_{\ell,\beta}} \dd\bar\rho_{1}(\gamma)
\bar w(\gamma)(1-\delta(\gamma,\gamma_0)) \leq 
\int_{X^\mathrm{w}_{\ell,\beta}} \dd\bar\rho_{1}(\gamma)
\bar w(\gamma) +
\int_{X^\mathrm{nw}_{\ell,\beta}} \dd\bar\rho_{1}(\gamma)
\bar w(\gamma)(1-\delta(\gamma,\gamma_0)).
\ee
Let us first consider the contribution of winding contours.
Notice that
\be
\one\{\gamma\text{ has a spanning segment}\}\leq |\g|/(2\beta).
\ee
For $\g\in X_{\ell,\beta}^{\mathrm{w}}$, note that 
$|w(\g)|\leq|g(\g)|\tilde w_{u=-u_0}(\g)$ 
where $g(\gamma)$ is given in \eqref{eq:g}.
Here $|g(\g)|\leq 2^{|E_\ell|}$, which is some constant depending only
on $\ell$. Hence, also for $\g\in X^{\rm w}_{\ell,\beta}$ we get $\bar
w(\g)\leq \e{-c_1|\g|}\e{-c_2\#\g}$ for $\beta\equiv\beta(\ell,n)$
large enough. 

Using Lemma \ref{lem:vgammaBnd}
and the fact that a winding contour $\g$ satisfies 
$|\g|\geq \tfrac12 |\g|+\beta$, the first term on the right
in \eqref{eq:contrib from winding and nonwinding} satisfies
\be\begin{split}\label{eq:WcontourContribution}
\int_{X^\mathrm{w}_{\ell,\beta}} \dd\bar\rho_{1}(\gamma)
\bar w(\gamma) &\leq
\e{-c_1\beta}\sum_{k\geq0} \e{-c_2 k}
\int_{X^\mathrm{w}_{\ell,\beta}(k)} \dd\bar\rho_{1}(\gamma)
\e{-\tfrac{c_1}{2}|\g|} \\
& \leq (2\ell+1)^{2\ell+2} \e{-c_1\beta}\sum_{k\geq0} 
(k+1)^{2\ell} \big(\tfrac{16}{c_1} \e{-c_2}\big)^k \leq c(\ell,n)\e{-c'\beta},
\end{split}
\ee
with some absolute constant $c'> {1\over4}$ and $c(\ell,n)<\infty$ for $n_0$ sufficiently large such that the geometric series converges. In particular \eqref{eq:WcontourContribution} gets arbitrarily small for $\beta,n_0$ large enough.

We now turn to non-winding contours. We have
\be
\begin{split}
\bar w(\gamma) 
&= \e{-{1+u\over2}|\g|+(a_1+b_1)|\g|}n^{h(\g)}|u|^{\#\g_{\cross}} 
\e{(a_2+b_2)\#\g}  \\
&\leq \e{-(\frac{1-u_0}2 - a_1 - b_1)| \g|} n_0^{-\frac14 \#\g} \e{(a_2+b_2)\#\g} \\
&= \e{-c_1 |\g|}\e{-c_2\#\g}.
\label{eq:wbarBnd}
\end{split}
\ee
Note that if $\delta(\g,\g_0)=0$ then
$\g$ and $\g_0$ intersect somewhere on $\overline V_{\ell,\beta}$.
We may decompose the subset of $\overline V_{\ell,\beta}$ visited by $\g_0$
as a union of closed intervals $I_1,\dotsc,I_m$ where 
$m\leq \#\g_0$ is the number of links of $\g_0$.  Noting also that a
non-winding contour $\g$ has at least 5 links, we obtain 
from Lemma \ref{lem:vgammaBnd} that
\be\begin{split}
\int_{X^\mathrm{nw}_{\ell,\beta}} \dd\bar\rho_1(\gamma)
\bar w(\gamma)(1-\delta(\gamma,\gamma_0)) &\leq
\sum_{j=1}^m\sum_{k\geq 5} \e{-c_2 k}
\int_{X^\mathrm{nw}_{\ell,\beta}(I_j,k)} \dd\bar\rho_1(\gamma) 
\e{-c_1 |\g|}\\
&\leq 
\sum_{j=1}^m\sum_{k\geq 5} \e{-c_2 k}
8^{k-1}c_1^{-k} \big(1+c_1|I_j|\big) \\
&\leq 
\big(\tfrac18 \#\g_0 +\tfrac{c_1}{8}|\g_0|\big)
\sum_{k\geq 5} \big(\tfrac8{c_1} \e{-c_2}\big)^k.
\end{split}\ee
The Lemma \ref{lem:KPcrit} holds true provided that $\beta$ is large enough, and
\be
\tfrac18 \sum_{k\geq 5} \big(\tfrac8{c_1} \e{-c_2}\big)^k \leq a_2, \qquad \tfrac{c_1}8 \sum_{k\geq 5} \big(\tfrac8{c_1} \e{-c_2}\big)^k \leq a_1.
\ee
Both conditions are fulfilled for $n_0$ (and therefore $c_2$) large enough.
\end{proof}

We will also need an estimate on the integral of contours that contain or surround a given point.

\begin{corollary}
\label{cor bound contours}
For any $\eps>0$, there exists $n_0, u_0, a_1, a_2, b_1, b_2 >0$ (independent of $\ell, n, u$) such that for $n>n_0$, $|u|<u_0$, $\beta > \beta_0(\ell,n)$, we have
\be\label{eq:cor bound contours}
\int_{X^+_{\ell,\beta}} \dd\bar\rho_{1}(\gamma)
|w(\gamma)|\e{a(\gamma)+b(\gamma)} \bbone\{ (0,0) \in \overline{I(\gamma)} \} \leq \eps.
\ee
\end{corollary}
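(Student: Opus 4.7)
The approach mirrors the proof of Lemma \ref{lem:KPcrit}, with the pairwise--intersection condition $1-\delta(\gamma,\gamma_0)$ replaced by the indicator $\bbone\{(0,0)\in\overline{I(\gamma)}\}$. I choose the same parameters $\kappa,u_0,c_1,c_2$ as there, so that by Lemma \ref{lem:nice new lemma} we have $\bar w(\gamma):=|w(\gamma)|e^{a(\gamma)+b(\gamma)}\leq e^{-c_1|\gamma|-c_2\#\gamma}$ for non-winding $\gamma$ (with the analogous bound for winding $\gamma$ modulo the $\ell$--dependent combinatorial factor $|g(\gamma)|$, which is harmless once $\beta$ is large). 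I then split the integral by whether $\gamma$ is winding. The winding contribution is handled exactly as in \eqref{eq:WcontourContribution}, giving a bound $c(\ell,n)e^{-c'\beta}$, which is $\leq \eps/2$ for $\beta\geq\beta_0(\ell,n)$ chosen sufficiently large.

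The main step is the non-winding case. The key geometric observation is: if $\gamma$ is non-winding and $(0,0)\in\overline{I(\gamma)}$, then following the horizontal ray $\{(x,0):x\leq 0\}\subset C_\beta$ from $(0,0)$ (which starts in $\overline{I(\gamma)}$ and, since $I(\gamma)$ is bounded for non-winding $\gamma$, must exit into the unbounded component), we meet $S(\gamma)$ at a first point $(x_-(\gamma),0)$ with $x_-(\gamma)\in\{-\ell+1,\dots,0\}$. If $x_-(\gamma)=0$ this just says $(0,0)\in S(\gamma)$. If $x_-(\gamma)<0$, then $(0,0)\in I(\gamma)\setminus S(\gamma)$, and the contractible closed curve of $B(\gamma)$ surrounding $(0,0)$ cannot cross the column $\{0\}\times T_\beta$ at time $0$; it must therefore span columns from some $x\leq x_-(\gamma)$ to some $x\geq 1$, making at least two horizontal passes across width $\geq |x_-(\gamma)|+1$, so $\#\gamma\geq 2|x_-(\gamma)|+2$. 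This links--count bound is the main technical point needing careful justification from the topology of $C_\beta$ and the integer-lattice structure of the support.

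Union-bounding over $x_-\in\{-\ell+1,\dots,0\}$ and applying the single-point bound \eqref{eq:bd-w} of Lemma \ref{lem:vgammaBnd} with $I=\{(x_-,0)\}$ to each term, I obtain
\[
\int_{X^{\mathrm{nw}}_{\ell,\beta}}\bar w(\gamma)\,\bbone\{(0,0)\in\overline{I(\gamma)}\}\,\dd\bar\rho_1(\gamma)
\;\leq\;\sum_{x_-=-\ell+1}^{0}\;\sum_{k\geq\max(5,\,2|x_-|+2)}8^{k-1}c_1^{-k}e^{-c_2 k}.
\]
Setting $r:=8e^{-c_2}/c_1$, which is less than $1$ and tends to $0$ as $n_0\to\infty$, the right-hand side is bounded by $\mathrm{const}\cdot\bigl(r^5+r^6/(1-r^2)\bigr)$, a quantity independent of $\ell$ and $\beta$ which can be made $\leq\eps/2$ by taking $n_0$ large. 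Adding the winding contribution yields \eqref{eq:cor bound contours}.
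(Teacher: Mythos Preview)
Your proof is correct and follows essentially the same strategy as the paper: split into winding and non-winding contours, dispose of the winding part via \eqref{eq:WcontourContribution}, and for the non-winding part exploit that a contour surrounding $(0,0)$ must visit a point $(v,0)$ at height zero not too far from the origin, then feed this into Lemma~\ref{lem:vgammaBnd}. The paper phrases the last step more tersely, noting that a non-winding contour with $k$ links containing $(0,0)$ in its closure must pass through some $(v,0)$ with $|v|<k/2$, and then uses translation to replace the indicator $\bbone\{(0,0)\in\overline{I(\gamma)}\}$ by $\#\gamma\cdot\bbone\{(0,0)\in\gamma\}$; your version with the leftward ray and the explicit link-count bound $\#\gamma\geq 2|x_-|+2$ is just a more detailed unpacking of the same geometric fact, and your final double sum over $x_-$ and $k$ is the analogue of the paper's extra factor $\#\gamma$.
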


\begin{proof}
We proceed as in Eq.\ \eqref{eq:contrib from winding and nonwinding}, so that it suffices to bound the contribution from winding and non-winding contours separately. Recall the definition of $\bar w$ in \eqref{def w bar}.
Using Eq.\ \eqref{eq:WcontourContribution}, we can make the contribution from winding contours arbitrarily small, say $\eps/2$, by choosing $\beta\equiv\beta(\ell,n)$ sufficiently large, i.e.\ we have 
\be
\int_{X^+_{\ell,\beta}} \dd\bar\rho_{1}(\gamma)
\bar w(\gamma) \bbone\{ (0,0) \in \overline{I(\gamma)} \}
\leq \int_{X^\text{nw}_{\ell,\beta}}\dd\bar\rho_1(\g)\bar w(\g)\one\{(0,0)\in\overline{I(\g)}\}+{\eps\over2}.
\ee
If $\g$ is non-winding and has $k$ links, then it must pass by a site at time 0 at distance less than $k/2$ from $(0,0)$. Thus we have the bound
\be
\int_{X^\text{nw}_{\ell,\beta}} \dd\bar\rho_{1}(\gamma)
\bar w(\gamma) \bbone\{ (0,0) \in \overline{I(\gamma)} \} \leq \int_{X^\text{nw}_{\ell,\beta}} \dd\bar\rho_{1}(\gamma)
\bar w(\gamma) \#\gamma\; \bbone\{ (0,0) \in \gamma) \}.
\ee
This can be shown to be arbitrarily small, say less than $\eps/2$, when $n$ and $\beta$ are large, as in the previous lemma.
\end{proof}

Let $\mathcal{C}_k$ denote the set of connected 
(undirected) graphs with vertex set
$\{1,\dotsc,k\}$ and define 
\be
\varphi(\g_1,\dotsc,\g_k)=\left\{
\begin{array}{ll}
1, & \mbox{if } k=1,\\
\tfrac1{k!}\sum_{G\in\mathcal{C}_k} \prod_{ij\in G} 
(\delta(\g_i,\g_j)-1), & \mbox{if } k\geq 2,
\end{array}
\right.
\ee
where the product in the second line is over the edges of $G$.
The following is the main consequence of Theorem \ref{thm clexp}, which holds because of Lemma \ref{lem:KPcrit}.

\begin{pro}[Cluster expansion of the partition function]
\label{pro:cluster}
For parameters as in Lemma \ref{lem:KPcrit}, the following sum
converges absolutely:
\be
\Phi_{\ell,\beta}:=\sum_{m\geq 1} \int_{X^+_{\ell,\beta}}
\dd\bar\rho_1(\gamma_1)
\dots\int_{X^+_{\ell,\beta}} \dd\bar\rho_1(\gamma_k)
\Bigl( \prod_{i=1}^k w(\gamma_i) \Bigr)
\varphi(\g_1,\dotsc,\g_m),
\ee
and we have that
\be
\e{(1+u)|\overline E^-_{\ell,\beta}|} Z_{\ell,\beta,n,u} = \exp\big(\Phi_{\ell,\beta}\big).
\ee
\end{pro}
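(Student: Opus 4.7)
The plan is a direct application of the abstract cluster expansion machinery from Theorem \ref{thm clexp} to the representation of the partition function obtained in Proposition \ref{pro:Z}. Recall that Proposition \ref{pro:Z} gives
\[
\e{(1+u)|\overline E_{\ell,\beta}^-|} Z_{\ell,\beta,n,u}
= \sum_{k\geq0}\frac{1}{k!}\int_{X^+_{\ell,\beta}}\dd\bar\rho_1(\g_1)\cdots\int_{X^+_{\ell,\beta}}\dd\bar\rho_1(\g_k)\Bigl(\prod_{i=1}^k w(\g_i)\Bigr)\prod_{1\leq i<j\leq k}\delta(\g_i,\g_j).
\]
I would then identify $\Gamma$ of Theorem \ref{thm clexp} with $X^+_{\ell,\beta}$, the complex measure $\eta$ with $w(\g)\dd\bar\rho_1(\g)$, and the symmetric interaction $\zeta$ with $\zeta(\g,\g'):=\delta(\g,\g')-1$, which takes values in $\{-1,0\}$ so $|1+\zeta|\leq 1$ as required. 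With these identifications the partition function on the right matches exactly the form of $Z$ in Theorem \ref{thm clexp}.

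Next, I would verify the Kotecký–Preiss criterion \eqref{KP crit} using precisely Lemma \ref{lem:KPcrit}. Indeed, since $|\zeta(\g,\g')|=1-\delta(\g,\g')$, the hypothesis \eqref{KP crit} for $\gamma_0\in X^+_{\ell,\beta}$ reads
\[
\int_{X^+_{\ell,\beta}}\dd\bar\rho_1(\g)\,|w(\g)|\,(1-\delta(\g,\g_0))\,\e{a(\g)+b(\g)}\leq a(\g_0),
\]
which is exactly the conclusion of Lemma \ref{lem:KPcrit} for the parameter range stated there ($n>n_0$, $|u|<u_0$, $\beta>\beta_0(\ell,n)$, and with the $a_i, b_i$ chosen in its proof). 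The auxiliary finiteness hypothesis $\int\dd|\eta|(\g)\,\e{a(\g)+b(\g)}<\infty$ follows by essentially the same argument as Lemma \ref{lem:KPcrit} without the factor $(1-\delta)$, since there are finitely many columns at fixed $\ell$ and the bounds of Lemma \ref{lem:vgammaBnd} give a geometrically convergent sum in the number of links once $n_0$ is large (so that $c_2>\log 8/c_1$).

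With the Kotecký–Preiss criterion in place, Theorem \ref{thm clexp}(a) directly yields
\[
\e{(1+u)|\overline E_{\ell,\beta}^-|} Z_{\ell,\beta,n,u}=\exp(\Phi_{\ell,\beta}),
\]
together with absolute convergence of $\Phi_{\ell,\beta}$; the cluster function in Theorem \ref{thm clexp} coincides with our $\varphi(\g_1,\dots,\g_m)$ since $\zeta(\g,\g')=\delta(\g,\g')-1$. I do not expect any serious obstacle here: the nontrivial work has already been done, namely establishing the contour representation in Proposition \ref{pro:Z}, the bounds of Lemma \ref{lem:nice new lemma} and Lemma \ref{lem:vgammaBnd}, and the Kotecký–Preiss verification in Lemma \ref{lem:KPcrit}. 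The mild point to be careful about is that our measure is signed (since $u$ may be negative and $w$ may change sign), but this is precisely the setting covered by \cite{ueltschi04}, which is why Theorem \ref{thm clexp} is stated for complex measures $\eta$.
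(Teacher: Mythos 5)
Your proposal is correct and follows exactly the route the paper takes (which the paper only sketches by saying the proposition "holds because of Lemma \ref{lem:KPcrit}"): starting from Proposition \ref{pro:Z}, set $\eta = w\,\dd\bar\rho_1$ and $\zeta = \delta - 1$, verify the Koteck\'y--Preiss criterion via Lemma \ref{lem:KPcrit}, and invoke Theorem \ref{thm clexp}(a). The identifications and the handling of the signed measure are all as intended.
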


Notice that $\Phi_{\ell,\beta}$ depends on $n$ and $u$ as well.

\subsection{Dimerization}

Let us introduce the (signed) measure $\mu_{\ell,\beta,n,u}$ such that the integral of a function $f: \Omega_{\ell,\beta} \to \bbC$ is given by
\be\label{eq mu}
\mu_{\ell,\beta,n,u}(f)=\frac1{Z_{\ell,\beta,n,u}} \int_{\Omega_{\ell,\beta}} \dd\rho_{u}(\omega) 
\, n^{\mathcal L(\omega)-\#\omega_{\dbar}} f(\om).
\ee
The next theorem can be understood as dimerization in the loop model. Together with Theorem \ref{thm spin loop} it implies Theorem \ref{thm main}.

\begin{thm}
\label{thm loop dimer}
For any $c>0$, there exist $n_0,u_0>0$ such that for all $ n> n_0$ and $|u|<u_0$, we have
\begin{itemize}
\item[(a)] For all $\ell$ even:
\be
\begin{split}
&\liminf_{\beta\to\infty} \mu_{\ell,\beta,n,u}(0 \overset{-}{\longleftrightarrow} -1) > 1-c, \quad \text{and} \quad \limsup_{\beta\to\infty} |\mu_{\ell,\beta,n,u}(0 \overset{+}{\longleftrightarrow} -1)| < c; \\
&\limsup_{\beta\to\infty} |\mu_{\ell,\beta,n,u}(0 \overset{+}{\longleftrightarrow} 1)| < c, \quad \text{and} \quad \limsup_{\beta\to\infty} |\mu_{\ell,\beta,n,u}(0 \overset{-}{\longleftrightarrow} 1)| < c.
\end{split}
\ee
\item[(b)] For all $\ell$ odd:
\be
\begin{split}
&\liminf_{\beta\to\infty} \mu_{\ell,\beta,n,u}(0 \overset{-}{\longleftrightarrow} 1) > 1-c, \quad \text{and} \quad \limsup_{\beta\to\infty} |\mu_{\ell,\beta,n,u}(0 \overset{+}{\longleftrightarrow} 1)| < c; \\
&\limsup_{\beta\to\infty} |\mu_{\ell,\beta,n,u}(0 \overset{+}{\longleftrightarrow} -1)| < c, \quad \text{and} \quad \limsup_{\beta\to\infty} |\mu_{\ell,\beta,n,u}(0 \overset{-}{\longleftrightarrow} -1)| < c.
\end{split}
\ee
\end{itemize}
\end{thm}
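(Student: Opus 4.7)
The plan is to use the contour representation and the cluster expansion of Proposition \ref{pro:cluster} to show that under $\mu_{\ell,\beta,n,u}$ the origin $(0,0)$ lies with high probability on a positively oriented short loop, and then to exploit the fact that such a loop must sit on the unique edge of $E^+_\ell$ incident to site $0$: the edge $(-1,0)$ for $\ell$ even and the edge $(0,1)$ for $\ell$ odd. By the canonical orientation, this immediately yields $0 \overset{-}{\longleftrightarrow} -1$ in the first case and $0 \overset{-}{\longleftrightarrow} 1$ in the second.

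More concretely, let
\[
B = \{\omega \in \Omega_{\ell,\beta} : (0,0) \in S(\gamma) \cup \overline{I(\gamma)} \text{ for some } \gamma \in \Gamma(\omega)\}.
\]
By Remark \ref{rk:pos}, on $B^c$ the origin lies on a positively oriented short loop, which gives the desired liminf statement once we know that $|\mu_{\ell,\beta,n,u}(B)|$ is small. Moreover, each of the three limsup events in either case forces the existence of either a long loop through $\{-1,0,1\}$ (which is part of a contour) or a short loop on an edge of $E^-_\ell$ incident to $0$; the latter disrupts the reference all-short-loop configuration and again forces $(0,0)$ to belong to $S(\gamma) \cup \overline{I(\gamma)}$ for some contour $\gamma$ (this uses Remark \ref{rk:cross} to propagate the disruption across shared crosses). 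Thus all four statements in the theorem reduce to showing
\[
\limsup_{\beta \to \infty} |\mu_{\ell,\beta,n,u}(B)| < c.
\]

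To establish this, mimic the proof of Proposition \ref{pro:Z} and write the numerator of $\mu_{\ell,\beta,n,u}(B)$ as an integral over compatible positive-type collections in $\fC^+_{\ell,\beta}$ in which at least one contour $\gamma_0$ satisfies $(0,0) \in S(\gamma_0) \cup \overline{I(\gamma_0)}$; since the shift $\S$ displaces each contour by at most one column, this amounts to only a harmless enlargement of the indicator. Performing a union bound over the distinguished $\gamma_0$, and dividing the resulting expression by the full cluster expansion of $Z_{\ell,\beta,n,u}$ from Proposition \ref{pro:cluster}, one obtains the estimate
\[
|\mu_{\ell,\beta,n,u}(B)| \leq \int_{X^+_{\ell,\beta}} \dd\bar\rho_1(\gamma_0) \, |w(\gamma_0)| \, e^{a(\gamma_0) + b(\gamma_0)} \, \bbone\{(0,0) \in S(\gamma_0) \cup \overline{I(\gamma_0)}\},
\]
the exponential factor $e^{a+b}$ being produced by Theorem \ref{thm clexp}(c) together with the Koteck\'y--Preiss estimate of Lemma \ref{lem:KPcrit}. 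By Corollary \ref{cor bound contours}, this integral is smaller than any prescribed $\varepsilon > 0$ provided $n > n_0$, $|u| < u_0$ and $\beta > \beta_0(\ell,n)$. Choosing $\varepsilon < c/2$ and then sending $\beta \to \infty$ gives the result.

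The main obstacle is the careful bookkeeping of the cluster expansion with the signed weight $w$: one must verify that the ratio of the two expansions (numerator with the constraint imposing $\gamma_0$ compatible with a contour hitting the origin, denominator without) really leaves an absolutely convergent remainder bounded in terms of $|w|$, which is precisely the content of Theorem \ref{thm clexp}(c). Positivity of $Z_{\ell,\beta,n,u}$, guaranteed by the trace representation in Theorem \ref{thm spin loop}(a), ensures that the denominator does not vanish and that the ratio is well defined; the positivity of $w$ in the weighted integrals of $|\mu(B)|$ is replaced by absolute values throughout, which is permitted since Lemma \ref{lem:KPcrit} and Corollary \ref{cor bound contours} are stated directly in terms of $|w|$.
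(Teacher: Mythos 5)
Your overall strategy is the paper's: use Remark \ref{rk:pos} to reduce everything to showing that the origin is unlikely to lie on or inside a contour, then control that event with the cluster expansion and Corollary \ref{cor bound contours}. The geometric reductions in your first two paragraphs are correct (a short loop on the $E^-_\ell$ edge at the origin is negatively oriented, so Remark \ref{rk:pos} already forces $\cO^c$; no separate propagation argument is needed). However, there are two genuine gaps. First, the reduction ``all four statements reduce to $\limsup_\beta|\mu_{\ell,\beta,n,u}(B)|<c$'' is invalid: $\mu_{\ell,\beta,n,u}$ is a signed measure for $u<0$, so $E\subseteq B$ does not give $|\mu(E)|\leq|\mu(B)|$. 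One must bound $|\mu_{\ell,\beta,n,u}(\one_B\, g)|$ for each of the relevant indicators $g$ separately and uniformly; this is exactly why the paper formulates Lemma \ref{lem useful bound} for an arbitrary $|g|\leq 1$ depending only on the contours containing or surrounding the origin, rather than for the single event $B$.

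Second, and more seriously, the displayed inequality for $|\mu(B)|$ is asserted where it needs to be proved. ``Performing a union bound over the distinguished $\gamma_0$ and dividing by the full cluster expansion'' is precisely the hard step: the numerator is $\int \dd\bar\rho_1(\G_0)\,g(\G_0)\prod_{\g\in\G_0}w(\g)\exp(\Phi_{\ell,\beta}(\G_0))$ where $\G_0$ is the (possibly multi-element, nested) family of contours surrounding the origin and $\Phi_{\ell,\beta}(\G_0)$ is the cluster expansion with the constrained weights $w_{\G_0}$; dividing by $Z$ leaves the factor $\exp(\Phi_{\ell,\beta}(\G_0)-\Phi_{\ell,\beta})$, which must be shown to be at most $\e{\eps}\prod_{\g_0\in\G_0}\e{a(\g_0)}$. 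That bound requires Theorem \ref{thm clexp}(b) together with the decomposition $1-\delta_0(\g_1,\G_0)\leq\sum_{\g_0\in\G_0}(1-\delta(\g_0,\g_1))+\one\{(0,0)\in\overline{I(\g_1)}\}$ and Lemma \ref{lem:KPcrit}; it is not a consequence of Theorem \ref{thm clexp}(c) alone, as you suggest. Moreover, after this is done the remaining integral is over the whole family $\G_0$, so the correct bound is a geometric series $\sum_{m\geq1}\eps^m$ over the number of nested contours, of which your single-contour integral is only the $m=1$ term. None of this changes the conclusion, but it is the actual content of the proof and is missing from the proposal.
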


Notice that the limits $\beta\to\infty$ actually exist; this could be established using the correspondence with the quantum spin system, where convergence is clear. This is less visible in the loop model, though, hence the use of $\limsup$ and $\liminf$ so we do not need to prove it.

\begin{proof}
Assume without loss of generality that $\ell$ is odd. The case of $\ell$ even works similarly. Let $\cO$ (for \emph{outside})
denote the event that $(0,0)$ is not on or
inside any contour, that is
\be
\cO =\Big\{\om\in\Omega_{\ell,\beta}: (0,0)\in\bigcap _{\g\in\G(\om)} E(\g)\Big\}.
\ee
By Remark \ref{rk:pos} and the fact that $\mu_{\ell,\beta,n,u}(1)=1$,
\be
\mu_{\ell,\beta,n,u}(0 \overset{-}{\longleftrightarrow} 1) = 1 - \mu_{\ell,\beta,n,u}(\bbone\{ 0 \overset{-}{\longleftrightarrow} 1 \}^{\rm c}) = 1 - \mu_{\ell,\beta,n,u}(\one_{\mathcal O^c} \bbone\{ 0 \overset{-}{\longleftrightarrow} 1 \}^{\rm c}).
\ee
We also have
\be
\begin{split}
&\mu_{\ell,\beta,n,u}(0 \overset{+}{\longleftrightarrow} 1) = \mu_{\ell,\beta,n,u}(\one_{\mathcal O^c} \bbone\{ 0 \overset{+}{\longleftrightarrow} 1 \}^{\rm c}), \\
&\mu_{\ell,\beta,n,u}(0 \overset{\pm}{\longleftrightarrow} -1) = \mu_{\ell,\beta,n,u}(\one_{\mathcal O^c} \bbone\{ 0 \overset{\pm}{\longleftrightarrow} -1 \}^{\rm c}).
\end{split}
\ee
Then Theorem \ref{thm loop dimer} follows from the next lemma (Lemma \ref{lem useful bound}), with the function $f$ being an indicator function.
\end{proof}
Given a set of compatible contours $\G\in \fC^+_{\ell,\beta}$ and its ``shifted version" $\S(\G)\in\fA_{\ell,\beta}$, we identify their contours in the natural way; i.e. for every $\g\in\G$ there exists a unique $\S(\g;\G)\in\S(\G)$ that is obtained by shifting $\g$.
Every $\G\in\fC^+_{\ell,\beta}$ can now be uniquely decomposed into $\G=\G_0\dot\cup \G\setminus \G_0$ with
\be\label{def:Gamma0}
	\G_0\deq \{\g\in\G\mid (0,0)\in\overline{I(\S(\g;\G))}\},
\ee
where $I(\gamma)$ is the interior of $\gamma$, defined in Section \ref{sec contours}.
This decomposition will be useful in the proof of the following lemma.

\begin{lem}
\label{lem useful bound}
Let $g:\Omega_{\ell,\beta}\to\R$ be a function that, for every $\om$, only depends on the contours $\g\in\G(\om)\in\fA_{\ell,\beta}$ that surround or contain $(0,0)$. Assume that $|g|\leq 1$. Then for every $\eps>0$, there exists $n_0\in\N, u_0>0$ such that for all $\ell$, $n>n_0$, $|u|<u_0$, and $\beta\equiv\beta(\ell,n)$ large enough,
\be
	|\mu_{\ell,\beta,n,u}(\one_{\mathcal O^c}g)|<\eps.
\ee 
\end{lem}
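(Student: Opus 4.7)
The plan is to combine the contour representation of Proposition \ref{pro:Z}, the cluster expansion of Proposition \ref{pro:cluster}, and the small-weight bound of Corollary \ref{cor bound contours}. Using the notation of Proposition \ref{pro:Z} and writing $\Phi_{\ell,\beta}$ for the cluster series of Proposition \ref{pro:cluster}, I first re-express $\mu_{\ell,\beta,n,u}(\one_{\mathcal O^c} g)$ as a sum over compatible sets $\G \in \fC^+_{\ell,\beta}$ of positive contours, with the indicator $\one\{\G_0 \neq \emptyset\}$ (where $\G_0$ is defined in \eqref{def:Gamma0}) inserted. Since $g$ depends only on $\G_0$, I decompose $\G = \G_0 \dot\cup \G_1$ and sum over $\G_1$ at fixed $\G_0$. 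Invoking Proposition \ref{pro:cluster} both for $Z_{\ell,\beta,n,u}$ and for the partition function $Z(\G_0)$ restricted to $\G_1$'s compatible with $\G_0$ and not surrounding $(0,0)$, the ratio of the sum over $\G_1$ to $Z_{\ell,\beta,n,u}$ is (up to the prefactor $e^{-(1+u)|\overline E_\ell^-|}$ that cancels) equal to $e^{\Phi(\G_0) - \Phi_{\ell,\beta}}$, where $\Phi(\G_0)$ is the cluster series of the restricted model.

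Next, the difference $\Phi_{\ell,\beta} - \Phi(\G_0)$ is a sum of cluster terms containing at least one contour that is either an element of $\G_0$, incompatible with $\G_0$, or surrounds $(0,0)$. By Theorem \ref{thm clexp}(c) applied with the functions $a,b$ of Lemma \ref{lem:KPcrit}, this yields
\begin{align*}
|\Phi_{\ell,\beta} - \Phi(\G_0)| \leq \sum_{\g \in \G_0} a(\g) + Y,
\end{align*}
where $Y := \int \dd\bar\rho_1(\g)\, |w(\g)|\, e^{a(\g)}\, \one\{(0,0) \in \overline{I(\g)}\}$. Combined with $|g| \leq 1$, this gives
\begin{align*}
|\mu_{\ell,\beta,n,u}(\one_{\mathcal O^c} g)|
&\leq e^Y \sum_{k \geq 1} \frac{1}{k!} \int \prod_{i=1}^k \dd\bar\rho_1(\g_i)\, |w(\g_i)|\, e^{a(\g_i)}\, \one\{(0,0) \in \overline{I(\g_i)}\} \\
&\leq e^Y (e^Y - 1).
\end{align*}
By Corollary \ref{cor bound contours} (with $b \equiv 0$ and the target constant there chosen sufficiently small), $Y$ can be made as small as desired for $n > n_0$, $|u| < u_0$, and $\beta \geq \beta_0(\ell,n)$, giving $|\mu_{\ell,\beta,n,u}(\one_{\mathcal O^c} g)| < \eps$.

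The main obstacle is that the definition of $\G_0$ in \eqref{def:Gamma0} depends on the whole $\G$ through the shift $\S(\g;\G)$, so the decomposition $\G = \G_0 \dot\cup \G_1$ does not factor cleanly in terms of pairwise contour compatibilities. One must verify that the restricted partition function $Z(\G_0)$ satisfies the Koteck\'y--Preiss criterion with the same functions $a,b$, and reconcile the two notions of "surrounding $(0,0)$" (before vs.\ after shifting) by absorbing the at-most-one-unit horizontal shift into the constants, for instance by enlarging the indicator in $Y$ to $\one\{(0,0) \in \overline{I(\g)} \cup \overline{I(\g - (1,0))}\}$, which only costs a factor $2$ in Corollary \ref{cor bound contours}.
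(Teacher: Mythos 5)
Your proposal follows essentially the same route as the paper: the same decomposition of a compatible set into the contours $\G_0$ whose shifted versions surround $(0,0)$ and the rest, the same use of the cluster expansion for the restricted partition function with weights $w_{\G_0}$ (justified because $|w_{\G_0}|\leq|w|$, so Lemma \ref{lem:KPcrit} still applies), the same bound $|\Phi_{\ell,\beta}-\Phi(\G_0)|\leq\sum_{\g\in\G_0}a(\g)+Y$ via Theorem \ref{thm clexp} and Corollary \ref{cor bound contours}, and the same final summation over $\G_0$. The only substantive difference is cosmetic: you bound the sum over $\G_0$ by $\e{Y}(\e{Y}-1)$ whereas the paper exploits the nested structure to get a geometric series; both work.

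The one point where your sketch goes wrong is the resolution of the obstacle you correctly identify at the end. The shift $\s_{\G}(\g)$ is \emph{not} at most one unit: a contour nested inside $k$ others, each of whose relevant domains is of negative type, is shifted by up to $k$ units, so enlarging the indicator to $\one\{(0,0)\in\overline{I(\g)}\cup\overline{I(\g-(1,0))}\}$ does not dominate $\one\{(0,0)\in\overline{I(\S(\g;\G_0))}\}$ and the factor $2$ does not suffice. The paper resolves this differently: it changes variables from the compatible set $\G_0$ to the admissible set $\S(\G_0)$ (a bijection by Lemma \ref{lem:pairwiseComp}), observes that the latter is a nested sequence of contours all of whose closed interiors contain $(0,0)$, and then bounds each contour integral using translation invariance of $|w(\g)|\e{a(\g)}$ and of $\bar\rho_1$ under horizontal shifts, paying only a single factor of $2$ for integrating over $X_{\ell,\beta}$ rather than $X^+_{\ell,\beta}$. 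With that substitution your argument closes; as written, the last step has a genuine (though repairable) gap.
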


\begin{proof}
We have
\be
\mu_{\ell,\beta,n,u}(\one_{\mathcal O^c} g) = \frac{Z_{\ell,\beta,n,u}[\mathcal O^c;g]}{Z_{\ell,\beta,n,u}},
\ee
where
\be
\label{eq:ZGamma}
\begin{split}
	Z_{\ell,\beta,n,u}[\mathcal O^c;g] 
	&= \e{-(1+u)|\overline E_{\ell,\beta}^-|}\int_{\fC^+_{\ell,\beta}\setminus\{\emptyset\}}\dd\bar\rho_1(\Gamma_0) g(\G_0)\Big(\prod_{\gamma\in\Gamma_0}w(\gamma)\Big) \one \Bigl\{ (0,0)\in \overline{I(\S(\gamma;\G_0))}\;\forall\gamma\in\Gamma_0 \Bigr\} \\
	&\times \sum_{m\geq 0}{1\over m!}\int_{X_{\ell,\beta}^+}\dd\bar\rho_1(\gamma_1)\dots\int_{X_{\ell,\beta}^+}\dd\bar\rho_1(\gamma_m) \Bigl(\prod_{i=1}^m w_{\Gamma_0}(\gamma_i) \Bigr) \prod_{1\leq i<j\leq m}\delta(\gamma_i,\gamma_j).
\end{split}
\ee
Notice that the contour weights in the last line depend on $\Gamma_0$ and are defined as
\be
w_{\Gamma_0}(\gamma)\deq w(\gamma)\one\{(0,0)\notin\overline{I(\S(\g;\G_0))}\}\prod_{\g_0\in\G_0}\delta(\g,\g_0).
\ee
Intuitively, we first integrate over all contours surrounding $(0,0)$ (after shifting, they are called $\G_0$) and then we integrate out the remaining contours that are compatible with $\G_0$. 

The second line of Eq.\ \eqref{eq:ZGamma} has the structure of a partition function.
Since $|w_{\Gamma_0}(\g_i)|\leq |w(\g_i)|$,
Lemma \ref{lem:KPcrit} holds for the modified weights too, and
therefore also the suitable modification of 
Proposition \ref{pro:cluster}. This is then equal to $\exp\big(\Phi_{\ell,\beta}(\Gamma_0)\big)$ where
\be
\Phi_{\ell,\beta}(\Gamma_0)\deq\sum_{m\geq 1} \int_{X^+_{\ell,\beta}}
\dd\bar\rho_1(\gamma_1)
\dots\int_{X^+_{\ell,\beta}} \dd\bar\rho_1(\gamma_m)
\Bigl( \prod_{i=1}^m w_{\Gamma_0}(\g_i)\Bigr)
\varphi(\g_1,\dotsc,\g_m).
\ee
Notice that the sum in $\Phi_{\ell,\beta}(\Gamma_0)$ converges absolutely. Then
\bm
	\mu(\one_{\cO^c}g)= \int_{\fC^+_{\ell,\beta}\setminus\{\emptyset\}}\dd\bar\rho_1(\Gamma_0)g(\G_0)\Big(\prod_{\gamma\in\Gamma_0}w(\gamma)\Big) \\
\one \bigl\{ (0,0)\in \overline{I(\S(\gamma;\G_0))}\;\forall\g\in\Gamma_0 \bigr\} \exp\big\{\Phi_{\ell,\beta}(\Gamma_0)-\Phi_{\ell,\beta}\big\}
\label{eq:GammaNSimGamma0}
\end{multline}
Let $\delta_0(\g,\G_0)\deq \one\{(0,0)\notin\overline{I(\S(\g;\G_0))}\}\prod_{\g_0\in\G_0}\delta(\g,\g_0)$ be the indicator function for $\g$ being a contour that is compatible with $\G_0$ and should not be part of $\G_0$. Then
\bm
\Phi_{\ell,\beta}-\Phi_{\ell,\beta}(\Gamma_0) 
= \sum_{m\geq 1}\int_{X^+_{\ell,\beta}}\dd\bar\rho_1(\gamma_1)\dots\int_{X^+_{\ell,\beta}}\dd\bar\rho_1(\gamma_m)\Big(\prod_{i=1}^m w(\gamma_i) \Big )\varphi(\gamma_1,\dots,\gamma_m) \\
\one \bigl\{ \exists i\leq m:\delta_0(\gamma_i,\Gamma_0)=0 \bigr\}.
\end{multline}
We bound these ``corrections coming from contours not in $\G_0$" as follows:
\be
\label{eq:incompatgammasGamma0}
\begin{split}
	&|\Phi_{\ell,\beta}-\Phi_{\ell,\beta}(\Gamma_0)| \\
	&\leq \sum_{m\geq 1}\int_{X^+_{\ell,\beta}}\dd\bar\rho_1(\gamma_1)\dots\int_{X^+_{\ell,\beta}}\dd\bar\rho_1(\gamma_m)\Big(\prod_{i=1}^m |w(\gamma_i)|\Big)|\varphi(\gamma_1,\dots,\gamma_m)| \one\{\exists i\leq m:\delta_0(\gamma_i,\Gamma_0)=0\}\\
	&\leq \int_{X^+_{\ell,\beta}}\dd\bar\rho_1(\g_1)|w(\g_1)|(1-\delta_0(\gamma_1,\Gamma_0)) \biggl( 1 +\sum_{m\geq 2}m\int_{X^+_{\ell,\beta}}\dd\bar\rho_1(\g_2)\dots\int_{X^+_{\ell,\beta}}\dd\bar\rho_1(\g_m)|\varphi(\g_1,\dots,\g_m)| \biggr) \\
	&\leq \int_{X^+_{\ell,\beta}}\dd\bar\rho_1(\g_1)|w(\g_1)|\e{a(\g_1)}(1-\delta_0(\gamma_1,\Gamma_0)).
\end{split}
\ee
The last inequality follows from Theorem \ref{thm clexp} (b). It is easy to see from the definition of $\delta_0$ that $1-\delta_0(\gamma_1,\Gamma_0)\leq \sum_{\g_0\in\G_0}(1-\delta(\g_0,\g_1))+ \one\{(0,0)\in\overline{I(\S(\g_1;\G_0))}\}$. Thus
\be\label{eq:a plus eps bnd}
|\Phi_{\ell,\beta}-\Phi_{\ell,\beta}(\Gamma_0)| \leq \sum_{\g_0\in\G_0}a(\g_0)+\int_{X^+_{\ell,\beta}}\dd\bar\rho_1(\g_1)|w(\g_1)|\e{a(\g_1)}\one\{(0,0)\in \overline{I(\g_1)}\}.
\ee
We used Lemma \ref{lem:KPcrit} for the first summand and translation invariance for the second summand. 
Using Corollary \ref{cor bound contours}, we can bound the second summand by $\eps$, arbitrarily (and uniformly in $\ell,n,u$) small, for $n_0,\beta$ large enough. 

Plugging these bounds back into \eqref{eq:GammaNSimGamma0} and using $|g|\leq 1$, we get 
\be
\label{eq:muOc}
\begin{split}
	\big|\mu_{\ell,\beta,n,u}&(\one_{\cO^c}g)\big| \leq \e{\eps}
	\int_{\fC^+_{\ell,\beta}\setminus\{\emptyset\}}\dd\bar\rho_1(\G_0)\prod_{\g_0\in\G_0}\big(|w(\g_0)| \e{a(\g_0)}\big)\one\{(0,0)\in\overline{I(\S(\g_0;\G_0))}\;\forall \g_0\in\G_0\}\\
	&\leq \e{\eps}
	\sum_{m\geq 1}
	\int_{X_{\ell,\beta}}\dd\bar\rho_1(\g_1)\bar w(\g_1) \one\{(0,0)\in \overline{I(\gamma_1)}\}\;\dots\int_{X_{\ell,\beta}}\dd\bar\rho_1(\g_m)\bar w(\g_m) \one\{(0,0)\in \overline{I(\gamma_m)}\}\\
	&\leq 2\e{\eps}\sum_{m\geq1}\eps^m 
	= 2\e{\eps}\bigg({\eps\over 1-\eps}\bigg).
\end{split}
\ee
For the second inequality we use the fact that all admissible $\S(\G_0)\in\fA_{\ell,\beta}$ such that $(0,0)\in\overline{I(\S(\g_0;\G_0))}\;\forall \g_0\in\G_0$ can be written uniquely as $\{\g_1,\dots,\g_m\}\in\fA_{\ell,\beta}$ with $(0,0)\in \overline{I(\g_m)}$ and $\g_i$ surrounding $\g_j$ whenever $i<j$. In particular all $\g_i$ must surround $(0,0)$. 
The last inequality follows from applying Corollary \ref{cor bound contours} for each of the nested integrals, with a factor of $2$ since we integrate over $X_{\ell,\beta}$ instead of $X_{\ell,\beta}^+$. 

Since $\eps$ can be made arbitrarily small by taking $n_0, \beta$ large, we get the lemma.
\end{proof}

\subsection{Proof of exponential decay of correlations}
\label{sec exp decay}

We now turn to exponential decay. Theorem \ref{thm exp decay} is an immediate consequence of the following result about loop correlations.

\begin{thm}
\label{thm exp decay loops}
There exists an $n_0, u_0, C, c_1, c_2 >0$ (independent of $\ell,n,u$) such that for $n>n_0,|u|<u_0$, we have
\be
\left. \begin{array}{ll}
\big|\mu_{\ell,\beta,n,u}\big((x,s)\leftrightarrow (y,t)\big)\big| \\
\big|\mu_{\ell,\beta,n,u}\big((x,s)\xleftrightarrow{+} (y,t)\big)\big| \\
\big|\mu_{\ell,\beta,n,u}\big((x,s)\xleftrightarrow{-} (y,t)\big)\big|
\end{array}
\right\}\leq C \e{-c_1 |x-y| - c_2 |s-t|}.
\ee
for all $\ell \in \bbN$, all $x,y \in \{-\ell+1,\dots,\ell\}$, and all $s,t \in \R$.
\end{thm}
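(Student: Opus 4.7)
The plan is to mimic the proof of Theorem \ref{thm loop dimer}, but now distinguishing between short and long loops that could connect the two space-time points.  By the triangle inequality for signed measures, $|\mu_{\ell,\beta,n,u}(A)|\leq \mu_{\ell,\beta,n,|u|}(A)$ for any event $A$, so it suffices to work with the positive measure obtained by replacing $u$ by $|u|$.  All three connection events imply $(x,s)\leftrightarrow (y,t)$, i.e.\ that the two points lie on a common loop $l$; either (a) $l$ is short, which forces $|x-y|\leq 1$, or (b) $l$ is long or winding, in which case $l$ is part of a unique contour $\gamma^*$ visiting both points.  I would treat these two cases separately.

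For case (a), a short loop traversing both points consists of two double-bars on the edge $\{x,y\}$ (with $|x-y|\leq 1$) at times $s', t'$ straddling $s$ and $t$ on the circle $T_\beta$, together with the absence of any further link on the two edges incident to the loop between heights $s'$ and $t'$.  Direct integration against $\dd\rho_{|u|}$ (placing the two double-bars freely and using the Poisson-process property for the remaining links) yields a bound of the form $C\,\bbone\{|x-y|\leq 1\}\,\e{-c|s-t|}$, with $C,c$ depending only on $u_0$.

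For case (b), I would run the cluster-expansion argument of Lemma \ref{lem useful bound} with the role of $(0,0)$ played by the pair $\{(x,s),(y,t)\}$: set $\Gamma_0$ to be the set of compatible contours $\gamma \in \Gamma$ such that $\S(\gamma;\Gamma)$ visits $(x,s)$ or $(y,t)$, and let $g$ be the indicator that some contour in $\Gamma_0$ visits both.  Inequalities \eqref{eq:incompatgammasGamma0}--\eqref{eq:a plus eps bnd} go through verbatim, producing
\[
|\mu_{\ell,\beta,n,|u|}(\text{case (b)})| \leq \e{\eps}\int_{X^+_{\ell,\beta}}\dd\bar\rho_1(\gamma_0)\,|w(\gamma_0)|\,\e{a(\gamma_0)}\,\bbone\{\gamma_0\text{ visits both }(x,s)\text{ and }(y,t)\}.
\]
By the weight bound in the proof of Lemma \ref{lem:KPcrit} we have $|w(\gamma_0)|\e{a(\gamma_0)}\leq \e{-(c_1+b_1)|\gamma_0|-(c_2+b_2)\#\gamma_0}$ for non-winding $\gamma_0$.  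Any contour visiting both points satisfies $\#\gamma_0\geq |x-y|$ (to span the columns between $x$ and $y$) and $|\gamma_0|\geq |s-t|$ (to span the cylinder between heights $s$ and $t$; valid up to wraparound, which is immaterial as $\beta\to\infty$).  Splitting the exponential weight in two---extracting $\e{-\frac12(c_1+b_1)|s-t|-\frac12(c_2+b_2)|x-y|}$ and using Lemma \ref{lem:vgammaBnd} to bound the residual integral $\int \dd\bar\rho_1(\gamma_0)\,\e{-\frac12(c_1+b_1)|\gamma_0|-\frac12(c_2+b_2)\#\gamma_0}\,\bbone\{(x,s)\in\gamma_0\}$ by a constant---produces the desired exponential decay.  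Winding contours are handled separately via \eqref{eq:WcontourContribution}, contributing $O(\e{-c\beta})$ which is absorbed into the prefactor.

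The main obstacle is the careful bookkeeping in adapting Lemma \ref{lem useful bound} from a single distinguished point to a two-point condition, and verifying that the geometric lower bounds $|\gamma_0|\geq |s-t|$ and $\#\gamma_0\geq |x-y|$ really do hold in the presence of periodic time and arbitrary positions of the two points.  No genuinely new analytic input is required beyond Lemmas \ref{lem:nice new lemma} and \ref{lem:vgammaBnd}, which already supply the exponential weight bound on $|w(\gamma_0)|$ and a polynomial-cost enumeration of contours through a prescribed point, respectively.
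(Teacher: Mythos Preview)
Your opening reduction $|\mu_{\ell,\beta,n,u}(A)|\leq \mu_{\ell,\beta,n,|u|}(A)$ is false: $\mu$ is normalised by the partition function, and the ratio $Z_{\ell,\beta,n,|u|}/|Z_{\ell,\beta,n,u}|$ is exponential in $\ell\beta$ by Proposition~\ref{pro:cluster}. Fortunately this step is also unnecessary, since the cluster-expansion bounds in Lemma~\ref{lem useful bound} already work with signed weights; simply drop it.

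Case (a) does not go through. You propose to ``integrate directly against $\dd\rho_{|u|}$'', but the measure in play is $\mu$, which carries the factor $n^{\mathcal L(\omega)-\#\omega_{\dbar}}$ and the normalisation $Z^{-1}$. A short loop is not a contour and is integrated out in Proposition~\ref{pro:Z}; there is no isolated weight for it that a Poisson computation can capture. The paper in fact sidesteps $|x-y|\leq 1$ entirely, obtaining decay in $|s-t|$ from the spectral gap (Theorem~\ref{thm gap}), which is proved independently.

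For case (b) your strategy can be made to work, but it is more laborious than the paper's and your claimed single-integral bound does not follow ``verbatim'' from \eqref{eq:incompatgammasGamma0}--\eqref{eq:a plus eps bnd}: those inequalities bound $|\Phi_{\ell,\beta}-\Phi_{\ell,\beta}(\Gamma_0)|$, not the integral over $\Gamma_0$, and your $\Gamma_0$ (contours \emph{visiting} either point) lacks the nested structure used in the last step of \eqref{eq:muOc}, so an extra sum over the remaining contours of $\Gamma_0$ is required. The paper instead keeps the \emph{same} $\Gamma_0$ as in Lemma~\ref{lem useful bound} (contours whose interior contains $(x,s)$) and observes that for $|x-y|>1$ the connection indicator equals $\one_{\cO^c}g$ with $g$ depending only on $\Gamma_0$. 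The crucial point is that when $g(\Gamma_0)=1$, \emph{every} $\gamma_0\in\Gamma_0$ must surround the connecting loop and hence satisfies $|\gamma_0|\geq 2|s-t|$ and $\#\gamma_0\geq 2|x-y|$; one then absorbs the prefactor $e^{c_1|x-y|+c_2|s-t|}$ into the $e^{b(\gamma_0)}$ already built into Corollary~\ref{cor bound contours} and repeats \eqref{eq:muOc} verbatim.
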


\begin{proof}
All three bounds can be proved in the same way; here we only discuss the first one. We closely follow the proof of Lemma \ref{lem useful bound} and assume $x=s=0$ for notational convenience.

It turns out that the proof for $|x-y| \leq 1$ present uninformative technical difficulties. On the other hand, exponential decay can be easily proved in the equivalent quantum model by expanding the trace in the basis of eigenvectors of the Hamiltonian and by using the existence of a spectral gap (which is proved in the next section). 

So it is enough to consider here $|x-y|>1$.
This allows us to write $\one\{(x,s)\leftrightarrow(y,t)\}(\om)=\one_{\cO^c}(\om)g(\om)$ with a function $g$ such that $|g|\leq 1$ and that only depends on the contours $\g\in\G(\om)\in\fA_{\ell,\beta}$ that surround or contain $(x,s)$. (Recall that we assumed $(x,s)$ to be the origin $(0,0)$ --- otherwise one would simply have to redefine $\cO$ and $\G_0$ to depend on $(x,s)$.)
We proceed as in Lemma \ref{lem useful bound} to get
\be
	\mu_{\ell,\beta,n,u}\big((x,s)\leftrightarrow (y,t)\big) = \frac{Z_{\ell,\beta,n,u}[\mathcal O^c;g]}{Z_{\ell,\beta,n,u}},
\ee
where $Z_{\ell,\beta,n,u}[\mathcal O^c;g]$ is given as in Eq.\ \eqref{eq:ZGamma}. 
Proceeding exactly the same way, we get the analogue of Eq.\ \eqref{eq:muOc}, namely
\be
\begin{split}
	&\e{c_1|x-y|+c_2|s-t|} |\mu_{\ell,\beta,n,u}(\one_{\cO^c}g)|
	\leq \e{c_1|x-y|+c_2|s-t|} \e\eps \int_{\fC^+_{\ell,\beta}\setminus\{\emptyset\}}\dd\bar\rho_1(\G_0)g(\G_0)\prod_{\g_0\in\G_0}\big(|w(\g_0)| \e{a(\g_0)}\big)\\
	&\hspace{8cm} \times\one\{(x,s)\in\overline{I(\S(\g_0;\G_0))}\forall\g_0\in\G_0\}\\
	&\qquad \leq \e\eps \int_{\fC^+_{\ell,\beta}\setminus\{\emptyset\}}\dd\bar\rho_1(\G_0)\prod_{\g_0\in\G_0}\big(|w(\g_0)|\e{a(\g_0)+b(\g_0)}\big) \one\{(x,s)\in\overline{I(\S(\g_0;\G_0))}\forall\g_0\in\G_0\}\\
	&\qquad \leq 2\e\eps \left({\eps\over1-\eps}\right).
\end{split}
\ee
Here we chose $\eps \in (0,1)$ to be a constant, independent of all other parameters $n,\beta,\ell,u$.

For all $\g\in\G_0$ such that $g(\G_0)\neq 0$ (hence $g(\G_0)=1$) we have $|\gamma| \geq 2|s-t|$ and $\#\gamma \geq 2|x-y|$. Recall the function $b$ of Lemma \ref{lem:KPcrit}. Choosing $c_1 = 2b_1, c_2 = 2b_2$, we get the second inequality. Corollary \ref{cor bound contours} then allows us to proceed as in Eq.\ \eqref{eq:muOc}, which gives the last inequality.
\end{proof}

\section{Proof of the spectral gap}
\label{sec gap}

We follow the method of Kennedy and Tasaki \cite{KT} and show that the method of cluster expansion can be used to prove the existence of a positive spectral gap. Indeed, it implies the validity of the following lemma (recall that $Z_{\ell,\beta} = \Tr \e{-2\beta H_\ell}$).

\begin{lemma}
\label{lem clexp gap}
There exists $n_0, u_0, c>0$ (independent of $\ell,\beta,n,u$) and $C_\ell$ (independent of $\beta,n,u$) such that for all $n \geq n_0$ and $|u| \leq u_0$, we have for all $\beta\geq\frac12$ that
\[
\bigl| E_0^{(\ell)} + \tfrac1{2\beta} \log Z_{\ell,\beta} \bigr| \leq C_\ell \e{-\beta c}.
\]
\end{lemma}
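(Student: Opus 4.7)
The plan is to combine the cluster expansion of Proposition \ref{pro:cluster} with the spectral expansion of $Z_{\ell,\beta}=\Tr\e{-2\beta H_\ell}$. By Theorem \ref{thm spin loop}(a) and Proposition \ref{pro:cluster},
\[
\tfrac{1}{2\beta}\log Z_{\ell,\beta} \;=\; (1+u)|E_\ell^+| + \tfrac{1}{2\beta}\Phi_{\ell,\beta},
\]
so the lemma reduces to showing $\tfrac{1}{2\beta}\Phi_{\ell,\beta}=f_\ell+O(\e{-c\beta})$ for some $\beta$-independent constant $f_\ell$. Granting this, the spectral identity $\lim_{\beta\to\infty}\tfrac{1}{2\beta}\log Z_{\ell,\beta}=-E_0^{(\ell)}$ (which holds for any ground-state multiplicity, since the contribution of excited states is $O(\e{-2\beta\Delta^{(\ell)}})$ inside the logarithm) forces $(1+u)|E_\ell^+|+f_\ell=-E_0^{(\ell)}$, giving the desired bound for $\beta$ larger than some threshold $\beta_0(\ell,n)$; the bound on $[\tfrac12,\beta_0]$ is then automatic upon enlarging $C_\ell$.

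To establish exponential convergence of $\tfrac{1}{2\beta}\Phi_{\ell,\beta}$, I would split $\Phi_{\ell,\beta}=\Phi_{\ell,\beta}^{\mathrm{w}}+\Phi_{\ell,\beta}^{\mathrm{nw}}$ according to whether each cluster contains a winding contour. For the winding part, every cluster in $\Phi_{\ell,\beta}^{\mathrm{w}}$ contains at least one winding contour, whose Koteck\'y--Preiss-weighted contribution is $\leq C(\ell,n)\e{-c'\beta}$ by \eqref{eq:WcontourContribution}; absorbing the remaining combinatorial sum via Theorem \ref{thm clexp}(c) yields $|\Phi_{\ell,\beta}^{\mathrm{w}}|\leq C_\ell\e{-c'\beta}$. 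For the non-winding part, time-translation invariance of $\dd\bar\rho_1$ on the periodic circle $T_\beta$ gives \emph{exactly} $\Phi_{\ell,\beta}^{\mathrm{nw}}=2\beta\,A_\ell(\beta)$, where $A_\ell(\beta)$ is the integral over translation-equivalence classes (``shapes'') of non-winding clusters of vertical extent at most $2\beta$. The limit $f_\ell:=A_\ell(\infty)$ is absolutely convergent by Theorem \ref{thm clexp}(a), and the truncation error $f_\ell-A_\ell(\beta)$ is a rooted integral over shapes of extent $>2\beta$. The geometric observation that two incompatible non-winding contours must have overlapping vertical supports implies that a connected cluster of vertical extent $L$ satisfies $\sum_i|\gamma_i|\geq L$; combined with the factor $\e{b(\gamma_1)+\cdots+b(\gamma_k)}$ with $b_1>0$ available in Theorem \ref{thm clexp}(c) (via Lemma \ref{lem:KPcrit}), this yields $|f_\ell-A_\ell(\beta)|\leq C_\ell\e{-2\beta b_1}$, as required.

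The main obstacle is the last estimate: carefully justifying that a cluster's vertical extent is controlled by $\sum_i|\gamma_i|$ even when connectivity is through the incompatibility graph rather than literal overlap, so that the successive incompatibility constraints chain together geometrically. Once this is in place, the Koteck\'y--Preiss-type bound carries through to exponential decay in cluster extent, and the rest is routine bookkeeping combining the standard cluster-expansion estimates with the elementary spectral argument above.
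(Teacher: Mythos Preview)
Your approach is correct and shares the two essential ingredients with the paper's proof: time-translation invariance to localize the cluster integral at a fixed height, and the geometric fact that a cluster of vertical extent $L$ satisfies $\sum_i|\gamma_i|\geq L$, which combined with the $b$-weighted Koteck\'y--Preiss bound of Lemma~\ref{lem:KPcrit} yields exponential decay in $L$. The ``obstacle'' you flag is not serious: two incompatible non-winding contours must have intersecting closed interiors by the negation of condition~(1) in Definition~\ref{def:compatible}, hence overlapping vertical projections; connectivity in the incompatibility graph therefore forces connectivity of vertical supports, and since each non-winding contour has $|\gamma|$ at least twice its vertical extent, the claimed inequality follows.

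The paper organizes the argument differently. It does not split off winding contours or introduce a shape integral $A_\ell(\beta)$; instead it writes uniformly
\[
\tfrac{1}{2\beta}\Phi_{\ell,\beta}=\int_{\frC_\beta^+}\dd\bar\rho_1(\Gamma)\,\frac{w(\Gamma)\varphi(\Gamma)}{\frL(\Gamma)}\,\bbone_0(\Gamma),
\]
where $\frL(\Gamma)$ is the vertical extent of the cluster and $\bbone_0$ the indicator that $\Gamma$ crosses height $0$. This is the continuous analogue of your equivalence-class integral and holds for winding and non-winding clusters alike. The paper then compares two values $\beta<\beta'$ directly in a Cauchy-type estimate, so that clusters of extent $<2\beta$ cancel exactly and one never needs to define the limit $A_\ell(\infty)$ on the infinite cylinder. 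To feed the remaining constraint $\bbone_0(\Gamma)=1$ into Theorem~\ref{thm clexp}(c), the paper introduces a thin auxiliary contour $\gamma^{(\eta)}$ surrounding the line $t=0$, so that $\bbone_0(\Gamma)\leq\sum_i(1-\delta(\gamma^{(\eta)},\gamma_i))$, and obtains an explicit $C_\ell$ upon letting $\eta\to0$. This unified treatment sidesteps both the separate winding estimate and the infinite-cylinder construction, but your route reaches the same conclusion with only slightly more bookkeeping.
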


\begin{proof}
We check that, for all $1 \leq \beta < \beta'$, we have
\be
\bigl| \tfrac1{2\beta} \Phi_{\ell\beta} - \tfrac1{2\beta'} \Phi_{\ell\beta'} \bigr| \leq C_\ell \e{-\beta c}.
\ee
Since $-\frac1{2\beta} \log Z_{\ell,\beta} = 2\ell(1+u) - \frac1{2\beta} \log \Phi_{\ell,\beta}$, we get the lemma by taking the limit $\beta' \to \infty$.

Let $\frC_\beta^+$ denote the set of clusters in $V_\ell \times T_\beta$, i.e.\ the sequences of contours $\Gamma = (\gamma_1,\dots,\gamma_k)$, $\gamma_i \in X_{\ell,\beta}^+$, such that $\varphi(\Gamma) \neq 0$. For $t \in T_\beta$, let $\bbone_t(\Gamma)$ be the indicator that the cluster $\Gamma$ crosses the line $V_\ell \times \{t\}$. Let $\frL(\Gamma) \in [0,2\beta]$ be the vertical length of the cluster $\Gamma$:
\be
\frL(\Gamma) = \int_{-\beta}^\beta \bbone_t(\Gamma) \, \dd t.
\ee
Then, from the cluster expansion of Proposition \ref{pro:cluster},
\be
\begin{split}
\frac1{2\beta} \Phi_{\ell,\beta} &= \frac1{2\beta} \int_{\frC_\beta^+} \dd\bar\rho_1(\Gamma) w(\Gamma) \varphi(\Gamma) = \frac1{2\beta} \int_{\frC_\beta^+} \dd\bar\rho_1(\Gamma) \int_{-\beta}^\beta \dd t \frac{\bbone_t(\Gamma)}{\frL(\Gamma)} w(\Gamma) \varphi(\Gamma) \\
&= \frac1{2\beta} \int_{-\beta}^\beta \dd t \int_{\frC_\beta^+} \dd\bar\rho_1(\Gamma) \frac{w(\Gamma) \varphi(\Gamma)}{\frL(\Gamma)} \bbone_t(\Gamma) = \int_{\frC_\beta^+} \dd\bar\rho_1(\Gamma) \frac{w(\Gamma) \varphi(\Gamma)}{\frL(\Gamma)} \bbone_0(\Gamma).
\end{split}
\ee
We used Fubini's theorem to exchange the integrals, and time translation invariance in the last step. The last expression is convenient to cancel terms for different $\beta$s; for $\beta < \beta'$, we have
\be
\label{diff free energies}
\begin{split}
\frac1{2\beta} \Phi_{\ell,\beta} &- \frac1{2\beta'} \Phi_{\ell,\beta'} = \int_{\frC_\beta^+} \dd\bar\rho_1(\Gamma) \frac{w(\Gamma) \varphi(\Gamma)}{\frL(\Gamma)} \bbone_0(\Gamma) - \int_{\frC_{\beta'}^+} \dd\bar\rho_1(\Gamma) \frac{w(\Gamma) \varphi(\Gamma)}{\frL(\Gamma)} \bbone_0(\Gamma) \\
&= \int_{\frC_\beta^+} \dd\bar\rho_1(\Gamma) \frac{w(\Gamma) \varphi(\Gamma)}{\frL(\Gamma)} \bbone_0(\Gamma) \bbone_{\frL(\Gamma) = 2\beta} - \int_{\frC_{\beta'}^+} \dd\bar\rho_1(\Gamma) \frac{w(\Gamma) \varphi(\Gamma)}{\frL(\Gamma)} \bbone_0(\Gamma) 1_{\frL(\Gamma) \geq 2\beta}.
\end{split}
\ee
Indeed, the contribution of clusters with $\frL(\Gamma) < 2\beta$ has precisely canceled.

\begin{figure}[htb]\center
\includegraphics[scale=.5]{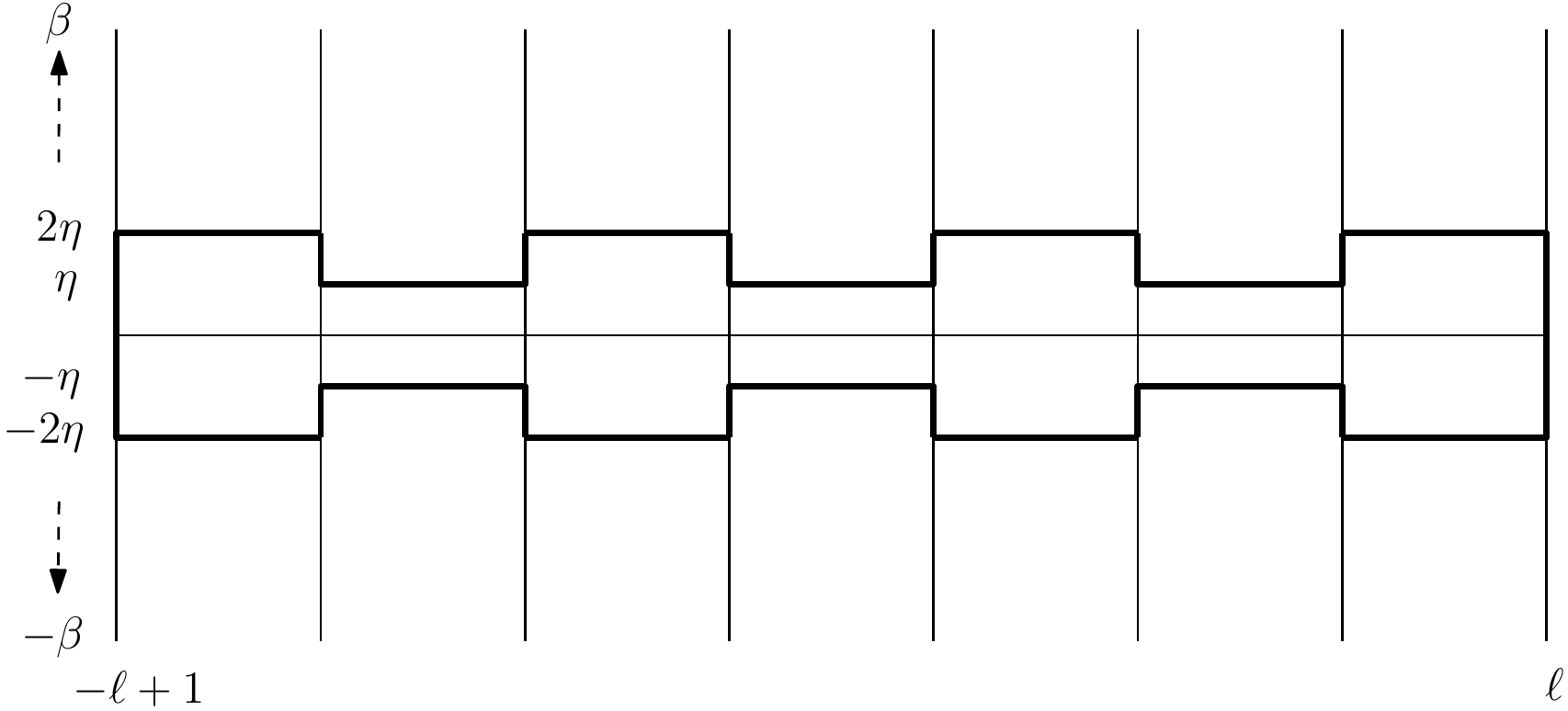}
\caption{The contour $\gamma^{(\eta)}$ used in the proof of Lemma \ref{lem clexp gap}.}
\label{fig contour gap}
\end{figure}

We can use estimates from cluster expansions in order to bound the terms above. For $\eta>0$, let us introduce the contour $\gamma^{(\eta)}$ that surrounds the horizontal axis at time 0, as shown in Fig.\ \ref{fig contour gap}. Its vertical length goes to 0 as $\eta\to0$. We have $a(\gamma^{(\eta)}) = 2(2\ell-1) a_1 + 4(\ell+1) \eta a_2$; and if $\eta<2\beta$, we have
\be
\bbone_0(\Gamma) \leq 1 - \delta(\gamma^{(\eta)}, \Gamma) \leq \sum_{\gamma\in\Gamma} \bigl( 1- \delta(\gamma^{(\eta)},\gamma) \bigr).
\ee
Using $1 \leq 2\beta \leq \sum_{\gamma\in\Gamma} |\gamma|$, which holds for contours such that $1_{\frL(\Gamma) \geq 2\beta}=1$, we have, using Lemma \ref{lem:KPcrit} and the estimate in Theorem \ref{thm clexp} (c) that
\be
\begin{split}
\e{2b_2 \beta} \biggl| \int_{\frC_\beta^+} \dd\bar\rho_1(\Gamma) & \frac{w(\Gamma) \varphi(\Gamma)}{\frL(\Gamma)}  \bbone_0(\Gamma) \bbone_{\frL(\Gamma) = 2\beta} \biggr| \\
&\leq \int_{\frC_\beta^+} \dd\bar\rho_1(\Gamma) |w(\Gamma)| \e{b_2 \sum_{\gamma\in\Gamma} |\gamma|} |\varphi(\Gamma)| \sum_{\gamma\in\Gamma} \bigl( 1- \delta(\gamma^{(\eta)},\gamma) \bigr) \\
&\leq 2(2\ell-1) a_1 + 4(\ell+1) \eta a_2.
\end{split}
\ee
The other term in the right side of \eqref{diff free energies} can be estimated in the same way, giving the same bound. This proves Lemma \ref{lem clexp gap} with $C_\ell = 4(2\ell-1) a_1$ (we can take $\eta\to0$) and $c = 2b_2$.
\end{proof}

\begin{proof}[Proof of Theorem \ref{thm gap}]
With $m_i^{(\ell)}$ the multiplicity of the eigenvalue $E_i^{(\ell)}$ (satisfying $\sum_{i\geq0} m_i^{(\ell)} = n^{2\ell}$), we have
\be
Z_{\ell,\beta} = \sum_{i\geq0} m_i^{(\ell)} \e{-2\beta E_i^{(\ell)}} = m_0^{(\ell)} \e{-2\beta E_0^{(\ell)}} \biggl( 1 + \sum_{i\geq1} \tfrac{m_i^{(\ell)}}{m_0^{(\ell)}} \, \e{-2\beta (E_i^{(\ell)} - E_0^{(\ell)})} \biggr).
\ee
Thus
\be
\label{Z large beta}
-\tfrac1{2\beta} \log Z_{\ell,\beta} = E_0^{(\ell)} - \tfrac1{2\beta} \log m_0^{(\ell)} - \tfrac1{2\beta} R(\ell,\beta),
\ee
where
\be
\label{lower bound for R}
R(\ell,\beta) = \log \biggl( 1 + \sum_{i\geq1} \tfrac{m_i^{(\ell)}}{m_0^{(\ell)}} \, \e{-2\beta (E_i^{(\ell)} - E_0^{(\ell)})} \biggr) \geq \log \biggl( 1 + \tfrac{m_1^{(\ell)}}{m_0^{(\ell)}} \, \e{-2\beta \Delta^{(\ell)}} \biggr) \geq \e{-3\beta \Delta^{(\ell)}},
\ee
for $\beta$ large enough (depending on $\ell$). On the other hand, Lemma \ref{lem clexp gap} implies that
\be
-\tfrac1{2\beta} \log Z_{\ell,\beta} = E_0^{(\ell)} + R'(\ell,\beta),
\ee
where $|R'(\ell,\beta)| \leq C_\ell \e{-\beta c}$. We then have
\be
- \tfrac1{2\beta} \log m_0^{(\ell)} - \tfrac1{2\beta} R(\ell,\beta) = R'(\ell,\beta).
\ee
Using the bound $R(\ell,\beta) \leq n^{2\ell} \e{-2\beta \Delta^{(\ell)}}$ where $\Delta^{(\ell)} > 0$, and looking at the asymptotic $\beta\to\infty$, we see that $m_0^{(\ell)} = 1$. Next, using \eqref{lower bound for R}, we get
\be
\tfrac1{2\beta} \e{-3\beta \Delta^{(\ell)}} \leq C_\ell \e{-\beta c}
\ee
for all $\beta$ sufficiently large; this implies that $\Delta^{(\ell)} \geq \frac13 c$, uniformly in $\ell,n,u$.
\end{proof}

\appendix

\section{The interaction $uT+vP$ when $n$ is even}
\label{app P}

For $n$ odd, the interactions $uT+vP$ and $uT+vQ$ are related by the
unitary transformation of Eq.\ \eqref{takagi_basis}. This holds for
models defined on arbitrary graphs or lattices. 

We now discuss the case of $n$ even. As we shall see, we need to
restrict ourselves to bipartite graphs (of which the chain is of
course an example).  We work with the $S^{(3)}$-eigenbasis
$e_\alpha:=|\alpha\rangle$ with $\alpha = - S, - S + 1, \dots,S$. 
To begin, we define a unitary $V$ by setting 
\be
\label{Bruno's unitary}
V\ket{\alpha} = (-1)^{S-\alpha}\ket{-\alpha}.
\ee
With $\psi$ the vector of \eqref{psi} and $\phi$ the vector of \eqref{phi}, we have
\be
\phi = (\one\otimes V)\psi.
\ee
Therefore, since $P$ is the projection onto $\phi$,
\be
P = (\one\otimes V)Q(\one\otimes V^*).
\ee
Since $T\psi=\psi$ and $T\phi=-\phi$, we have $TQT=Q$ and $TPT=P$. 
Using these properties we find
\be
(V\otimes\one) Q (V^*\otimes \one) = (V\otimes\one) TQT (V^*\otimes \one)
= T(\one\otimes V)Q(\one\otimes V^*)T = P.
\ee
Both models are translation-invariant although the unitary that relates them is not: 
\be
(V\otimes \one\otimes V\cdots \otimes\one) 
\left[ \sum_{x=-\ell+1}^{\ell-1} Q_{x,x+1}\right] 
(V^*\otimes \one\otimes V^*\cdots\otimes \one) = 
\sum_{x=-\ell+1}^{\ell-1} P_{x,x+1}.
\ee
Let $\tilde T$ be the transformation of the operator $T$. We have
\be
\tilde T = (\one \otimes V) T (\one \otimes V^*) = (\one \otimes V)(V^*\otimes \one)  T = - (V\otimes V) T.
\ee

Let us summarize the above considerations by the following proposition. We define the new Hamiltonian $H_\ell' = \sum_{x=-\ell+1}^{\ell-1} \bigl( u \tilde T_{x,x+1} + v Q_{x,x+1} \bigr)$.

\begin{proposition}
\label{prop H'}
For $n$ even, the interaction $uT+vP$ is unitarily equivalent with $u \tilde T + vQ$. The Hamiltonian $\tilde H_\ell$ defined in \eqref{def H tilde} is unitarily equivalent to $H_\ell'$. 
\end{proposition}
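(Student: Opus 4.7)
The plan is to promote the single-bond unitary identities spelled out in the excerpt to a global unitary equivalence on the chain, by using a suitable alternating tensor product that places $V$ on every other site. The bipartite structure of the chain is what makes this possible.

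First I would collect the algebraic facts about $V$ that are needed. From $V\ket{\alpha}=(-1)^{S-\alpha}\ket{-\alpha}$ one computes $V^2\ket{\alpha}=(-1)^{2S}\ket{\alpha}$; since $n=2S+1$ even forces $S$ half-integer, this gives $V^2=-\one$. Reading off the matrix entries yields $V^*=-V$. Next, an immediate calculation in the product basis confirms $\phi=(\one\otimes V)\psi$, hence $P=(\one\otimes V)Q(\one\otimes V^*)$. Combining $T\psi=\psi$ and $T\phi=-\phi$ (the latter relies on $n$ even, so $\phi$ is antisymmetric) with $T^2=\one$ gives $TQT=Q$ and $TPT=P$, which in turn yield $(V\otimes\one)Q(V^*\otimes\one)=P$ by the sandwich $T(\one\otimes V)T=V\otimes\one$ argument already indicated.

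Next I would establish the bond-level equivalence. Using $V^2=-\one$, one has
\[
(\one\otimes V)P(\one\otimes V^*)=(\one\otimes V^2)Q(\one\otimes V^{*2})=Q,
\qquad (V\otimes\one)P(V^*\otimes\one)=Q,
\]
and the definition $\tilde T=(\one\otimes V)T(\one\otimes V^*)$ is built in. A short computation using $V^*=-V$ shows that $(V\otimes\one)T(V^*\otimes\one)(a\otimes b)=Vb\otimes V^*a=-(V\otimes V)T(a\otimes b)=\tilde T(a\otimes b)$, so the same $\tilde T$ also arises on the ``other half'' of bonds. Therefore, for either of the two two-site unitaries $\one\otimes V$ or $V\otimes\one$, conjugation sends $uT+vP$ to $u\tilde T+vQ$, proving the single-bond claim.

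Finally I would lift this to the chain. Let $U\deq V\otimes\one\otimes V\otimes\one\otimes\cdots\otimes\one$ act on $\caH_\ell=(\CC^n)^{\otimes 2\ell}$, with $V$ on the odd-indexed tensor factors and $\one$ on the even-indexed ones (this is where bipartiteness of the chain is used; the same construction would not be available on a non-bipartite graph). For every nearest-neighbor bond $(x,x+1)$, precisely one of the two sites carries $V$, so the induced two-site conjugation is either $V\otimes\one$ or $\one\otimes V$, and the bond-level calculation above gives
\[
U\,(uT_{x,x+1}+vP_{x,x+1})\,U^{*}=u\tilde T_{x,x+1}+vQ_{x,x+1}
\]
for every $x$. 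Summing over $-\ell+1\le x\le \ell-1$ yields $U\tilde H_\ell U^{*}=H'_\ell$, which is the claimed unitary equivalence.

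The main (and essentially only) obstacle is sign bookkeeping: the whole argument turns on $V^2=-\one$ and $V^*=-V$, identities that fail for integer $S$ but hold precisely in the case of interest here, $n$ even; once these are in hand, each required identity is a line of algebra.
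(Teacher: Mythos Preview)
Your argument is correct and follows the same strategy as the paper: conjugate bond-by-bond using the alternating unitary $U=V\otimes\one\otimes V\otimes\cdots$, relying on the identities $P=(\one\otimes V)Q(\one\otimes V^*)=(V\otimes\one)Q(V^*\otimes\one)$ and $\tilde T=-(V\otimes V)T$. In fact you are slightly more careful than the paper: the paper defines $\tilde T=(\one\otimes V)T(\one\otimes V^*)$ and leaves implicit that $(V\otimes\one)T(V^*\otimes\one)$ yields the \emph{same} operator, whereas you verify this explicitly via $V^*=-V$ (valid precisely because $n$ is even), which is exactly what is needed to make the global conjugation work on bonds of both parities.
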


Notice that, when $u=0$, the $Q$-model and the $P$-model are unitarily equivalent for all $n$. 
The proposition is stated for chains, but it clearly holds for arbitrary bipartite graphs.

Next, we derive a loop representation for the model $H_\ell'$.

\begin{proposition}
\label{prop H' loops}
There exists a function $s(l)$ from the set of loops to $\pm1$ such that for all $n \geq 2$,
\begin{itemize}
\item[(a)] $\displaystyle \Tr \e{-2\beta H_\ell'} = \e{2\beta (1+u) |E_\ell|} \int \dd\rho_{u}(\omega) n^{\caL(\omega) - |\omega_{\dbar}|} \prod_{\text{loop $l$ in $\omega$}} s(l)$.
\item[(b)] For $i=1,2,3$,
\begin{multline*}
\Tr S_x^{(i)} S_y^{(i)} \e{-2\beta H_\ell'} = \tfrac{n^2-1}{12} \e{2\beta (1+u) |E_\ell|} \int_{\Omega_{\ell,\beta}} \dd\rho_{u}(\omega) \, n^{\mathcal L(\omega)-|\omega_{\dbar}|} \\
\times \bigl( \bbone[x \overset{+}{\longleftrightarrow} y] - \bbone[x\overset{-}{\longleftrightarrow} y] \bigr) \prod_{\text{loop $l$ in $\omega$}} s(l).
\end{multline*}
\end{itemize}
\end{proposition}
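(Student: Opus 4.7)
The approach is to mirror the Feynman–Kac expansion of Theorem \ref{thm spin loop}, tracking the modifications caused by the unitary $V$ of Proposition \ref{prop H'}. Since $\tilde T = -(V\otimes V) T$, the Hamiltonian $H_\ell'$ (at $v=-1$) is obtained from the Hamiltonian $-\sum_x(uT_{x,x+1}+Q_{x,x+1})$ of Theorem \ref{thm spin loop} by inserting $V\otimes V$ at every cross. Consequently, the signed Poisson measure $\rho_u$ on $\Omega_{\ell,\beta}$, the combinatorial loop structure, and the prefactor $\e{2\beta(1+u)|E_\ell|}$ are unchanged; only the spin matrix element at each cross changes from $\langle\gamma\delta|T|\alpha\beta\rangle=\delta_{\gamma\beta}\delta_{\delta\alpha}$ to
\[
\langle\gamma\delta|(V\otimes V)T|\alpha\beta\rangle = (-1)^{S-\alpha}(-1)^{S-\beta}\delta_{\gamma,-\beta}\delta_{\delta,-\alpha}.
\]

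I would next analyze the resulting spin sum loop by loop. Along a loop $l$ starting at a point with spin $\sigma_0$, each cross-visit flips the current spin ($\sigma\mapsto-\sigma$) and contributes a scalar $(-1)^{S-\sigma}$. Let $k_l$ denote the total number of cross-visits along $l$. The key combinatorial fact is that $k_l$ is always even. Indeed, each event on $l$ (a cross or a double bar) shifts the column coordinate by $\pm1$, so closure of $l$ forces $k_l+d_l$ to be even, where $d_l$ counts double-bar visits; moreover, double bars reverse the vertical orientation while crosses preserve it, so closure in the orientation variable forces $d_l$ itself to be even, and therefore $k_l$ is even. Given $k_l$ even, the spin returns consistently to $\sigma_0$, and the telescoped product of per-cross scalars collapses to
\[
\prod_{i=1}^{k_l}(-1)^{S-\sigma_{i-1}} = (-1)^{k_l S - \sigma_0\sum_{i=0}^{k_l-1}(-1)^i} = (-1)^{k_l S},
\]
because the alternating sum vanishes. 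Summing over $\sigma_0\in\{-S,\dots,S\}$ then contributes $n\cdot(-1)^{k_l S}$ per loop, and setting $s(l)\deq(-1)^{k_l S}\in\{\pm1\}$ yields part (a).

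For part (b), I would insert $S_x^{(i)}$ and $S_y^{(i)}$ at time $0$ in the same expansion. As in Theorem \ref{thm spin loop}(b), a nonvanishing trace requires $(x,0)$ and $(y,0)$ to lie on the same loop. Evaluating the matrix elements of $S^{(i)}$ on that distinguished loop—using rotation invariance to treat $i=1,2,3$ uniformly, and carefully tracking the $V$-conjugations accumulated along the two loop-segments from $x$ to $y$—produces the indicator $\bbone[x\overset{+}{\longleftrightarrow}y]-\bbone[x\overset{-}{\longleftrightarrow}y]$ together with the Casimir prefactor $\tfrac{n^2-1}{12}=\tfrac13 S(S+1)=\Tr_{\C^n}(S^{(i)})^2/n$, while the remaining loops contribute $n\cdot s(l)$ as in (a). The principal obstacle is the parity statement $k_l\in 2\mathbb{N}$; without it, the per-loop sign factor would depend on $\sigma_0$ and fail to split as a product over loops. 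An additional subtlety in (b) is that, unlike $L^{\alpha,\alpha'}$, the operators $S^{(i)}$ do not commute simply with $V$, which is why the components $i=1,2$ are best handled via rotation invariance rather than by direct computation.
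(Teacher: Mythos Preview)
Your overall strategy---Feynman--Kac expansion, analyzing the cross matrix element of $\tilde T$, proving $k_l$ even by the bipartite/orientation parity argument, then factorizing the sign over loops---matches the paper's proof. The parity argument for $k_l\in 2\mathbb N$ is correct and is exactly what the paper does.

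However, your computation of the per-loop sign contains a genuine error. You assert that each cross-passage contributes a factor $(-1)^{S-\sigma}$ where $\sigma$ is the \emph{loop-time} spin just before the cross. But the matrix element you wrote down, $(-1)^{S-\alpha}(-1)^{S-\beta}\delta_{\gamma,-\beta}\delta_{\delta,-\alpha}$, singles out the spins $\alpha,\beta$ \emph{below} the cross in real time. These coincide with the loop-time pre-cross spin only when the loop is travelling upward; when it is travelling downward, the below-spin is the \emph{post}-cross spin $-\sigma$, so the factor is $(-1)^{S+\sigma}$ instead. Consequently your telescoped product $(-1)^{k_l S}$ is wrong. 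A concrete counterexample: a loop with two crosses separated by a single double-bar gives $(-1)^{S-\alpha}(-1)^{S-\alpha}=+1$, whereas your formula gives $(-1)^{2S}=-1$ for even $n$.

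The paper's proof handles this correctly by pairing \emph{consecutive} crosses along the loop and noting that the factor for the pair is $-1$ or $+1$ according as the number of intervening double-bars is even or odd; in either case the factor is independent of the spin value $\alpha$, which is the essential point (allowing the free sum over $n$ values per loop). So $s(l)$ exists, but it genuinely depends on the relative positions of crosses and double-bars along $l$, not merely on $k_l$. Your argument for $\sigma_0$-independence rests on the incorrect per-cross factor and therefore does not go through as written; you need to redo it keeping track of the up/down direction at each cross.

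For part (b), your plan (reduce to $i=3$ by $SU(2)$ invariance, then compute) is the same as the paper's and is fine once (a) is repaired.
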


Proposition \ref{prop H' loops} is stated for chains, but it actually holds for arbitrary {\it bipartite} graphs (unlike Theorem \ref{thm spin loop} which holds for all finite graphs). For odd $n$ the signs $s(l)$ are all equal to $+1$.

\begin{proof}
First, observe that the number of crosses along the trajectory
of a loop, is even (here, if a cross is traversed twice in a loop, it counts as two). Indeed, the total number of crosses and double-bars
along the trajectory is even because the graph is bipartite; and the
number of double-bars is even because the number of changes in
vertical direction is even; so the number of crosses is also even. 

The expansion of the operator $\e{-2\beta H_\ell'}$ can be made in
terms of configurations $\omega$, and of  ``space-time spin
configurations" (see \cite{Uel}). The space-time spin
configurations that are compatible with $\omega$ have the property
that their value on a loop is $\pm\alpha$ for some 
$\alpha = -S, \dots, S$, the changes of signs occurring when traversing
crosses (and any such choice results in a possible space-time spin
configuration because the number of crosses along a loop trajectory is even). 

Proceeding as in Theorem \ref{thm spin loop}, we find that
\be
\Tr \e{-2\beta H_\ell'} = \e{2\beta (1+u) |E_\ell|} 
\int \dd\rho_{u}(\omega) n^{\caL(\omega) - |\omega_{\dbar}|} \bss(\omega),
\ee
where $\bss(\omega)$ is an overall sign:
$\bss(\omega) = \pm1$.
Notice that, since $\tilde T$ involves a minus sign, there is no need to change the sign of $u$ in the interaction as in Theorem \ref{thm spin loop}.

\begin{figure}[htb]\center
\includegraphics[scale=.7]{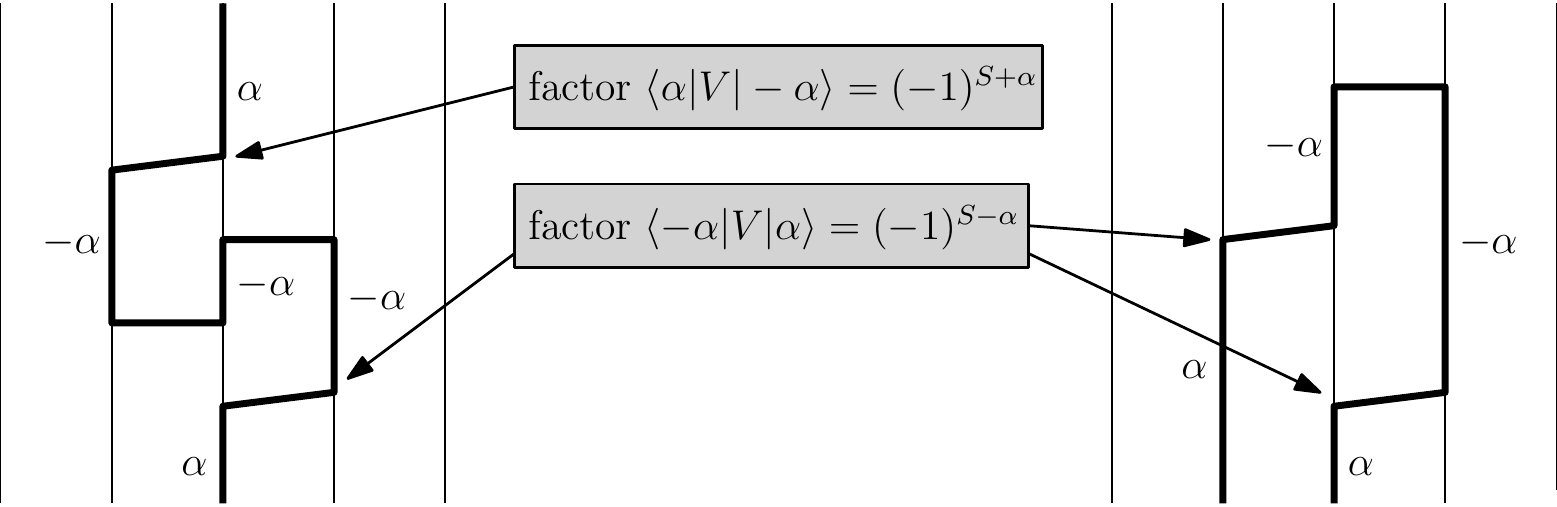}
\caption{Signs arising when traversing crosses. Left: the crosses are separated by an even number of double bars which yields the factor $(-1)^{S-\alpha} (-1)^{S+\alpha} = -1$. Right: the crosses are separated by an odd number of double bars which yields the factor 1.}
\label{fig signs}
\end{figure}

The signs are due to the action of operators $V$. We can collect the
signs for each loop individually. Consider two successive crosses. If
the vertical direction is the same (which is the case if there is an
even number of double-bars between them), we get the factor 
\be
(-1)^{S-\alpha} (-1)^{S+\alpha} = (-1)^{2S} = -1.
\ee
If the vertical direction is opposite (which is the case if there is an odd number of double-bars between them), the factor is
\be
(-1)^{S-\alpha} (-1)^{S-\alpha} = 1.
\ee
This is illustrated in Fig.\ \ref{fig signs}. The value of $s(l)$ is the product of these factors.
Notice that the sign does not depend on the value of $\alpha$ in the
loop. 
This proves item (a) of the proposition. 

The spin correlations are the same for all $i=1,2,3$ by symmetry and it is enough to consider $i=3$. This is identical to \cite[Theorem 3.5 (a)]{Uel} except for the signs (the claim there was restricted to odd $n$ where $s(l)=+1$). Using space-time spin configurations, we have 
\be
\Tr S_x^{(i)} S_y^{(i)} \e{-2\beta H_\ell'} = \e{2\beta (1+u) |E_\ell|} \int_{\Omega_{\ell,\beta}} \dd\rho_{u}(\omega) \, n^{-|\omega_{\dbar}|} \Bigl( \prod_{\text{loop $l$ in $\omega$}} s(l) \Bigr) \sum_{\sigma \in \Sigma(\omega)} \sigma_{x,0} \sigma_{y,0}.
\ee
We used the fact that the signs do not depend on the spin values of the loops. If $(x,0)$ and $(y,0)$ do not belong to the same loop, the sum over $\sigma$ is zero. If $(x,0)$ and $(y,0)$ belong to the same loop and the connection is $x \overset{+}{\longleftrightarrow} y$, then $\sigma_{x,0} = \sigma_{y,0}$ and the sum gives $\frac3 S(S+1)n = \frac1{12}(n^2-1)n$. If the connection is $x \overset{-}{\longleftrightarrow} y$, then $\sigma_{x,0} = -\sigma_{y,0}$ and we get minus the same factor. This gives the identity (b).
\end{proof}

We can now prove Theorem \ref{thm P}.

\begin{proof}[Proof of Theorem \ref{thm P}]
By Proposition \ref{prop H'}, the claims of Theorem \ref{thm P} are
equivalent to proving dimerization in the model with Hamiltonian
$H_\ell'$.  We use the loop representation of Proposition \ref{prop H'
  loops}. We can then retrace the steps of the proof of Theorem
\ref{thm loop dimer}. 
In doing so, note that all short loops $l$ have $s(l)=+1$,
while long or winding loops have $s(l)=\pm 1$.  We incorporate the
latter factors in the weights
$w(\gamma)$ of the contours, see \eqref{eq:w-def}.
Therefore, the only difference is that the weights of 
contours have possibly other signs. All bounds are the same,
though, and the cluster expansion gives the same result.
\end{proof}

The proof of the gap for $\tilde H_\ell$ is exactly the same as the proof for $H_\ell$ described in Section \ref{sec gap}.

\medskip\noindent
{\bf Acknowledgements:}
We are grateful to Vojkan Jak\v si\'c and the Centre de Recherches
Math\'e\-ma\-tiques of Montreal for hosting us during the 
thematic semester ``Mathematical challenges in many-body physics and quantum
information'',  with support from the Simons Foundation through the
Simons--CRM scholar-in-residence program. We also thank the referees for useful comments.

JEB gratefully acknowledges support from \emph{Vetenskapsr{\aa}det} grants
2015-0519 and 2019-04185
as well as \emph{Ruth och Nils-Erik Stenb\"acks stiftelse}.

BN is supported in part by the National Science Foundation under grant DMS-1813149.

\renewcommand{\refname}{\small References}
\bibliographystyle{symposium}

\begin{thebibliography}{99}
\vspace{-2em}
\setlength{\columnsep}{2.5em}

\begin{multicols}{2}\scriptsize

\bibitem{affleck:1990}
I.~Affleck,
\emph{Exact results on the dimerisation transition in $su(n)$ antiferromagnetic chains},
J. Phys.: Condens. Matter 2, 405--415 (1990)

\bibitem{AKLT}
I.~Affleck, T.~Kennedy, E.H.~Lieb, H.~Tasaki,
{\em Valence bond ground states in isotropic quantum antiferromagnets},
Comm. Math. Phys. 115, 477-528 (1988)

\bibitem{ADW}
M~Aizenman, H.~Duminil-Copin, S.~Warzel,
{\em Dimerization and N\'eel order in different quantum spin chains through a shared loop representation},
Ann. Henri Poincar\'e 21, 2737--2774 (2020)

\bibitem{AN}
M.~Aizenman, B.~Nachtergaele,
{\em Geometric aspects of quantum spin states},
Comm. Math. Phys. 164, 17--63 (1994)

\bibitem{Babujian}
H.~Babujian,
{\em Exact solution of the one-dimensional isotropic {H}eisenberg chain with arbitrary spins S},
Phys. Lett. 90A, 479--482 (1982)

\bibitem{barber:1989}
M.N.~Barber, M.T.~Batchelor, \emph{Spectrum of the biquadratic spin-1 antiferromagnetic chain}, Phys. Rev. B 40, 4621--4626 (1989)

\bibitem{DLM}
H.~Duminil-Copin, J.-H.~Li, I.~Manolescu,
{\em Universality for the random-cluster model on isoradial graphs},
Electron. J. Probab. 23, 1--70 (2018)

\bibitem{DLS}
F.J.~Dyson, E.H.~Lieb, B.~Simon,
{\em Phase transitions in quantum spin systems with isotropic and nonisotropic interactions},
J. Statist. Phys. 18, 335--383 (1978)

\bibitem{HK}
M.B.~Hastings, T.~Koma,
{\em Spectral gap and exponential decay of correlations},
Commun. Math. Phys. 265, 781--804 (2006)

\bibitem{Itoi}
C.~Itoi, M.-H.~Kato,
{\em Extended massless phase and the {H}aldane phase in a spin-1 isotropic antiferromagnetic chain},
Phys. Rev. B 55, 8295--8303 (1997)

\bibitem{KT}
T.~Kennedy, H.~Tasaki,
{\em Hidden symmetry breaking and the Haldane phase in $S=1$ quantum spin chains},
Commun. Math. Phys. 147, 431--484 (1992).

\bibitem{klumper:1990}
A.~Kl\"umper,
\emph{The spectra of $q$-state vertex models and related antiferromagnetic quantum spin chains the spectra of q-state vertex models and related antiferromagnetic quantum spin chains},
J. Phys. A: Math. Gen. 23, 809--823 (1990)

\bibitem{KP}
R.~Koteck\'y, D.~Preiss,
{\em Cluster expansion for abstract polymer models},
Commun. Math. Phys. 103, 491--498 (1986)

\bibitem{landau:1981}
L.J.~Landau, J. F.~Perez, W.F.~Wreszinski,
{\em Energy gap, clustering, and the {G}oldstone theorem in statistical mechanics},
J. Stat. Phys. 26, 755--766 (1981)

\bibitem{Lees}
B.~Lees,
\emph{Existence of Néel order in the S=1 bilinear-biquadratic Heisenberg model via random loops},
Commun. Math. Phys. 347, 83--101 (2016)

\bibitem{NS}
B.~Nachtergaele, R.~Sims,
{\em Lieb-Robinson Bounds and the Exponential Clustering Theorem},
Commun. Math. Phys. 265, 119--130 (2006)

\bibitem{NSY2020}
B.~Nachtergaele, R.~Sims, A.~Young,
{\em Quasi-Locality Bounds for Quantum Lattice Systems. Part II. Perturbations of Frustration-Free Spin Models with Gapped Ground States}, 
arXiv:2010.15337 (2020).

\bibitem{NU}
B.~Nachtergaele, D.~Ueltschi,
{\em A direct proof of dimerization in a family of $SU(n)$-invariant quantum spin chains},
Lett. Math. Phys. 107, 1629--1647 (2017)

\bibitem{nepomechie:2016}
R.I.~Nepomechie, R.A.~Pimenta,
\emph{Universal Bethe ansatz solution for the Temperley-Lieb spin chain}, Nucl. Phys. B 910, 910--928 (2016)

\bibitem{RS}
G.~Ray, Y.~Spinka,
{\em A short proof of the discontinuity of phase transition in the planar random-cluster model with q > 4},
Commun. Math. Phys. 378, 1977--1988 (2020)

\bibitem{reshetikhin:1983}
N.Yu.~Reshetikhin,
{\em A method of functional equations in the theory of exactly solvable quantum systems},
Lett. Math. Phys. 7, 205--213 (1983)

\bibitem{Ryan}
K.~Ryan,
{\em The free energy of a class of spin 1/2 and 1 quantum spin systems on the complete graph},
arXiv:2011.07007 (2020)

\bibitem{Sutherland}
B.~Sutherland,
{\em Model for a multicomponent quantum system},
Phys. Rev. B 12, 3795--3805 (1975)

\bibitem{Takhtajan}
L.A.~Takhtajan,
{\em The picture of low-lying excitations in the isotropic {H}eisenberg chain of arbitrary spins},
Phys. Lett. 87A, 479--482 (1982)

\bibitem{TTI}
K.~Tanaka, A.~Tanaka, T.~Idokagi,
{\em Long-range order in the ground state of the S=1 isotropic bilinear-biquadratic exchange Hamiltonian},
J. Phys. A 34, 8767--8780 (2001)

\bibitem{Toth}
B.~T\'oth,
{\em Improved lower bound on the thermodynamic pressure of the spin $1/2$ Heisenberg ferromagnet},
Lett. Math. Phys. 28, 75--84 (1993)

\bibitem{Tu}
H.-H.~Tu, G.-M.~Zhang, T. Xiang,
{\em Class of exactly solvable $SO(n)$ symmetric spin chains with matrix product ground states},
Phys. Rev. B, 78, 094404 (2008)

\bibitem{ueltschi04}
D.~Ueltschi,
{\em Cluster expansions and correlation functions},
Moscow Math J. vol 4 no 2, p. 511--522 (2004)

\bibitem{Uel}
D.~Ueltschi,
{\em Random loop representations for quantum spin systems},
J. Math. Phys. 54, 083301, 1--40 (2013)

\end{multicols}\end{thebibliography}

\end{document}